\theoremstyle{plain}
\newtheorem{Th}{Theorem}[section]
\newtheorem{Cor}[Th]{Corollary}
\newtheorem{Lem}[Th]{Lemma}
\newtheorem{Prop}[Th]{Proposition}
\newtheorem{Ex}{Example}[section]
\newtheorem{Rem}{Remark}[section]
\numberwithin{equation}{section}
\newcommand{\diff}[2]{\frac{\partial #1}{\partial #2}}
\newcommand{\qa}{\alpha}
\newcommand{\qb}{\beta}
\newcommand{\qd}{\delta}
\newcommand{\qg}{\gamma}
\newcommand{\qs}{\sigma}
\newcommand{\qp}{\partial}
\newcommand{\ql}{\lambda}
\newcommand{\Qd}{\Delta}
\newcommand{\vard}[2]{\frac{\delta #1}{\delta #2}}
\newcommand{\kk}[1]{\left(#1\right)}
\newcommand{\fk}[2]{\left[#1, #2\right]}
\newcommand{\vac}{|0\rangle}
\newcommand{\hes}[1]{b_{(#1)}}
\begin{document}

\title{Quantum dispersionless KdV hierarchy revisited}

\author{Zhe Wang}
\address{RIKEN Center for Interdisciplinary Theoretical and Mathematical Sciences (iTHEMS), RIKEN, Wako 351-0198, Japan}
\email{zhe.wang.aa@riken.jp}

\begin{abstract}
We quantize Hamiltonian structures with hydrodynamic leading terms using the Heisenberg vertex algebra. As an application, we construct the quantum dispersionless KdV hierarchy via a non-associative Weyl quantization and compute the corresponding eigenvalue problem. 
\end{abstract}

\date{\today}

\maketitle
\tableofcontents


\section{Introduction}

Since the proof of the Witten-Kontsevich theorem \cite{kontsevich1992intersection,witten1990two}, integrable hierarchies have provided important tools for studying various problems in mathematical physics. For example, in 2d topological field theories, the partition functions are controlled by integrable hierarchies of evolutionary Hamiltonian PDEs \cite{buryak2015double,dubrovin2001normal}. In symplectic field theories \cite{eliashberg2010introduction}, the partition functions satisfy a family of Schr\"odinger type equations. The quantum Hamiltonians involved in these Schr\"odinger equations
 mutually commute, and therefore can be viewed as quantum integrable hierarchies. While Hamiltonian integrable hierarchies as well as their relationships to 2d topological field theories have been studied in great detail during the last three decades, the corresponding quantum counterparts are still under investigation.

An important advancement toward a general theory of quantum integrable hierarchies was achieved by Buryak and Rossi \cite{buryak2016double}, where a quantization framework is established for arbitrary cohomological field theories using the geometry of double ramification cycles. In their constructions, quantum Hamiltonian structures are given by the canonical quantization of the classical Hamiltonian structures of hydrodynamic types, and quantum Hamiltonians are defined through the intersection theory which is specified by the given cohomological field theory. In particular, their work revealed very deep relationships between quantum integrable hierarchies and geometries of the moduli space of stable curves.

This paper approaches quantum integrable hierarchies in terms of vertex algebras by studying the quantization of the dispersionless KdV hierarchy. The quantization of such a particular integrable hierarchy was addressed both in the framework of symplectic field theory \cite{eliashberg2010introduction,rossi2008gromov} and of the double ramification cycle theory \cite{buryak2016double}, and its properties were studied in a series of works \cite{dubrovin2016symplectic,van2024quantum,ittersum2024quantum,ruzza2021spectral}. In the present paper, however, a quantization framework different from above-mentioned ones is established. In particular, we realize a quantum Hamiltonian structure of hydrodynamic type as the Lie algebra canonically attached to the Heisenberg vertex algebra, and apply it to study the quantization of the dispersionless KdV hierarchy. As a result, we construct an Abelian Lie subalgebra consisting of certain states in the Heisenberg vertex algebra, and classical limits of these states coincide with classical Hamiltonian densities of the dispersionless KdV hierarchy. These states provide a quantization of the classical hierarchy, and their properties are studied in detail. 

It should be emphasized that the quantum dispersionless KdV hierarchy constructed in this paper looks different from the one studied in the above-mentioned papers. Both the quantum Hamiltonian structure and the quantum Hamiltonians are different. However,   
despite these differences, the quantization framework based on vertex algebras shares great similarities with the framework based on geometries of moduli spaces. A comparison between these two constructions is given in Sect.\,\ref{AJ}.

Let us illustrate the basic concepts regarding quantization of Hamiltonian integrable hierarchies and present main results of this paper. We begin by introducing necessary notations. Define the ring of differential polynomial
\[
\mathcal A = \mathbb C\left[v,v^{(1)},v^{(2)},\dots\right],
\]
and denote by 
\[
\qp_x = \sum_{n\geq 0}v^{(n+1)}\diff{}{v^{(n)}}, \quad v^{(0)}:=v,
\] 
a derivation on $\mathcal A$. It follows that we can formally view $v$ as a function $v(x)$ with the independent variable $x$, and we have $v^{(n)} = \qp_x^n v(x)$. Furthermore, we introduce the quotient space $\mathcal F = \mathcal A/\qp_x\mathcal A$ whose elements are called local functionals. For an element $f\in\mathcal A$, denote by $F = \int f$ its class in $\mathcal F$, and we call $f$ a density of $F$. Using these notions, the (classical) dispersionless KdV hierarchy can be presented by 
\[
\diff{v}{t_n} = \frac{v^n}{n!}\diff{v}{x}=\diff{}{x}\left(\vard{H_{n+1}^c}{v}\right),\quad H_n^c=\int \frac{v^{n+1}}{(n+1)!},\quad n\geq 0,
\]
where the variational derivative is defined by 
\[
\vard{F}{v}:=\sum_{n\geq 0}(-\qp_x)^n\diff{f}{v^{(n)}},\quad F = \int f\in \mathcal F.
\]
Note that the variational derivative of a local functional is independent of the choice of its density. 

One of the most important properties of the dispersionless KdV hierarchy is that it is Hamiltonian. We define the classical Hamiltonian structure of the dispersionless KdV hierarchy to be the Lie algebra structure on $\mathcal F$ specified by 
\begin{equation}
    \label{AC}
\left\{F,G\right\} = \int \vard{F}{v}\qp_x\vard{G}{v}.
\end{equation}
For a given local functional $F$, this Lie bracket induces a derivation on $\mathcal A$ given by 
\[
\{-,F\}\colon\mathcal A\to\mathcal A,\quad 
\{g,F\} = \sum_{n\geq 0}\diff{g}{v^{(n)}}\qp_x^{n+1}\vard{F}{v},\quad g\in\mathcal A.
\]
It follows from the definition that for any $F,G\in\mathcal F$ and any density $g\in\mathcal A$ of $G$, we have 
\[
\int \{g,F\} = \{G,F\},
\]
hence we abuse the notation and use the same bracket to denote both the Lie bracket on $\mathcal F$ and its induced action on $\mathcal A$. Using these notations, the dispersionless KdV hierarchy can be represented by the following Hamiltonian form:
\[
\diff{v}{t_n} = \{v,H_n^c\}.
\]
We call $H_n^c$ the classical Hamiltonians of the dispersionless KdV hierarchy. They are in involution, namely they span an Abelian Lie subalgebra of $\mathcal F$:
\[
\{H_n^c,H_m^c\} = 0,\quad n,m\geq 0.
\]

The Hamiltonian property of the dispersionless KdV hierarchy enables us to formulate its deformation quantization as follows:
\begin{enumerate}[leftmargin=*]
    \item Construct a Lie algebra bracket $[-,-]$ on $\mathcal F[\hbar]$ which deforms the classical Hamiltonian structure in the sense that
    \[
    \lim_{\hbar\to 0}\frac{1}{\hbar}[F,G] = \{F,G\},\quad F,G\in\mathcal F.
    \]
    This Lie algebra structure is called the quantum Hamiltonian structure.
    \item Deform the classical Hamiltonians $H_n^c$ into quantum Hamiltonians $H_n\in\mathcal F[\hbar]$ such that 
    \[
    [H_n,H_m] = 0,\quad \lim_{\hbar\to 0}H_n = H_n^c,\quad n,m\geq 0.
    \]
    We call the quantum Hamiltonians $H_n$ as well as the quantum Hamiltonian structure a quantization of the dispersionless KdV hierarchy.
    \item Solve the corresponding Schr\"odinger equations, i.e., determine a representation of the Lie algebra $\mathcal F[\hbar]$ and study the eigenvalue problem of $H_n$. 
\end{enumerate}
In this paper, we study all these three procedures in a unified framework based on the theory of vertex algebras.
Let us denote by $V$ the state space of the Heisenberg vertex algebra generated by the state $b\in V$, whose precise definition and related properties will be given in Sect.\,\ref{RAB} (or see Chapter 2 of \cite{frenkel2004vertex}). Associated with $b$, there is a formal power series of operators forming a basis of the Heisenberg Lie algebra:
\[
Y(b,z)= \sum_{n\in\mathbb Z} b_{(n)}z^{-n-1},\quad b_{(n)}\in \mathrm{End}(V),\quad \left[b_{(n)},b_{(m)}\right] = \hbar\,n\,\qd_{m+n,0},
\]
here the Lie bracket is the commutator on $\mathrm{End}(V)$.
Denote by $\vac\in V$ the vacuum state, then the state space $V$ is a free $\mathbb C[\hbar]$-module spanned by states of the form
\[
\hes{-k_1-1}\dots \hes{-k_n-1}\vac,\quad k_i\geq 0,\quad n\geq 0.
\]
For each state $a\in V$, it follows from the definition of a vertex algebra that we have a corresponding field $Y(a,z)$ given by 
\[
Y(a,z) = \sum_{n\in\mathbb Z}a_{(n)}z^{-n-1},\quad a_{(n)}\in \mathrm{End}(V),
\]
where $a_{(n)}$ is called the $n$-th mode of $a$. In particular, the $n$-th mode of the state $b$ is exactly $b_{(n)}$.
The general formula for $a_{(n)}$ in terms of modes of $b$ is given in Prop.\,\ref{AP}. We view the unknown function $v(x)$ of the dispersionless KdV hierarchy as a classical field, and we treat $Y(b,z)$ as the corresponding quantum field. It follows that there is a natural $\hbar$-linear isomorphism of vector spaces
\[
\varphi:\mathcal A[\hbar]\to V,\quad  v^{(k_1)}\dots  v^{(k_n)}\mapsto  k_1!\dots k_n!\,\hes{-k_1-1}\dots \hes{-k_n-1}\vac.
\]
Let $T\in\mathrm{End}(V)$ be the infinitesimal translation operator of the Heisenberg vertex algebra, which is characterized by 
\[
T\vac =0,\quad \left[T,b_{(n)}\right] = -nb_{(n-1)},\quad n\in\mathbb Z.
\]
Hence, it follows that 
\[\varphi\comp \qp_x = T\comp \varphi,\]
where the action of $\qp_x$ is extended to $\mathcal A[\hbar]$ by $\qp_x\hbar = 0$. As a result, the isomorphism $\varphi$ induces a natural isomorphism between the quotient spaces $\mathcal F[\hbar]$ and $V/TV$ which we still denote by $\varphi$. 

Using the above notations, we present here the main results of the paper. First, we construct a quantum Hamiltonian structure based on the vertex algebra structure. 
\begin{Th}
    \label{AF}
    The Lie bracket
    \begin{equation}
        \label{AE}
    [F,G] = \varphi^{-1}\left(\varphi(G)_{(0)}\varphi(F)\right),\quad F,G\in \mathcal F[\hbar],
    \end{equation}
    defines a deformation quantization of the Hamiltonian structure \eqref{AC}, here $\varphi(G)_{(0)}$ is the zeroth mode of the state $\varphi(G)$.  In terms of differential polynomials, this bracket is of the form
    \begin{align*}
    [F,G] = \sum_{m\geq 1}\frac{\hbar^m}{m!}\sum_{\substack{r_1,\dots,r_m\geq 0\\ s_1,\dots,s_m\geq 0}}&\frac{\qp^m f}{\qp v^{(s_1)}\dots \qp v^{(s_m)}}\frac{(-1)^{\sum r_i}\prod (r_i+s_i+1)!}{(2m-1+\sum r_i+\sum s_i)!}\\*
    &\cdot\qp_x^{2m-1+\sum r_i+\sum s_i}\frac{\qp^m g}{\qp v^{(r_1)}\dots\qp v^{(r_m)}},\quad F,G\in\mathcal F,
    \end{align*}
here $f, g\in \mathcal A$ are arbitrary densities of $F$ and $G$, respectively.
\end{Th}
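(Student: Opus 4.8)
The statement has a soft part --- that \eqref{AE} is a $\mathbb C[\hbar]$-bilinear Lie bracket quantizing \eqref{AC} --- and a hard computational part, the differential-polynomial formula.

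\emph{Soft part.} For any vertex algebra, $V/TV$ carries the canonical Lie bracket $[\bar a,\bar b]=\overline{a_{(0)}b}$: one has $(Ta)_{(0)}=0$ and $[T,a_{(0)}]=(Ta)_{(0)}=0$, so $a_{(0)}$ descends to $V/TV$ and depends only on $\bar a$; skew-symmetry $Y(a,z)b=e^{zT}Y(b,-z)a$ gives $a_{(0)}b\equiv-b_{(0)}a\pmod{TV}$; and the Borcherds commutator formula at $m=n=0$, $[a_{(0)},b_{(0)}]=(a_{(0)}b)_{(0)}$, yields the Jacobi identity on $V/TV$. Transporting this bracket along the $\hbar$-linear isomorphism $\varphi\colon\mathcal F[\hbar]\xrightarrow{\ \sim\ }V/TV$, with the two arguments interchanged as in \eqref{AE}, makes \eqref{AE} a $\mathbb C[\hbar]$-bilinear Lie bracket. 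Since $[\hes{m},\hes{n}]\in\hbar\,\mathbb C$ every Wick contraction carries a factor $\hbar$ while the contraction-free part contributes no residue, so $a_{(0)}b\in\hbar V$ for all $a,b$ and hence $[F,G]\in\hbar\,\mathcal F[\hbar]$; it then remains only to check that the coefficient of $\hbar$ in $[F,G]$ equals $\int\vard{f}{v}\,\qp_x\vard{g}{v}$, which is the $m=1$ term of the formula below, after one integration by parts.

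\emph{Computational part.} By bilinearity we may take $f,g$ to be monomials $v^{(s_1)}\cdots v^{(s_q)}$ and $v^{(r_1)}\cdots v^{(r_p)}$. With $b:=\hes{-1}\vac=\varphi(v)$ one has $\varphi(v^{(k)})=k!\,\hes{-k-1}\vac=T^kb$, so $Y(b,w)=b(w)$, $Y(Ta,w)=\qp_wY(a,w)$ and $Y(a_{(-1)}c,w)={:}Y(a,w)Y(c,w){:}$ give
\[
\varphi(f)=\Bigl(\textstyle\prod_i s_i!\Bigr)\hes{-s_1-1}\cdots\hes{-s_q-1}\vac,\qquad Y(\varphi(g),w)={:}\,\qp_w^{r_1}b(w)\cdots\qp_w^{r_p}b(w)\,{:}.
\]
Now compute $\varphi(G)_{(0)}\varphi(F)=\operatorname{Res}_w Y(\varphi(g),w)\varphi(f)$ by Wick's theorem. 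Both factors being normally ordered, the only contractions are between the $p$ legs of $g$ and the $q$ legs of $f$, the elementary one being
\[
\qp_w^{r}b(w)\ \text{against}\ \hes{-s-1}\ \longmapsto\ \hbar\,(-1)^{r}\,\frac{(r+s+1)!}{s!}\,w^{-r-s-2}.
\]
A term with $m$ contractions pairs legs $r_1,\dots,r_m$ of $g$ with legs $s_1,\dots,s_m$ of $f$, the uncontracted legs forming reduced monomials $g'$ and $f'$; the factors $1/s_i!$ in the contractions cancel the prefactor $\prod_i s_i!$ of $\varphi(f)$, after which the uncontracted remainder applied to the surviving creators equals $\varphi\bigl((e^{w\qp_x}g')\,f'\bigr)$, entire in $w$, while the $m$ propagators combine to $\hbar^m(-1)^{\sum r_i}\bigl(\prod_i(r_i+s_i+1)!\bigr)w^{-N}$ with $N=\sum_{i=1}^m(r_i+s_i+2)$. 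Hence $\operatorname{Res}_w$ extracts $\hbar^m\tfrac{(-1)^{\sum r_i}\prod_i(r_i+s_i+1)!}{(N-1)!}\,\varphi\bigl((\qp_x^{N-1}g')\,f'\bigr)$ with $N-1=2m-1+\sum r_i+\sum s_i$. Finally, a matching of $m$ legs on each side is a pair of injections $[m]\hookrightarrow[p]$, $[m]\hookrightarrow[q]$ modulo $S_m$, so the sum over matchings is $\tfrac1{m!}$ times the sum over such pairs; applying $\varphi^{-1}$ and recognizing the latter as $\sum_{s_i,r_i\ge0}\tfrac{\qp^m f}{\qp v^{(s_1)}\cdots\qp v^{(s_m)}}\,(\cdots)\,\tfrac{\qp^m g}{\qp v^{(r_1)}\cdots\qp v^{(r_m)}}$ produces the asserted formula. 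Independence of the chosen lifts is automatic since the bracket is defined on $\mathcal F[\hbar]$.

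The main obstacle is this last bookkeeping: one must verify that no contraction is double counted (internal contractions among $g'$ and $f'$ resurface as the higher-$m$ terms, not as corrections within a fixed $m$), that the normalizing factorials $k!$ in the definition of $\varphi$ conspire with the contraction weights $(r+s+1)!/s!$ and with the variational-derivative normalization so that the reduced states reassemble into exactly the iterated partial derivatives, and that $\operatorname{Res}_w$ against $w^{-N}$ truly deposits a single power $\qp_x^{\,N-1}$ on the $g$-factor --- the structural reason the formula is asymmetric in $f$ and $g$. Granting this, the $m=1$ term together with the integration by parts from the soft part recovers \eqref{AC}, completing the proof.
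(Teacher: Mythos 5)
Your argument is correct and arrives at the right formula, but it is organized rather differently from the paper's. The paper proves the more general Theorem \ref{BA} (all modes $\varphi(f)_{(n)}$, $n\geq 0$, arbitrary rank $N$): it first writes the $\hbar^m$-component $\varphi(f)_{(n)}^{[m]}$ as $\sum A^{k_1,\dots,k_m}b_{(k_1)}\cdots b_{(k_m)}$ with symmetrized creation-operator coefficients, determines the coefficients $A$ by evaluating on the test monomials $v^{(s_1)}\cdots v^{(s_m)}$, and computes that evaluation by flipping the roles of $f$ and the test monomial via the skew-symmetry $A_{(n)}B=\sum_k\frac{(-1)^{k+n+1}}{k!}T^k(B_{(k+n)}A)$ together with the explicit mode expansion of Proposition \ref{AP}. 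You replace this test-monomial/skew-symmetry detour with a single direct Wick expansion of $Y(\varphi(g),w)\varphi(f)$ followed by $\operatorname{Res}_w$; your elementary propagator $\hbar(-1)^r(r+s+1)!/s!\,w^{-r-s-2}$, the cancellation against the normalization $\prod s_i!$ in $\varphi$, the $1/m!$ from unordered matchings, and the identification of the uncontracted remainder with $\varphi\bigl((e^{w\qp_x}g')f'\bigr)$ (the paper's Lemma \ref{AQ} and Corollary \ref{AR}) are all correct; the soft part coincides with the paper's appeal to the Lie algebra structure on $V/TV$. What the paper's organization buys is that the identical computation simultaneously yields Proposition \ref{AG} for every mode $n$, which is needed later for the eigenvalue problem, whereas your residue argument as written extracts only the $0$-th mode (though it generalizes by taking $\operatorname{Res}_w w^{n}(\cdots)$ instead). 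Two cosmetic points: recovering \eqref{AC} from the $m=1$ term takes $s$ integrations by parts, not one; and the absence of double counting rests on the fact that annihilator parts of uncontracted legs die on the vacuum, which is worth saying once explicitly rather than only in your closing caveat.
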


Next, we construct the quantum Hamiltonians $H_n$ for the quantum  dispersionless KdV hierarchy. Let us denote by $h_n\in\mathcal A[\hbar]$ the differential polynomials determined by the recursion relation
\[
h_0 = \varphi^{-1}\kk{\hes{-1}\vac},\quad h_n = \frac{1}{n}\sum_{k=0}^{n-1}\frac{1}{\binom{n+1}{k+1}}\varphi^{-1}\left(\varphi(h_k)_{(-1)}\varphi(h_{n-k-1})\right),\quad n\geq 1.
\]
We call this a non-associative Weyl quantization. Very roughly speaking, $h_n$ is obtained by averaging the non-associativity of the Heisenberg vertex algebra. See Sect.\,\ref{AD} for details and a combinatorial way for constructing $h_n$. Though these quantum Hamiltonians are defined via the Heisenberg vertex algebra, it turns out that they can be computed effectively.
\begin{Prop}
    \label{AW}
    We have $h_0 = v$ and for $n\geq 1$
    \begin{align}
        \label{BU}
    h_n =\frac{1}{n(n+1)}&\left(\sum_{s,t\geq 0}\frac{(s+t+1)!}{s!t!}v^{(s)}v^{(t)}\diff{}{v^{(s+t)}}\right.\\
    \notag
    &\left.+\hbar \sum_{s,t\geq 0} \frac{(s+1)!(t+1)!}{(s+t+2)!}v^{(s+t+2)}\frac{\qp^2}{\qp v^{(s)}\qp v^{(t)}}\right)h_{n-1}. 
    \end{align}
\end{Prop}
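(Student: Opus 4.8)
The plan is to strip the vertex-algebra packaging off the recursion and reduce Proposition~\ref{AW} to a single statement about the Heisenberg normally ordered product: that the differential operator appearing in \eqref{BU} is a derivation of that product.

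First, $h_0=v$ is immediate, since $\varphi(v)=\varphi(v^{(0)})=0!\,\hes{-1}\vac=b$ gives $h_0=\varphi^{-1}(b)=v$. For $n\geq1$ introduce the $\mathbb C[\hbar]$-bilinear \emph{Weyl product} $A\star B:=\varphi^{-1}\bigl(:\varphi(A)\varphi(B):\bigr)$ on $\mathcal A[\hbar]$ and the rescaled densities $\tilde h_n:=(n+1)!\,h_n$. Since $\binom{n+1}{k+1}^{-1}=\frac{(k+1)!(n-k)!}{(n+1)!}$, while $h_k=\tilde h_k/(k+1)!$ and $h_{n-k-1}=\tilde h_{n-k-1}/(n-k)!$, the defining recursion collapses to
\[
\tilde h_0=v,\qquad \tilde h_n=\frac1n\sum_{k=0}^{n-1}\tilde h_k\star\tilde h_{n-1-k}\quad(n\geq1).
\]
Writing $\mathcal D=\mathcal D_0+\hbar\,\mathcal D_1$ for the operator in the parenthesis of \eqref{BU}, with $\mathcal D_0$ the first sum (a first-order operator, hence a derivation of ordinary multiplication) and $\mathcal D_1$ the second (second order), and multiplying \eqref{BU} by $(n+1)!$, one sees that \eqref{BU} is equivalent to $n\tilde h_n=\mathcal D\tilde h_{n-1}$, i.e.\ to $\tilde h_n=\tfrac1{n!}\mathcal D^n v$. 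So it suffices to prove that the $\star$-recursion produces this sequence. Two facts will be used: (i) $\varphi(v^{(k)})=k!\,\hes{-k-1}\vac$ has field $\partial_z^k b(z)$, whose $(-1)$-mode is exactly $k!\,\hes{-k-1}$; acting by it merely prepends a creator, so $v^{(k)}\star X=v^{(k)}X$ for every $X\in\mathcal A[\hbar]$, and in particular $v\star v=v^2$; (ii) directly from the formula, $\mathcal D v=\mathcal D_0 v=v^2$ and $\mathcal D_1 v=0$.

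The essential point is the following \emph{Key Lemma}: the operator $\mathcal D$ is a derivation of $\star$, that is, $\mathcal D(F\star G)=(\mathcal D F)\star G+F\star(\mathcal D G)$ for all $F,G\in\mathcal A[\hbar]$. Granting it, Proposition~\ref{AW} follows by induction on $n$. The case $n=0$ is $\tilde h_0=v$; assuming $\tilde h_j=\mathcal D^j v/j!$ for all $j<n$, the $(n-1)$-fold iterated Leibniz rule gives
\[
n\tilde h_n=\sum_{k=0}^{n-1}\tilde h_k\star\tilde h_{n-1-k}=\frac1{(n-1)!}\sum_{k=0}^{n-1}\binom{n-1}{k}(\mathcal D^k v)\star(\mathcal D^{n-1-k}v)=\frac1{(n-1)!}\,\mathcal D^{n-1}(v\star v),
\]
and since $v\star v=v^2=\mathcal D v$ this equals $\tfrac1{(n-1)!}\mathcal D^n v$; hence $\tilde h_n=\mathcal D^n v/n!=\tfrac1n\mathcal D\tilde h_{n-1}$, which is \eqref{BU}. (Equivalently: for the generating series $\mathcal H(z)=\sum_{n\geq0}\tilde h_n z^n$ the recursion is $\mathcal H'=\mathcal H\star\mathcal H$ with $\mathcal H(0)=v$, and the Key Lemma makes $E:=\mathcal H\star\mathcal H-\mathcal D\mathcal H$ a solution of the linear equation $E'=E\star\mathcal H+\mathcal H\star E$ with $E(0)=v\star v-\mathcal D v=0$, forcing $E=0$.)

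It remains to prove the Key Lemma, the one computation of substance. I would argue from the explicit Wick-type expansion of the Heisenberg normally ordered product — the same mechanism that produces the bracket formula of Theorem~\ref{AF} — which for monomials $A,B\in\mathcal A$ reads $A\star B=AB+\hbar\sum_{s,t\geq0}\frac{(-1)^s}{s+t+2}\,\partial_x^{\,s+t+2}\bigl(\partial_{v^{(s)}}A\cdot\partial_{v^{(t)}}B\bigr)+O(\hbar^2)$, the $\hbar^m$-term being an explicit sum over $m$-fold ``contractions'' that pair variables of $A$ with variables of $B$ (parallel in shape to the $m$-th term of the bracket in Theorem~\ref{AF}). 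One then applies $\mathcal D=\mathcal D_0+\hbar\mathcal D_1$ to $A\star B$ and sorts the outcome by where the operation lands. The terms in which $\mathcal D_0$ splits a variable in the uncontracted part of $A$ (resp.\ of $B$), or $\mathcal D_1$ merges two uncontracted variables of $A$ (resp.\ of $B$), reassemble — using that $\mathcal D_0$ is a derivation of multiplication — precisely into $(\mathcal D A)\star B$ (resp.\ $A\star(\mathcal D B)$). What is left over are the ``straddling'' terms: $\mathcal D_0$ splitting a contracted variable, and $\mathcal D_1$ merging one variable of $A$ with one of $B$ (which creates a fresh contraction). These have no counterpart on the right-hand side, so the Key Lemma is exactly the assertion that they cancel; at order $\hbar^1$, writing $[\,\cdot\,]_{\hbar^1}$ for the coefficient of $\hbar$, this cancellation is the identity
\[
\mathcal D_0[A\star B]_{\hbar^1}-[(\mathcal D_0 A)\star B]_{\hbar^1}-[A\star(\mathcal D_0 B)]_{\hbar^1}=(\mathcal D_1 A)B+A(\mathcal D_1 B)-\mathcal D_1(AB),
\]
which unwinds to elementary relations among the splitting weight $\frac{(s+t+1)!}{s!t!}$ of $\mathcal D_0$, the merging weight $\frac{(s+1)!(t+1)!}{(s+t+2)!}$ of $\mathcal D_1$, the contraction coefficient $\frac{(-1)^s}{s+t+2}$, and the Leibniz expansion of the $\partial_x$'s; the higher orders in $\hbar$ are governed by the same identity applied repeatedly. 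Carrying this out is the main obstacle; everything else is bookkeeping.
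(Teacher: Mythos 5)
Your reduction is clean and correct as far as it goes: $\tilde h_n=(n+1)!\,h_n$ does turn \eqref{BD} into $n\tilde h_n=\sum_k\tilde h_k\star\tilde h_{n-1-k}$, and \eqref{BU} is indeed equivalent to $\tilde h_n=\mathcal D^n v/n!$, so everything hinges on your Key Lemma. But the Key Lemma is \emph{false} as stated: $\mathcal D$ is not a derivation of the normal-order product on all of $\mathcal A[\hbar]$. One can see this structurally: $\mathcal D$ is (up to a factor $2/\hbar$) conjugation by $\varphi$ of the mode $\varphi(h_2)_{(1)}$, and the commutator formula \eqref{BS} gives
\[
\mathcal D(F\star G)-(\mathcal D F)\star G-F\star(\mathcal D G)
=\tfrac{2}{\hbar}\,\varphi^{-1}\Bigl(\bigl(\varphi(h_2)_{(0)}\varphi(F)\bigr)_{(0)}\varphi(G)\Bigr),
\]
which vanishes only when $\varphi(h_2)_{(0)}\varphi(F)$ lies in $TV+\mathbb C[\hbar]b+\mathbb C[\hbar]\vac$, i.e.\ essentially when $\int F$ commutes with $H_2$ under the quantum bracket. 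For $F=v_x^2$, $G=v$ a direct computation with Theorem \ref{BA} gives $\varphi(h_2)_{(0)}\varphi(v_x^2)=\varphi\bigl(2\hbar v_x^3+2\hbar vv_xv_{xx}+\tfrac{\hbar^2}{30}v^{(5)}\bigr)$, whose variational derivative is $-6\hbar v_xv_{xx}\neq0$, so the defect above equals $-12\hbar\,(v_{xx}^2+v_xv_{xxx})\neq0$. This is consistent with your observation that the obstruction vanishes at orders $\hbar^0$ and $\hbar^1$ (it is $O(\hbar^2)$), but your claim that ``the higher orders in $\hbar$ are governed by the same identity applied repeatedly'' is exactly where the argument breaks: no Wick-calculus bookkeeping can establish a universally valid Leibniz rule, because none holds.

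The induction only needs the Leibniz rule at the pairs $(\tilde h_i,\tilde h_j)$, and there it \emph{does} hold — but by the displayed formula this is equivalent to $\bigl(\varphi(h_2)_{(0)}\varphi(h_i)\bigr)_{(0)}=0$, i.e.\ to the commutativity of $H_2$ with the $H_i$ modulo Casimirs, which is (a piece of) Theorem \ref{AX} and is the hard content of the paper. So your proposal is circular at its core unless that commutativity is supplied by other means. This is in fact how the paper proceeds, in the opposite logical order: it first defines $\tilde h_n$ by the explicit Schur-polynomial formula \eqref{BJ}, proves the recursion \eqref{BU} for these $\tilde h_n$ directly from the action \eqref{BM} of the operators $P^{(k)}_m$ on Hall--Littlewood polynomials, proves commutativity by the same machinery, and only then uses the restricted Leibniz identity $[\varphi(\tilde h_2)_{(1)},\varphi(\tilde h_k)_{(-1)}]=(\varphi(\tilde h_2)_{(1)}\varphi(\tilde h_k))_{(-1)}$ — your Key Lemma at the admissible pairs — to identify $\tilde h_n$ with the Weyl-quantized $h_n$ of \eqref{BD}. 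To repair your argument you would have to either import the paper's $W$-operator computations or find an independent proof that the obstruction vanishes along the sequence $\tilde h_k=\mathcal D^kv/k!$; the latter is not mere bookkeeping.
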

\noindent
It follows that differential polynomials $h_n$ indeed define quantum Hamiltonians, and hence we construct a quantum dispersionless KdV hierarchy.
\begin{Th}
    \label{AX}
    Denote by $H_n=\int h_n$, then 
    \[
    \lim_{\hbar\to 0}H_n = H_n^c,\quad
    \fk{H_n}{H_m} = 0,\quad n,m\geq 0,
    \] 
    here the bracket is the quantum Hamiltonian structure defined in \eqref{AE}. 
\end{Th}

Finally, we consider the eigenvalue problem. As discussed before, we need to fix a representation of the quantum Hamiltonian structure. One of the motivations for constructing quantum dispersionless KdV hierarchy using a vertex algebra is the observation that any vertex algebra has a natural representation on its state space. Indeed, for any state $a\in V$, their modes $a_{(n)}$ are already endomorphisms on $V$. Since the state space is identified with the space $\mathcal A[\hbar]$ of differential polynomials, it follows that there is a family of actions on $\mathcal A[\hbar]$ for any given $f\in\mathcal A[\hbar]$, as described by the following proposition.
\begin{Prop}
    \label{AG}
Let $f\in\mathcal A[\hbar]$ be a differential polynomial. Then for $n\geq 0$, the $n$-th mode of the state $\varphi(f)$ induces an endomorphism on $\mathcal A[\hbar]$ which reads 
\begin{align*}
    \varphi^{-1}\comp\varphi(f)_{(n)}\comp \varphi = \sum_{m\geq 1}&\frac{\hbar^m}{m!}\sum_{\substack{r_1,\dots,r_m\geq 0\\ s_1,\dots,s_m\geq 0\\ 2m-1+\sum r_i+s_i\geq n}}\frac{(-1)^{\sum r_i}\prod (r_i+s_i+1)!}{(2m-1-n+\sum r_i+\sum s_i)!}\\
    &\cdot \qp_x^{2m-1-n+\sum r_i+\sum s_i}\kk{\frac{\qp^m f}{\qp v^{(r_1)}\dots\qp v^{(r_m)}}} \frac{\qp^m }{\qp v^{(s_1)}\dots\qp v^{(s_m)}}.
\end{align*}
\end{Prop}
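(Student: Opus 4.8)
The plan is to compute the mode $\varphi(f)_{(n)}$ concretely on the Fock space $V$ and then transport the answer back through $\varphi$. By $\hbar$-linearity it suffices to treat $f=v^{(k_1)}\cdots v^{(k_p)}$ a monomial, and since the states $\varphi(g)$ with $g=v^{(l_1)}\cdots v^{(l_q)}$ monomial span $V$, it is enough to apply both operators to such a $\varphi(g)$. By the reconstruction theorem for the Heisenberg vertex algebra, which identifies $Y(\hes{-k-1}\vac,z)$ with $\tfrac1{k!}\qp_z^{k}b(z)$ and the field attached to a product $\hes{-k_1-1}\cdots\hes{-k_p-1}\vac$ with the normally ordered product of the corresponding fields, the definition of $\varphi$ gives $Y(\varphi(f),z)=\,:\qp_z^{k_1}b(z)\cdots\qp_z^{k_p}b(z):$, so that $\varphi(f)_{(n)}\varphi(g)=\res z^{n}\big(:\qp_z^{k_1}b(z)\cdots\qp_z^{k_p}b(z):\varphi(g)\big)$.

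The next step is Wick's theorem. Splitting each factor as $\qp_z^{k}b(z)=\qp_z^{k}b(z)_{+}+\qp_z^{k}b(z)_{-}$ with $b(z)_{+}=\sum_{j\ge0}\hes{-j-1}z^{j}$ and $b(z)_{-}=\sum_{j\ge0}\hes{j}z^{-j-1}$, and expanding the normally ordered product (all creation parts to the left), we act on $\varphi(g)$, which is the product of the creation operators $l_j!\,\hes{-l_j-1}$ applied to $\vac$. Commuting the annihilation parts of $\varphi(f)$'s field to the right and using $\hes{j}\vac=0$ for $j\ge0$, $\hes{0}$ central, and $\fk{\hes{j}}{\hes{-l-1}}=\hbar\,j\,\qd_{j,l+1}$, only fully contracted terms survive, and no self-contractions occur because the product is normally ordered. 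Such a term is indexed by a subset $S$ of the factors of $f$, an equinumerous subset of the factors of $g$, and a bijection $\sigma$ between them. Using $\qp_z^{k}(z^{-p-1})=(-1)^{k}\frac{(p+k)!}{p!}z^{-p-k-1}$, the contraction of $\qp_z^{k_i}b(z)_{-}$ ($i\in S$) with a factor $l_j!\,\hes{-l_j-1}$ of $\varphi(g)$ contributes the scalar $(-1)^{k_i}(k_i+l_j+1)!\,\hbar\,z^{-k_i-l_j-2}$, while the uncontracted creation factors of $f$ acting on the surviving part $\varphi(g')$ of $\varphi(g)$ produce, after $\varphi^{-1}$, the expression $\prod_{i\notin S}\big(\sum_{q_i\ge0}\frac{z^{q_i}}{q_i!}v^{(q_i+k_i)}\big)\cdot g'$, since $\varphi^{-1}\circ\hes{-p-1}\circ\varphi$ is multiplication by $v^{(p)}/p!$.

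It remains to take $\res z^{n}$. Writing $m=|S|$ and $N:=2m-1-n+\sum_{i\in S}k_i+\sum_{j}l_j$ (the last sum over the contracted factors of $g$), the coefficient of $z^{-1}$ forces $\sum_i q_i=N$, which is possible only when $N\ge0$ --- exactly the constraint in the statement --- and then equals $\sum_{\sum q_i=N}\prod_{i\notin S}\frac{v^{(q_i+k_i)}}{q_i!}=\frac1{N!}\qp_x^{N}\big(\prod_{i\notin S}v^{(k_i)}\big)$ by the multinomial expansion of $\qp_x^{N}$. So each contraction pattern contributes $\frac{\hbar^{m}(-1)^{\sum_{i\in S}k_i}\prod_{i\in S}(k_i+l_{\sigma(i)}+1)!}{N!}\,\qp_x^{N}\big(\prod_{i\notin S}v^{(k_i)}\big)\cdot\prod_{j}v^{(l_j)}$, the last product over the uncontracted factors of $g$.

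Finally I would sum over all contraction patterns and match the result with the claimed operator applied to $g$. Summing over ordered tuples $(r_1,\dots,r_m)$ and $(s_1,\dots,s_m)$, the iterated derivatives $\frac{\qp^{m}f}{\qp v^{(r_1)}\cdots\qp v^{(r_m)}}$ and $\frac{\qp^{m}g}{\qp v^{(s_1)}\cdots\qp v^{(s_m)}}$ reproduce exactly the sum over ordered choices of the contracted factors of $f$ and of $g$, with the correct multiplicities even when factors repeat, the prefactor $\frac1{m!}$ compensating the $m!$ orderings of a given set of $m$ contractions; the power $\qp_x^{N}$ acts on the surviving factors of $f$ as above and the surviving factors of $g$ are simply multiplied. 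Collecting the numerical weights then yields the displayed formula. I expect the main obstacle to be precisely this last combinatorial bookkeeping --- checking term by term that the multiplicities from the partial derivatives together with $1/m!$ match the Wick count of contraction patterns; by contrast the propagator computation and the residue condition (which pins down the exponent $2m-1-n+\sum r_i+\sum s_i$ appearing both in $\qp_x$ and in the factorial denominator) are routine.
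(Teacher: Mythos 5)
Your argument is correct and does yield the stated formula: the propagator $(-1)^{r}(r+s+1)!\,\hbar\,z^{-r-s-2}$, the residue constraint $N=2m-1-n+\sum r_i+\sum s_i\ge 0$ (which in particular kills the $m=0$ term for $n\ge 0$), the Leibniz identity $\sum_{\sum q_i=N}\prod_i v^{(q_i+k_i)}/q_i!=\frac{1}{N!}\partial_x^N\prod_i v^{(k_i)}$, and the ordered-tuple versus $1/m!$ bookkeeping all check out against the displayed coefficient. However, your route is genuinely different from the paper's proof of Theorem \ref{BA}, of which this proposition is the rank-one special case. The paper never contracts $Y(\varphi(f),z)$ directly against a general state. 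Instead it writes the $\hbar^m$-component of $\varphi(f)_{(n)}$ abstractly as $\sum A^{k_1,\dots,k_m}b_{(k_1)}\cdots b_{(k_m)}$ with creator-valued coefficients $A$ taken symmetric in the indices, observes that such an operator is pinned down by its values on the test monomials $v^{(s_1)}\cdots v^{(s_m)}$, and computes those values via the skew-symmetry $A_{(n)}B=\sum_{k\ge0}\frac{(-1)^{k+n+1}}{k!}T^k\kk{B_{(k+n)}A}$, so that only the field of a monomial state (Proposition \ref{AP}) ever needs to be expanded. Your direct Wick computation is more elementary --- it requires neither skew-symmetry nor the symmetrization convention on the $A$'s --- at the price of carrying the full contraction combinatorics explicitly; the ``main obstacle'' you flag (matching Wick multiplicities of contraction patterns against iterated partial derivatives and the prefactor $1/m!$) is exactly the bookkeeping the paper's test-monomial step is designed to bypass, and your sketch of how it resolves is sound.
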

\noindent
It follows that we can associate a family of operators for each $h_n$. To consider the eigenvalue problem it is essential to choose operators 
\[
\hat H_n =  \varphi^{-1}\comp\varphi(h_n)_{(n)}\comp \varphi\in\mathrm{End}(\mathcal A)[\hbar]
\] 
corresponding to the $n$-th mode of the state $\varphi(h_n)$ for $n\geq 0$, and the reason for this specific choice is explained in Sect.\,\ref{RAC}. We show that operators $\hat H_n$ mutually commute with each other, and their common eigenvectors are given by Schur polynomials, which are defined by
\[
S_\ql = \det{\kk{{\bf{e}}_{\ql_i-i+j}}_{1\leq i,j\leq \ell(\ql)}}\in\mathcal A[\sqrt{\hbar}],
\]
here $\ql = (\ql_1,\dots,\ql_{\ell(\ql)})$ with $\ql_1\geq\dots\geq\ql_{\ell(\ql)}>0$ is an integer partition of length $\ell(\ql)$, and ${\bf{e}}_n\in\mathcal A[\sqrt{\hbar}]$ is determined by the generating series
\[
\sum_{n\in\mathbb Z}{\bf{e}}_n z^n = \exp\kk{\sum_{k\geq 0} \frac{\hbar^{k/2} v^{(k)}}{(k+1)!}z^{k+1}}.
\]
\begin{Th}
    \label{AY}
    Schur polynomials $S_\ql$, where $\ql$ runs over all integer partitions, serve as a complete set of common eigenvectors of operators $\hat H_n$. More precisely, we have
    \[
    \hat H_n S_\ql = \frac{\hbar^{\frac{n+1}{2}}}{n+1}\sum_{k=0}^n\binom{n+1}{k}\kk{\binom{-\ell(\ql)}{k+1}+\sum_{i=1}^{\ell(\ql)}\binom{\ql_i-i}{k}}S_\ql,\quad \ql = (\ql_1,\dots,\ql_{\ell(\ql)}).
    \]
\end{Th}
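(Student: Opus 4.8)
The plan is to move the whole statement into the Heisenberg vertex algebra $V$ along $\varphi$, where the eigenvalue problem collapses to understanding one weight-$0$ operator on a Fock space, and then to identify that operator with an element of the Cartan subalgebra of $\widehat{\mathfrak{gl}}_\infty$ via the boson--fermion correspondence; the combinatorics of the stated eigenvalue will then be a short exercise with binomial identities.

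First I would carry out the $\hbar$-bookkeeping and the reduction. Because $\varphi$ is an isomorphism of commutative rings $\mathcal A[\sqrt\hbar]\xrightarrow{\sim}V[\sqrt\hbar]$ sending $v^{(k)}\mapsto k!\,b_{(-k-1)}\vac$, expanding the definition of $\mathbf e_n$ shows that $\varphi(\mathbf e_n)=\hbar^{n/2}$ times the state of the complete homogeneous symmetric function $s_{(n)}$ in the rescaled generators $b_{(-k)}/\sqrt\hbar$, and the Jacobi--Trudi identity (an identity inside $\Lambda\subset V$) then gives $\varphi(S_\ql)=\hbar^{|\ql|/2}\,|\ql\rangle$, where $|\ql\rangle$ is the image of the Schur function $s_\ql$ in the standard bosonic Fock space; in particular $\varphi(S_\ql)$ is $\hbar$-homogeneous, and since Jacobi--Trudi is unitriangular $\{S_\ql\}$ is a $\mathbb C[\sqrt\hbar]$-basis of $\mathcal A[\sqrt\hbar]$, which supplies the ``complete set'' assertion. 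On the other side, giving $\hbar$ weight $1$ and every $v^{(k)}$ weight $\tfrac12$ makes the recursion \eqref{BU} weight-homogeneous, so $h_n$ has weight $\tfrac{n+1}2$ and $\varphi(h_n)=\hbar^{(n+1)/2}\tilde h_n$ with $\tilde h_n\in V$ of conformal weight $n+1$ and free of $\hbar$. Finally, transporting the operator $\tfrac1{n(n+1)}(\cdots)$ of \eqref{BU} through $\varphi$ (and rescaling so that $[b_{(m)},b_{(n)}]=m\,\qd_{m+n,0}$) turns its two summands into the ``two creators, one annihilator'' and ``one creator, two annihilators'' parts of $\tfrac13(:b^{3}:)_{(1)}$, hence
\[
\tilde h_n=\frac{1}{n!\,(n+1)!}\,W^{n}\,b_{(-1)}\vac,\qquad W:=\tfrac13\bigl(:b^{3}:\bigr)_{(1)} .
\]
Matching the $\hbar$-powers, Theorem~\ref{AY} becomes the $\hbar$-free statement that $\mathcal G_n:=(\tilde h_n)_{(n)}=\tfrac1{n!(n+1)!}\bigl(W^{n}b_{(-1)}\vac\bigr)_{(n)}$ acts on $|\ql\rangle$ by the scalar $\tfrac1{n+1}\sum_{k=0}^{n}\binom{n+1}{k}\bigl(\binom{-\ell(\ql)}{k+1}+\sum_{i=1}^{\ell(\ql)}\binom{\ql_i-i}{k}\bigr)$.

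Second, I would identify $\mathcal G_n$. Under the boson--fermion correspondence $|\ql\rangle$ is the Maya state with occupied set $\{\ql_i-i:i\ge1\}$ (vacuum $\{-1,-2,\dots\}$), and the operators $\mathcal P_k:=\sum_{j\in\mathbb Z}\binom{j}{k}{:}\psi_j^{*}\psi_j{:}$ span the Cartan subalgebra of $\widehat{\mathfrak{gl}}_\infty$: they pairwise commute, are simultaneously diagonal on every $|\ql\rangle$ with eigenvalue $\sum_{i\ge1}\bigl[\binom{\ql_i-i}{k}-\binom{-i}{k}\bigr]$, and the hockey-stick identity $\binom{-\ell}{k+1}=-\sum_{i=1}^{\ell}\binom{-i}{k}$ together with $\binom{\ql_i-i}{k}=\binom{-i}{k}$ for $i>\ell(\ql)$ rewrites this eigenvalue exactly in the form appearing in the theorem. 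So it remains to prove the operator identity $\mathcal G_n=\tfrac1{n+1}\sum_{k=0}^{n}\binom{n+1}{k}\mathcal P_k$ on the charge-$0$ Fock space. I would do this by writing $W\in\widehat{\mathfrak{gl}}_\infty$: being weight-raising by $1$ it has the form $W=\sum_j w_j\,E_{j+1,j}$ with $w_j$ an explicit polynomial in $j$, while $b_{(-1)}\vac$ is the one-box state; then computing the effect of $W^{n}$ and of the $(n)$-th mode, the key inputs being Pascal's relation $\binom{j+1}{k}-\binom{j}{k}=\binom{j}{k-1}$ (which governs $[\mathcal P_k,W]$ and produces the binomial coefficients $\binom{n+1}{k}$) and the Vandermonde identity $\sum_k\binom{n+1}{k}\binom{j}{k}=\binom{n+1+j}{n+1}-\binom{j}{n+1}$ (which repackages the answer). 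Equivalently, one can run a generating-function argument and show that $\sum_n\mathcal G_n u^{n}$, suitably normalised, is the image under boson--fermion of the diagonal vertex operator $\sum_{j}(1+u)^{j}{:}\psi_j^{*}\psi_j{:}$, whose coefficients are the $\mathcal P_k$.

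The hard part is exactly the operator identity of the previous step. Its content is that the non-associative Weyl quantization is engineered so that, although $W^{n}b_{(-1)}\vac$ is a complicated state, its weight-$0$ mode lands inside the \emph{abelian} Cartan of $\widehat{\mathfrak{gl}}_\infty$ rather than being a generic weight-$0$ operator: $\mathcal G_0$ is the charge operator, $\mathcal G_1=L_0$, and $\mathcal G_2=\tfrac16(:b^{3}:)_{(2)}$ is the content-sum cut-and-join, which are diagonal on Schur functions for classical reasons; but already $\tilde h_3$ carries genuine $\hbar$-corrections --- the terms in $b_{(-1)}b_{(-3)}\vac$ and $b_{(-2)}^{2}\vac$ --- whose $(3)$-mode must cancel the non-diagonal part of $(:b^{4}:)_{(3)}$. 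Pushing this cancellation through the recursion $\mathcal G_n=\tfrac1{n(n+1)}(\text{Borcherds manipulation of }W)\,\mathcal G_{n-1}$ is the delicate point, since the recursion unavoidably brings in the non-weight-$0$ modes of $W^{n-1}b_{(-1)}\vac$ as well; the cleanest way around it is the generating-function route, where everything reduces to the single statement that $\sum_j(1+u)^jE_{jj}$ is, on the bosonic side, assembled from precisely the modes that appear in the Weyl quantization.
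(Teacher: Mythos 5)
Your reduction and your combinatorial endgame are both sound and in fact coincide with the paper's: the completeness of $\{S_\ql\}$, the identification of the eigenvalue $\binom{-\ell(\ql)}{k+1}+\sum_i\binom{\ql_i-i}{k}$ with the action of the Cartan elements $\mathcal P_k=\sum_j\binom{j}{k}{:}\psi_j^*\psi_j{:}$ via the hockey-stick identity, and the $\hbar$-homogeneity bookkeeping all check out against the paper (the paper's version of $\mathcal P_k$ is the operator $P^{(k)}_0$, whose action on Hall--Littlewood polynomials is quoted from Liu--Yang). The problem is that the entire proof hinges on the single operator identity $\hat H_n=\frac{1}{n+1}\sum_{k=0}^n\binom{n+1}{k}\mathcal P_k$, and you explicitly defer this as ``the hard part'' without proving it. Neither of your two suggested routes closes it. The direct route founders on a structural point you yourself flag: $W^nb_{(-1)}\vac$ is a \emph{state}, a degree-$(n+1)$ polynomial in the creation modes, and there is no a priori reason its $(n)$-th Borcherds mode should be quadratic in fermions at all, let alone land in the abelian Cartan of $\widehat{\mathfrak{gl}}_\infty$; knowing $W=\sum_jw_jE_{j+1,j}$ as an operator tells you what the state $W^nb_{(-1)}\vac$ is, but not what its modes are. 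The generating-function route is only named, not executed.

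The paper supplies precisely the missing ingredient: the closed formula $h_n=\sum_{k=0}^n\frac{(-1)^{n-k}}{(n-k+1)!}\qp_x^{n-k}S_{(k+1)}$ (Theorem \ref{BE}). This exhibits $\varphi(h_n)$ as an explicit linear combination of $T$-translates of the states $\varphi(g_k)$ whose fields are the $W_{1+\infty}$ generators $P^{(k)}(z)=\frac{1}{(k+1)!}{:}e^{-\phi(z)}\qp_z^{k+1}e^{\phi(z)}{:}$; since $Y(Ta,z)=\qp_zY(a,z)$, extracting the weight-zero mode immediately yields $\varphi(h_n)_{(n)}=\frac{1}{n+1}\sum_{k=0}^n\binom{n+1}{k}P^{(k)}_0$, which is your operator identity. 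Establishing Theorem \ref{BE} is itself the bulk of the work (the paper does it by first proving all statements for the candidate $\tilde h_n$ using the $B_m,B_m^*$ relations and the explicit action of $P^{(k)}_m$ on Hall--Littlewood polynomials, then verifying that $\tilde h_n$ satisfies the Weyl-quantization recursion). Note also that your starting point, the recursion \eqref{BU}, is Proposition \ref{AW}, which in the paper is proved by the same machinery; so as written your argument both assumes a result that is logically downstream of the missing lemma and omits the lemma itself. To repair the proposal you would need to prove, by induction on $n$, that $\frac{1}{n!(n+1)!}W^nb_{(-1)}\vac=\sum_{k=0}^n\frac{(-1)^{n-k}}{(n-k+1)!}T^{n-k}\varphi(g_k)$, or some equivalent statement placing $\varphi(h_n)$ inside the span of translates of the $W_{1+\infty}$ generating states.
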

\noindent
In studying these properties, we also arrive at the following explicit formula relating $h_n$ to Schur polynomials:
\begin{Prop}
    \label{AZ}
    We have 
    \[
    h_n = \frac{1}{n+1}\sum_{k=0}^n\frac{1}{\binom{n}{k}}S_{(n-k+1,\underbrace{1,\dots,1}_k)}.
    \]
\end{Prop}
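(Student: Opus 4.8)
The plan is to prove the formula by induction on $n$, working in the polynomial ring $\Lambda:=\mathbb C[p_1,p_2,\dots]$; the substitution $p_j=\hbar^{(j-1)/2}v^{(j-1)}/(j-1)!$ then recovers the statement in $\mathcal A[\sqrt\hbar]$. Under this substitution ${\bf e}_n$ becomes the complete homogeneous symmetric function of degree $n$ in the power sums $p_j$, so $S_\ql=\det({\bf e}_{\ql_i-i+j})$ is the ordinary Schur function, and inserting $v^{(m)}=m!\,\hbar^{-m/2}p_{m+1}$, $\partial_{v^{(m)}}=(m!)^{-1}\hbar^{m/2}\partial_{p_{m+1}}$ into \eqref{BU} makes all powers of $\hbar$ cancel, turning Proposition~\ref{AW} into the $\hbar$-free recursion
\[
h_0=p_1,\qquad h_n=\frac{1}{n(n+1)}\,\mathcal D h_{n-1},\qquad \mathcal D:=\sum_{a,b\ge1}(a+b-1)\,p_ap_b\,\partial_{p_{a+b-1}}+\sum_{a,b\ge1}ab\,p_{a+b+1}\,\partial_{p_a}\partial_{p_b}.
\]
In Heisenberg modes ($p_a\leftrightarrow\alpha_{-a}$, $\partial_{p_a}\leftrightarrow a^{-1}\alpha_a$) this reads $\mathcal D=\sum_{a,b\ge1}\alpha_{-a}\alpha_{-b}\alpha_{a+b-1}+\sum_{a,b\ge1}\alpha_{-a-b-1}\alpha_a\alpha_b$, essentially a mode of the cubic current ${:}b(z)^3{:}$ on the charge-zero Fock space.

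The induction rests on one key lemma: \emph{$\mathcal D$ adjoins a box weighted by the square of its content}, i.e.\ $\mathcal D S_\ql=\sum_{\mu=\ql+\square}c(\square)^2\,S_\mu$, the sum over all diagrams $\mu$ obtained from $\ql$ by adding one box $\square$, where $c(\square)=j-i$ for the box in row $i$, column $j$. For a hook the only box additions of nonzero content are the extension of the arm (content $m$) or the leg (content $-(k+1)$) --- the remaining inner corner, when it exists, has content $0$ --- so the lemma specializes to
\[
\mathcal D S_{(m,1^k)}=m^2\,S_{(m+1,1^k)}+(k+1)^2\,S_{(m,1^{k+1})}.
\]
Granting this, I would feed in the inductive hypothesis $h_{n-1}=\tfrac1n\sum_{j=0}^{n-1}\binom{n-1}{j}^{-1}S_{(n-j,1^j)}$, apply the hook formula, and reindex the two resulting sums by $k=j$ and $k=j+1$; collecting terms, the coefficient of $S_{(n-k+1,1^k)}$ in $\mathcal D h_{n-1}$ is $\tfrac1n\bigl(\tfrac{(n-k)^2}{\binom{n-1}{k}}+\tfrac{k^2}{\binom{n-1}{k-1}}\bigr)$, which equals $\tfrac{n}{\binom nk}$ by the elementary identity $\tfrac{(n-k)^2}{\binom{n-1}{k}}+\tfrac{k^2}{\binom{n-1}{k-1}}=\tfrac{n^2}{\binom nk}$ (read with the obvious convention at the endpoints $k=0,n$). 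Hence $h_n=\tfrac1{n(n+1)}\mathcal D h_{n-1}=\tfrac1{n+1}\sum_{k=0}^n\binom nk^{-1}S_{(n-k+1,1^k)}$, the base case being $h_0=p_1=S_{(1)}$.

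To prove the key lemma I would use the factorization $\mathcal D=[\mathcal C,\widehat{p_1}]$, where $\widehat{p_1}$ is multiplication by $p_1$ --- so $\widehat{p_1}S_\ql=\sum_{\mu=\ql+\square}S_\mu$ by the Pieri rule --- and $\mathcal C$ is the differential operator, diagonal in the Schur basis, with $\mathcal C S_\ql=\bigl(\sum_{\square\in\ql}c(\square)^2\bigr)S_\ql$. Such a $\mathcal C$ exists because $\ql\mapsto\sum_{\square\in\ql}c(\square)^2$ is a shifted-symmetric function of $\ql$, hence occurs as the Schur eigenvalue of a (unique) operator in the $\mathcal W_{1+\infty}$-algebra acting on $\Lambda$ (concretely, a zero mode of ${:}b(z)^3{:}$ corrected by lower-order terms). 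The commutator identity is then immediate, $[\mathcal C,\widehat{p_1}]S_\ql=\sum_{\mu=\ql+\square}\bigl(\sum_{\square'\in\mu}c(\square')^2-\sum_{\square'\in\ql}c(\square')^2\bigr)S_\mu=\sum_{\mu=\ql+\square}c(\square)^2S_\mu$, and it remains only to verify $[\mathcal C,\widehat{p_1}]=\mathcal D$, a finite calculation with the Heisenberg relations (equivalently, through the boson--fermion correspondence: the fermion-bilinear form of $[\mathcal C,\alpha_{-1}]$ moves a single particle or hole by one notch with weight the squared content of the new box, the Fermi-surface term carrying weight $0$). Alternatively, the hook formula can be obtained by generating functions: from $\partial_{p_j}(H(x)E(y))=\tfrac{x^j-(-y)^j}{j}H(x)E(y)$ with $H(x)=\exp(\sum_k p_kx^k/k)$ and $E(y)=\exp(\sum_k(-1)^{k-1}p_ky^k/k)$ one computes $\mathcal D(H(x)E(y))$ in closed form, and comparing with the hook generating identity $\sum_{m\ge1,k\ge0}S_{(m,1^k)}x^{m-1}y^k=\tfrac{H(x)E(y)-1}{x+y}$ reduces the lemma to the rational-function identity $\mathcal D\,G=\tfrac1x(x\partial_x)^2G+\tfrac1y(y\partial_y)^2G$ with $G=\tfrac{H(x)E(y)-1}{x+y}$.

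The hard part is the key lemma --- matching the explicit cubic operator $\mathcal D$ of Proposition~\ref{AW} with the content-squared box-adding operator. Both routes demand some care: the commutator route needs the precise $\mathcal W_{1+\infty}$ (equivalently fermionic) normalization of $\mathcal C$, including normal-ordering and central corrections; the generating-function route needs careful bookkeeping of the boundary terms in $\mathcal D(H(x)E(y))$ and in $(x\partial_x)^2G$, $(y\partial_y)^2G$. Once the key lemma is in place, the induction and the binomial identity are completely routine.
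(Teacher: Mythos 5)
Your proposal is correct in substance but follows a genuinely different route from the paper. The paper does not prove Proposition~\ref{AZ} by induction on the recursion \eqref{BU} at all: it first introduces the auxiliary densities $\tilde h_n$ by the explicit formula \eqref{BJ} in terms of $g_k=S_{(k+1)}$, rewrites them as $\tilde h_n=\sum_{k=0}^n\frac{(-1)^k}{k+1}P^{(n-k)}_{-n-1}(1)$, and then obtains the Schur expansion \eqref{BK} directly from the fermionic-bilinear formula $P^{(k)}_m=-\sum_n\binom{n}{k}B^*_{m+n}B_n$ of Lemma~\ref{BR} (quoted from \cite{liu2022action}) together with a reciprocal-binomial identity of Sury et al.; only afterwards does it establish the recursion \eqref{BU} \emph{from} this Schur expansion, and finally identifies $\tilde h_n=h_n$ via the normal-order recursion \eqref{BD} (Theorem~\ref{BE}). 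Your induction step is nonetheless fully consistent with the paper's computations: your hook formula $\mathcal D S_{(m,1^k)}=m^2S_{(m+1,1^k)}+(k+1)^2S_{(m,1^{k+1})}$ is exactly what the paper obtains by summing $P^{(1)}_{-1}+2P^{(2)}_{-1}$ via \eqref{BM}, and your binomial identity and reindexing reproduce the paper's verification of Proposition~\ref{AW}. What your route buys is a cleaner conceptual statement (the content-squared box-adding lemma for arbitrary $\lambda$, not just hooks) and independence from the Hall--Littlewood machinery; what it costs is that the key lemma is the one piece you have not actually carried out --- you offer two viable strategies (the commutator $[\mathcal C,\widehat{p_1}]$ with the diagonal content-squared operator, or the hook generating function $\frac{H(x)E(y)-1}{x+y}$, both of which check out on low-degree examples and are standard $\mathcal W_{1+\infty}$ facts) but defer the normal-ordering bookkeeping in each. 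The paper sidesteps this by citing \eqref{BM} wholesale.

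One logical caveat you should address if this proof were to replace the paper's: you take Proposition~\ref{AW} as an input, but in the paper's own development the recursion \eqref{BU} is \emph{derived from} the Schur expansion of $\tilde h_n$ (i.e.\ from the very statement you are proving), and is never established for $h_n$ independently of it. So your argument, as it stands, proves that the unique solution of the recursion \eqref{BU} has the stated Schur expansion; to conclude Proposition~\ref{AZ} for the $h_n$ defined by the tree construction \eqref{BD}, you would still need an independent proof that those $h_n$ satisfy \eqref{BU} --- which is essentially the content of Theorem~\ref{BE} and is where the paper expends the remaining effort.
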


\subsection*{Organization of the paper}
This paper is organized as follows. In Sect.\,\ref{AH}, after recalling some basic facts about vertex algebras,  we prove Theorem \ref{AF} and Proposition \ref{AG} in a slightly more general setting. In Sect.\,\ref{AI} we construct quantum Hamiltonians of the dispersionless KdV hierarchy and prove various properties. Our main tool for the proofs is based on the work \cite{liu2022action} studying actions of $W$-type operators on Schur functions.  Finally, in Sect.\,\ref{AJ}, we give some concluding remarks and compare our results to those obtained before \cite{buryak2016double,dubrovin2016symplectic}.

\subsection*{Acknowledgement} Part of the work was done during the JSPS International Research Fellowship of the author and was supported by JSPS KAKENHI Grant Number 23KF0114. He would like to thank Wenda Fang, Hiroshi Iritani, Si-Qi Liu, Youjin Zhang and Chunhui Zhou for very helpful discussions. He also thanks the anonymous referees for very helpful suggestions and comments to improve the
presentation of the paper.

\subsection*{Data Availability} Data sharing not applicable to this article as no datasets were generated or analyzed during
the current study.

\section{Heisenberg vertex algebra and quantum Hamiltonian structure}
\label{AH}
\subsection{Basics of vertex algebra}
\label{RAB}
Let us review some basic facts about vertex algebras, and we refer to \cite{frenkel2004vertex} for details. The paper \cite{de2006finite} also serves as a reference for the topic.

A vertex algebra is given by the data $\left(V,\vac,T,Y(-,z)\right)$, where $V$ is a vector space over $\mathbb C$ called the state space, $\vac\in V$ is a special vector called the vacuum vector, $T\in\mathrm{End}(V)$ is a linear operator called the infinitesimal translation operator and  $Y(-,z)\colon V\to \mathrm{End}(V)[[z^{\pm 1}]]$ is a linear morphism called the state-field correspondence. The state-field correspondence associates to each state $a\in V$ operators $a_{(n)}$ for all $n\in\mathbb Z$ given by  
\[
Y(a,z) = \sum_{n\in\mathbb Z}a_{(n)}z^{-n-1},\quad a_{(n)}\in\mathrm{End}(V).
\]
We call $a_{(n)}$ the $n$-th mode of $a$. The data $\left(V,\vac,T,Y(-,z)\right)$ of a vertex algebra are required to satisfy the following axioms:
\begin{itemize}
    \item $\vac_{(n)} = \qd_{n,-1}\mathrm{Id}_V$ and for any $a\in V$, the action of $a_{(n)}$ on the vacuum vector is given by 
     \[
     a_{(n)}\vac =\begin{cases}0,& \text{for } n\geq 0,\\
    \qd_{n,-1}a,& \text{for } n\leq -1.
    \end{cases}
     \]
     \item $T\vac = 0$ and for any $a\in V$, $\fk{T}{a_{(n)}} = -na_{(n-1)}$ and $Y(Ta,z) = \qp_z Y(a,z)$.
     \item For any $a,b\in V$, there exists a positive integer $N\in\mathbb Z_+$ such that 
      \[(z-w)^N[Y(a,z),Y(b,w)] = 0.\]
\end{itemize}
A convenient tool for actual computation of vertex algebras is the so-called $\ql$-bracket formalism (\cite{d1998structure,kac1998vertex}). For $a,b\in V$, we denote 
\[
[a_\ql b] = \sum_{n\geq 0}\frac{\ql^n}{n!}a_{(n)}b.
\]
It can be proved that the $\ql$-brackets are polynomials in $\ql$ with coefficients being states in $V$. They satisfy the following properties, where many useful identities for vertex algebras are encoded:
\begin{itemize}[itemindent=8em]
    \item[Sesquilinearity: ] $[Ta_\ql b] = -\ql[a_\ql b]$, and $ [a_\ql Tb]= (T+\ql)[a_\ql b]$.
    \item[Skewsymmetry: ] $[b_\ql a] = -[a_{-\ql-T}b]$.
    \item[Jacobi identity: ] $[a_\ql[b_\mu c]] = [b_\mu[a_\ql c]]+[[a_\ql b]_{\ql+\mu} c]$.
\end{itemize}

For $a,b\in V$, we denote by $:ab:$ the normal order product, which is defined to be the state $a_{(-1)}b$. For general vertex algebras, the normal order product is neither commutative nor associative. It satisfies the following  quasi-commutativity and quasi-associativity (\cite{de2006finite}):
\begin{align}
    \label{BC}
&:ab:-:ba: = \int_{-T}^0d\ql\,[a_\ql b],\\
\label{AM}
&::ab:c:-:a:bc:: = :\kk{\int_0^Td\ql\,a}[b_\ql c]:+:\kk{\int_0^Td\ql\,b}[a_\ql c]:.
\end{align}
To compute the right-hand sides of the above formulae, we must put $\ql$ on the left under the sign of integral, and act the operators obtained from the definite integral to the right. For example, we have
\[
    \int_{-T}^0d\ql\,[a_\ql b] = \sum_{n\geq 0}\kk{\int_{-T}^0 \frac{\ql^n}{n!}d\ql}\kk{a_{(n)}b} = \sum_{n\geq 0}\frac{(-1)^n T^{n+1}}{(n+1)!}\kk{a_{(n)}b}.
\]
When computing the $\ql$-bracket of normal order product, we can use the following useful formula:
\begin{equation}
    \label{BB}
[a_\ql :bc:] = :[a_\ql b]c:+:b[a_\ql c]:+\int_0^\ql [[a_\ql b]_\mu c] d\mu.
\end{equation}
Another standard fact about the normal order product is that
\begin{equation}
    \label{AL}
    Y(:ab:,z) = Y(a,z)_+Y(b,z)+Y(b,z)Y(a,z)_-,
\end{equation}
where we set
\[
    Y(a,z)_+ = \sum_{n<0}a_{(n)}z^{-n-1},\quad Y(a,z)_- = \sum_{n\geq 0}a_{(n)}z^{-n-1}.
\]

In this paper, we will only consider the Heisenberg vertex algebra which we define now. Recall that a Heisenberg Lie algebra of rank $N$ is a $\mathbb C$-vector space spanned by $b^\qa_{(n)}$ with $\qa = 1,\dots, N$ and $n\in\mathbb Z$, whose Lie bracket is specified by 
\begin{equation}
    \label{AK}
\left[b^\qa_{(n)},b^\qb_{(m)}\right] = \hbar\,\eta^{\qa\qb}\,n\,\qd_{m+n,0},\quad \qa,\qb = 1,\dots,N,\quad n,m\in\mathbb Z,
\end{equation} 
where $(\eta^{\qa\qb})$ is a constant non-degenerate symmetric matrix and $\hbar$ is a formal parameter.
\begin{Rem}
    Strictly speaking, the underlying space of the Heisenberg Lie algebra we define here is a free $\mathbb C[\hbar]$-module rather than a vector space. The only reason for the introduction of the parameter $\hbar$ is to make sense of classical limits by taking $\hbar\to 0$. In particular, all the definitions and axioms of vertex algebras make sense if the state space is a free $\mathbb C[\hbar]$-module and all endomorphisms are required to be $\hbar$-linear. 
\end{Rem}

We proceed to define the Heisenberg vertex algebra. First define the state space $V$ to be the free $\mathbb C[\hbar]$-module spanned by states of the form 
\[
b^{\qa_1}_{(-k_1-1)}\dots b^{\qa_n}_{(-k_n-1)}\vac,\quad n, k_i\geq 0,\quad \qa_i = 1,\dots, N.
\]
In particular, we have the vacuum vector $\vac\in V$, and we view $b^\qa_{(n)}$ as endomorphisms on $V$ whose actions are determined by the commutation relation \eqref{AK} and $b^\qa_{(n)}\vac = 0$ for $n\geq 0$. As an example, we have 
\[
b^\qa_{(n)}\kk{b^\qb_{(-k-1)}\vac} = 
\begin{cases}
    b^\qa_{(n)}b^\qb_{(-k-1)}\vac, & n<0\\
    \hbar\eta^{\qa\qb}n\qd_{n-k-1,0}\vac,& n\geq 0.
\end{cases}
\]
Note that we have $b^\qa_{(0)} = 0$. We define the infinitesimal translation operator $T$ similarly by specifying its action, which is determined by $T\vac = 0$ and the commutation relation $\fk{T}{b^\qa_{(n)}} = -nb^\qa_{(n-1)}$. Finally, we define
the state-field correspondence $Y(-,z)$ by specifying the modes of each state. For a state $b^{\qa_1}_{(-k_1-1)}\dots b^{\qa_n}_{(-k_n-1)}\vac$, its modes are defined recursively by induction on $n$ \cite{borcherds1986vertex}:
\begin{align*}
\vac_{(i)} &= \qd_{i,-1}\mathrm{Id}_V,\\
(b^\qa_{(i)}a)_{(j)} &= \sum_{\ell\geq 0}(-1)^\ell\binom{i}{\ell}\kk{b^\qa_{(i-\ell)}a_{(j+\ell)}-(-1)^i a_{(i+j-\ell)}b^\qa_{(\ell)}},\quad i,j\in\mathbb Z.
\end{align*}
From now on, we will denote by $b^\qa$ the state $b^\qa_{(-1)}\vac$, and it follows that
\[
Y(b^\qa,z) = \sum_{n\in\mathbb Z}b^\qa_{(n)}z^{-n-1}.
\]
\begin{Lem} The date $(V,\vac,T,Y(-,z))$ described above define a vertex algebra called the Heisenberg vertex algebra of rank $N$.
\end{Lem}
\begin{proof}
    We refer to Chapter 2 of \cite{frenkel2004vertex} for a proof.
\end{proof}

For the Heisenberg vertex algebra, the modes of states admit particularly nice formulae in terms of the normal ordering which we shall define now. We call $b^\qa_{(n)}$ an annihilator if $n\geq 0$ and a creator if $n<0$. Then, the normal ordering is defined for a sequence of modes of $b^\qa$, which is denoted by $:b^{\qa_1}_{(k_1)}\dots b^{\qa_n}_{(k_n)}:$, and its action is given by placing all annihilators on the right of all creators. It is important to note that the normal ordering is well-defined because creators commute with each other, so do annihilators. Now we proceed to give an explicit formula for the state-field correspondence of the Heisenberg vertex algebra.
\begin{Prop}[Remark 2.2.6 of \cite{frenkel2004vertex}]
    \label{AP}
    For $n\geq 0$, $k_1,\dots k_n\geq 0$ and $\qa_1,\dots,\qa_n = 1,\dots, N$, let us denote
    \[
    a = b^{\qa_1}_{(-k_1-1)}\dots b^{\qa_n}_{(-k_n-1)}\vac\in V,
    \]
    then we have
    \[
    Y\left(a,z\right) = (-1)^{\sum k_i}\sum_{m_1,\dots,m_n\in\mathbb Z}\binom{m_1}{k_1}\dots\binom{m_n}{k_n}:b^{\qa_1}_{(m_1-k_1)}\dots b^{\qa_n}_{(m_n-k_n)}:z^{-n-\sum m_i}.
    \]
\end{Prop}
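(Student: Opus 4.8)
The plan is to reduce the statement to the case of a single generator carrying $z$-derivatives, and then build up to an arbitrary monomial state by repeatedly applying the normal-ordered-product formula \eqref{AL}.

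First I would settle the case $n=1$. From the standard consequence $Y(a,z)\vac=e^{zT}a$ of the translation and vacuum axioms, comparing with $Y(a,z)\vac=\sum_\ell a_{(\ell)}\vac\,z^{-\ell-1}$ and specializing to $a=b^\qa$ gives $b^\qa_{(-k-1)}\vac=\tfrac{1}{k!}T^k b^\qa$; combining this with $Y(Tc,z)=\qp_z Y(c,z)$ then yields
\[
Y\kk{b^\qa_{(-k-1)}\vac,z}=\tfrac{1}{k!}\qp_z^k\,b^\qa(z)=(-1)^k\sum_m\binom{m}{k}b^\qa_{(m-k)}z^{-m-1},
\]
which is exactly the asserted formula when $n=1$. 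Extracting the coefficient of $z^0$ in this identity, and using $\binom{-1}{k}=(-1)^k$, also produces the operator identity $\kk{b^\qa_{(-k-1)}\vac}_{(-1)}=b^\qa_{(-k-1)}$ on $V$, which I will need in the inductive step.

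Next I would induct on $n$. With $a=b^{\qa_1}_{(-k_1-1)}\cdots b^{\qa_n}_{(-k_n-1)}\vac$ and $a'=b^{\qa_2}_{(-k_2-1)}\cdots b^{\qa_n}_{(-k_n-1)}\vac$, the operator identity above shows $a={:}(b^{\qa_1}_{(-k_1-1)}\vac)\,a'{:}$, so \eqref{AL} writes $Y(a,z)=c(z)_+\,Y(a',z)+Y(a',z)\,c(z)_-$ with $c(z)=Y(b^{\qa_1}_{(-k_1-1)}\vac,z)$. The observation that makes this run cleanly is that $c(z)_+$ involves only creators $b^{\qa_1}_{(j)}$ with $j<0$ and $c(z)_-$ only annihilators $b^{\qa_1}_{(j)}$ with $j>0$: in the range $0\le m\le k_1$ the coefficient $\binom{m}{k_1}b^{\qa_1}_{(m-k_1)}$ vanishes, either because $\binom{m}{k_1}=0$ or because $b^{\qa_1}_{(0)}=0$. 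Hence, feeding in the inductive expression for $Y(a',z)$ as a sum of normally ordered monomials in the modes $b^{\qa_i}_{(\cdot)}$, $i\ge 2$, the first summand merely prepends a creator on the far left and the second merely appends an annihilator on the far right, so $Y(a,z)$ remains normally ordered. Expanding $c(z)$ and $Y(a',z)$ and collecting monomials then delivers the overall sign $\prod_i(-1)^{k_i}=(-1)^{\sum k_i}$, the coefficient $\prod_i\binom{m_i}{k_i}$, and the $z$-exponent $-\sum_i(m_i+1)=-n-\sum m_i$, which is precisely the claim.

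The step I expect to need the most care is this induction: one must verify that iterating \eqref{AL} reproduces exactly the single prescription ``all annihilators to the right'' with no surviving central (contraction) contributions. The vanishing observation above is what rules those out — for the particular fields $Y(b^{\qa}_{(-k-1)}\vac,z)$ the splitting of the field into its $\pm$-parts by field mode coincides with the creator/annihilator splitting of the underlying Heisenberg modes, so each application of \eqref{AL} only relocates one mode. Alternatively, one could invoke the reconstruction theorem of \cite{frenkel2004vertex} to obtain $Y(a,z)={:}\bigl(\tfrac{1}{k_1!}\qp_z^{k_1}b^{\qa_1}(z)\bigr)\cdots\bigl(\tfrac{1}{k_n!}\qp_z^{k_n}b^{\qa_n}(z)\bigr){:}$ in one stroke, after which only the routine mode expansion recorded in the first display above remains.
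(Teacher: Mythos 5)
Your proof is correct and takes essentially the same route as the paper's: both induct on $n$ by peeling off one generator via $b^{\qa}_{(-k-1)}\vac=\tfrac{1}{k!}T^{k}b^{\qa}$, apply \eqref{AL} together with $Y(T^{k}b^{\qa},z)=\qp_z^{k}Y(b^{\qa},z)$, and use the vanishing of $\binom{m}{k}$ for $0\le m<k$ (plus $b^{\qa}_{(0)}=0$) to show each application of \eqref{AL} only prepends a creator or appends an annihilator, so the expression stays normally ordered. The only cosmetic difference is that you phrase the key identity as $\bigl(b^{\qa}_{(-k-1)}\vac\bigr)_{(-1)}=b^{\qa}_{(-k-1)}$ extracted from the $n=1$ case, where the paper writes $b^{\qa}_{(-k-1)}=\tfrac{1}{k!}(T^{k}b^{\qa})_{(-1)}$ directly.
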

\begin{proof}
    Let us prove by induction on $n$. The case $n=0$ trivially holds true. Assume that we have proved the proposition for some $n\geq 0$, and let us consider the state
    \[
        a = b^\qa_{(-k-1)}b^{\qa_1}_{(-k_1-1)}\dots b^{\qa_n}_{(-k_n-1)}\vac\in V
    \]  
    for some $k\geq 0$. First, it follows from $Y(T^kb^\qa,z) = \qp^k_z Y(b^\qa,z)$ that 
    \[
    b^\qa_{(-k-1)} = \frac{1}{k!}(T^k b^\qa)_{(-1)},
    \]
    which implies that
    \[
    a = \frac{1}{k!}:(T^k b^\qa)\kk{b^{\qa_1}_{(-k_1-1)}\dots b^{\qa_n}_{(-k_n-1)}\vac}:.
    \]
    The field corresponding to $a$ is then computed by using the identity \eqref{AL}, and we arrive at the following expression for $Y(a,z)$:
    \begin{align*}
     & \begin{split}
            (-1)^{k+\sum k_i}\sum_{\substack{m_1,\dots,m_n
            \in
            \mathbb Z\\ m<0}}&\binom{m}{k}\binom{m_1}{k_1}\dots\binom{m_n}{k_n}\\
            &\cdot b^\qa_{(m-k)}\kk{:b^{\qa_1}_{(m_1-k_1)}\dots b^{\qa_n}_{(m_n-k_n)}:}z^{-n-\sum m_i-m-1}
        \end{split}\\[1em]
        &\begin{split}
            +(-1)^{k+\sum k_i}\sum_{\substack{m_1,\dots,m_n\in\mathbb Z \\ m\geq 0}}&\binom{m}{k}\binom{m_1}{k_1}\dots\binom{m_n}{k_n}\\
            & \cdot \kk{:b^{\qa_1}_{(m_1-k_1)}\dots b^{\qa_n}_{(m_n-k_n)}:}b^\qa_{(m-k)}z^{-n-\sum m_i-m-1}.
        \end{split}
        \end{align*}
In the first summation, since we have $m-k<0$, it follows from the definition of the normal ordering that 
\[
    b^\qa_{(m-k)}\kk{:b^{\qa_1}_{(m_1-k_1)}\dots b^{\qa_n}_{(m_n-k_n)}:}\, =\,  :b^\qa_{(m-k)}b^{\qa_1}_{(m_1-k_1)}\dots b^{\qa_n}_{(m_n-k_n)}:,\quad m<0.
\]
In the second summation, terms for $0\leq m\leq k-1$ vanish due to the fact that the binomial coefficient $\binom{m}{k} = 0$, and for $m\geq k$ we have
\[
    \kk{:b^{\qa_1}_{(m_1-k_1)}\dots b^{\qa_n}_{(m_n-k_n)}:}b^\qa_{(m-k)} = :b^\qa_{(m-k)}b^{\qa_1}_{(m_1-k_1)}\dots b^{\qa_n}_{(m_n-k_n)}:.
\]
Therefore, we conclude that $Y(a,z)$ is indeed of the desired form and the proposition is proved.
\end{proof}
Using the above proposition as well as the definition of the  normal ordering, we arrive at the following description of modes.
\begin{Cor}
    \label{AO}
    Fix a state $a = b^{\qa_1}_{(-k_1-1)}\dots b^{\qa_n}_{(-k_n-1)}\vac$ for some $n\geq 0$ and $k_i\geq 0$, its modes are given by 
    \[
    a_{(m)} = \sum_{\substack{j_1\leq\dots\leq j_n\\\qb_1,\dots,\qb_n}} C^{j_1,\dots,j_n}_{\qb_1,\dots,\qb_n} b^{\qb_1}_{(j_1)}\dots b^{\qb_n}_{(j_n)}\in\mathrm{End}(V),\quad m\in\mathbb Z,
    \]
    here $C^{j_1,\dots,j_n}_{\qb_1,\dots,\qb_n}$ are some integers depending on both the state $a$ and $m$, where the indices $\qb_i$ run over $1,\dots, N$ and $j_i$ run over $\mathbb Z$.
\end{Cor}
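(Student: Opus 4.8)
The plan is to read this off directly from Proposition \ref{AP}, which already gives an explicit closed form for the field $Y(a,z)$, and then to exploit the fact that the Heisenberg mode algebra is "almost commutative": any two modes $b^\beta_{(i)}$, $b^\gamma_{(j)}$ with $i+j\neq 0$ commute, and among the remaining pairs the only ones that could fail to commute involve $b^\beta_{(0)}$, which vanishes. First I would extract the operator $a_{(m)}$ as the coefficient of $z^{-m-1}$ in $Y(a,z)$: matching powers of $z$ in the formula of Proposition \ref{AP} forces $\sum_i m_i = m+1-n$, so
\[
a_{(m)} = (-1)^{\sum k_i}\sum_{\substack{m_1,\dots,m_n\in\mathbb Z\\ m_1+\dots+m_n = m+1-n}}\binom{m_1}{k_1}\cdots\binom{m_n}{k_n}\,:b^{\alpha_1}_{(m_1-k_1)}\cdots b^{\alpha_n}_{(m_n-k_n)}:\,,
\]
a sum of normally ordered products of \emph{exactly} $n$ modes, with integer coefficients. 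A priori this sum is infinite, but it is a well-defined element of $\mathrm{End}(V)$ since $Y(a,z)\in\mathrm{End}(V)[[z^{\pm 1}]]$ by the vertex algebra axioms; equivalently, on any fixed state of $V$ only finitely many summands act nontrivially.

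The remaining step is to rewrite each summand in the required ordered form. In the normal order of Proposition \ref{AP} every creator (mode index $<0$) sits to the left of every annihilator (mode index $\geq 0$), so it suffices to sort the creators among themselves and the annihilators among themselves. But any two factors inside the "creator block" commute, since the sum of two negative integers is negative, hence nonzero; and any two factors inside the "annihilator block" commute as well, because the sum of two nonnegative integers vanishes only if both vanish, and $b^\beta_{(0)}=0$. Hence no commutator terms are ever produced, and each normally ordered monomial is \emph{literally equal} to the monomial $b^{\beta_1}_{(j_1)}\cdots b^{\beta_n}_{(j_n)}$ obtained by sorting the $n$ pairs $\bigl(\text{mode index},\ \text{color}\bigr)$ so that $j_1\leq\dots\leq j_n$ (say, breaking ties by the value of $\beta$).

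Collecting the contributions of all tuples $(m_1,\dots,m_n)$ that sort to the same monomial then yields exactly the asserted expansion, with $C^{j_1,\dots,j_n}_{\beta_1,\dots,\beta_n}$ equal to $(-1)^{\sum k_i}$ times the corresponding (finite) sum of products of binomial coefficients, hence a genuine constant. I do not expect a real obstacle here: the one point deserving a line of care is the commutativity bookkeeping that makes the reordering free of charge — in particular the vanishing $b^\beta_{(0)}=0$, which is what rules out the potentially dangerous equal-and-opposite modes within a single block — together with the remark that the resulting index set $\{j_1\leq\dots\leq j_n\}$ is in general infinite, so the displayed sum is to be understood in the usual Fock-space sense rather than as a finite expression.
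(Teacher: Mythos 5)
Your argument is correct and is exactly the derivation the paper intends: the statement is presented as an immediate corollary of Proposition \ref{AP}, obtained by reading off the coefficient of $z^{-m-1}$ and noting that the reordering inside the creator and annihilator blocks costs nothing. Your explicit check that the reordering is free — in particular invoking $b^{\beta}_{(0)}=0$ to dispose of the only potentially noncommuting pair within the annihilator block — is the right point of care and fills in the detail the paper leaves implicit.
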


Finally, let us recall the conformal degree of the Heisenberg vertex algebra. For a state 
\[
a = b^{\qa_1}_{(-k_1-1)}\dots b^{\qa_n}_{(-k_n-1)}\vac\in V,
\]
we call the integer
\[
\Qd_a = n+\sum_{i=1}^n k_i
\]
the conformal weight of $a$. Let us denote by $V^\Qd$ the subspace of $V$ consisting of homogeneous states of conformal weight $\Qd$, and we set $V^\Qd = 0$ if $\Qd<0$. An endomorphism is called of conformal degree $d$ if it maps $V^\Qd$ to $V^{\Qd+d}$ for any $\Qd\geq 0$. For example, it is easy to see that the mode $b^\qa_{(n)}$ is of conformal degree $-n$ for any $n\in\mathbb Z$.
\begin{Lem}
\label{RCON}
If $a\in V^\Qd$, then $a_{(n)}$ is of conformal degree $\Qd-n-1$ for any $n\in\mathbb Z$. In particular, if $a\in V^{\Qd_1}$ and $b\in V^{\Qd_2}$, then $:ab:\in V^{\Qd_1+\Qd_2}$.
\end{Lem}
\begin{proof}
This is an immediate consequence of Proposition\, \ref{AP}.
\end{proof}

\subsection{Quantum Hamiltonian structure}
In this section, we construct a deformation quantization of Hamiltonian structures of hydrodynamic types using the Heisenberg vertex algebra. To this end, we first recall the notion of a (classical) Hamiltonian structure, and we refer to \cite{liu2018lecture} for an introduction to the general theory of Hamiltonian structures.

Let $M$ be an $N$-dimensional smooth manifold, and let us denote by $\mathcal A$ the ring of differential polynomials of the infinite jet bundle $J^\infty M$. Locally, if we choose a coordinate system $(v^1,\dots,v^N)$ of $M$, then 
\[
\mathcal A = \mathbb C\left[v^{\qa,p}\colon \qa = 1,\dots,N;\,p\geq 0\right].
\]
From now on, we will fix such a local coordinate system. Define a derivation $\qp_x$ of $\mathcal A$ by 
\[
\qp_x = \sum_{p\geq 0} v^{\qa,p+1}\diff{}{v^{\qa,p}},
\]
here and henceforth we assume the summation over repeated lower and upper Greek indices. Then we denote by $\mathcal F = \mathcal A/\qp_x\mathcal A$ the space of local functionals, and for $f\in\mathcal A$, we denote by $\int f$ its class in $\mathcal F$. The ring $\mathcal A$ is graded by the differential degree defined by
\[
\deg_{\qp_x}v^{\qa,p} = p,
\]
and we denote by $\mathcal A_d$ the space of homogeneous differential polynomials of differential degree $d$.
\begin{Rem}
    In the literature, the ring of differential polynomials are defined by a certain completion of $\mathcal A$. For our purpose, this completion is not needed. 
\end{Rem}

A Hamiltonian structure is a Lie algebra structure on $\mathcal F$ whose Lie bracket is of the following special form:
\begin{equation*}
\{F,G\} = \int \sum_{k\geq 0}\vard{F}{v^\qa}P^{\qa\qb}_k\qp_x^k\vard{G}{v^\qb},\quad P^{\qa\qb}_k\in\mathcal A,\quad \forall F,G\in\mathcal F,
\end{equation*}
here the variational derivatives are defined by
\[
    \vard{F}{v^\qa} = \sum_{p\geq 0}(-\qp_x)^p\diff{f}{v^{\qa,p}},\quad F = \int f.
\]
Let us consider the following special Hamiltonian structure which is called the Hamiltonian structure of hydrodynamic type:
\begin{equation}
    \label{AN}
    \{F,G\} = \int\vard{F}{v^\qa}\eta^{\qa\qb}\qp_x\vard{G}{v^\qb},
\end{equation}
where $(\eta^{\qa\qb})$ is a non-degenerate symmetric constant matrix. It is proved \cite{getzler2002darboux} that the bracket \eqref{AN} serves as a universal local model for Hamiltonian structures with hydrodynamic leading terms. Such Hamiltonian structures are ubiquitous in the theory of integrable hierarchies. For example, all Dubrovin-Zhang hierarchies \cite{dubrovin2001normal} and all double ramification hierarchies \cite{buryak2015double} possess Hamiltonian structures of these kinds.

Let us fix a Hamiltonian structure of the form \eqref{AN} and let $V$ be the state space of the Heisenberg vertex algebra of rank $N$ as described in Sect.\,\ref{RAB}. Recall that we have states $b^1,\dots, b^N\in V$ whose modes satisfy the commutation relation
\[
    \left[b^\qa_{(n)},b^\qb_{(m)}\right] = \hbar\,\eta^{\qa\qb}\,n\,\qd_{m+n,0},\quad \qa,\qb = 1,\dots,N,\quad n,m\in\mathbb Z.
\]
The key observation is that we can identify the state space $V$ and the space of differential polynomials $\mathcal A$ via the $\hbar$-linear map given by
\begin{equation}
    \label{AU}
\varphi: \mathcal A[\hbar]\to V\colon v^{\qa_1,k_1}\dots v^{\qa_n,k_n}\mapsto k_1!\dots k_n! b^{\qa_1}_{(-k_1-1)}\dots  b^{\qa_n}_{(-k_n-1)}\vac.
\end{equation}
\begin{Lem}
    The map $\varphi$ defined in \eqref{AU} is an isomorphism between vector spaces. For any $f\in\mathcal A[\hbar]$, we have
    \begin{equation}
        \label{AS}
        \varphi\kk{\qp_x f} = T\varphi\kk{f},
    \end{equation}
    and for any $p\geq 0$, we have
    \begin{equation}
        \label{AT}
        \varphi\kk{v^{\qa,p}f} = p!\, b^\qa_{(-p-1)}\varphi(f),\quad \hbar\, \eta^{\qa\qb}\varphi\kk{\diff{f}{v^{\qb,p}}} = \frac{1}{(p+1)!}b^\qa_{(p+1)}\varphi(f).
    \end{equation}
\end{Lem}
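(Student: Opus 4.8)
The statement to prove is the Lemma asserting that $\varphi$ is a vector space isomorphism satisfying \eqref{AS} and \eqref{AT}. The plan is to verify each claim directly from the explicit description of $\varphi$ and the structure of the Heisenberg vertex algebra.

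First I would establish that $\varphi$ is a bijection. Both sides are free $\mathbb C[\hbar]$-modules: $\mathcal A[\hbar]$ has the monomial basis $\{v^{\qa_1,k_1}\dots v^{\qa_n,k_n}\}$ taken over multisets of pairs $(\qa_i,k_i)$, while $V$ has, by the description recalled just before Proposition \ref{AP}, the PBW-type basis $\{b^{\qa_1}_{(-k_1-1)}\dots b^{\qa_n}_{(-k_n-1)}\vac\}$ over the same index set (the ordering ambiguity is harmless since the creators $b^\qa_{(-k-1)}$ all commute modulo terms annihilated by $\vac$, and in fact commute outright because $[b^\qa_{(-k-1)},b^\qb_{(-l-1)}] = \hbar\eta^{\qa\qb}(-k-1)\qd_{-k-l-2,0} = 0$ for $k,l\geq 0$). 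Since $\varphi$ sends basis elements to basis elements up to the nonzero scalar $k_1!\cdots k_n!$, it is an $\hbar$-linear isomorphism.

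Next I would prove \eqref{AT}. The second formula is a restatement of the commutation relation: acting with $b^\qa_{(p+1)}$ on $\varphi(f) = k_1!\cdots k_n!\, b^{\qa_1}_{(-k_1-1)}\cdots b^{\qa_n}_{(-k_n-1)}\vac$, one commutes $b^\qa_{(p+1)}$ to the right past each creator, picking up $\hbar\eta^{\qa\qb_i}(p+1)\qd_{p+1-k_i-1,0} = \hbar\eta^{\qa\qb_i}(p+1)\qd_{p,k_i}$, and annihilating the vacuum at the end; comparing with $\diff{f}{v^{\qb,p}}$ acting on the monomial $f$ (which deletes one factor $v^{\qb,p}$ and multiplies by its multiplicity) and tracking the factorials $k_i!$ gives exactly the coefficient $\frac{1}{(p+1)!}$ after noting $(p+1)\cdot p!/(p+1)! = 1$. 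Some care is needed with the repeated-index/multiplicity bookkeeping, but it is straightforward once one fixes a monomial and sums the contributions. The first formula in \eqref{AT} follows immediately from the definition of $\varphi$ on monomials: prepending the factor $v^{\qa,p}$ to a monomial multiplies the image by $p!\,b^\qa_{(-p-1)}$.

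For \eqref{AS}, I would argue it suffices to check $\varphi(\qp_x f) = T\varphi(f)$ on monomials $f = v^{\qa_1,k_1}\cdots v^{\qa_n,k_n}$ by $\hbar$-linearity. Expand $\qp_x f = \sum_i v^{\qa_1,k_1}\cdots v^{\qa_i,k_i+1}\cdots v^{\qa_n,k_n}$ using the Leibniz rule; applying $\varphi$ gives $\sum_i (\dots)(k_i+1)!(\dots)\,b^{\qa_1}_{(-k_1-1)}\cdots b^{\qa_i}_{(-k_i-2)}\cdots\vac$. On the other side, $T$ is a derivation of the normal-ordered product with $T\vac = 0$ and $T b^\qa_{(-k-1)}\vac = (k+1)b^\qa_{(-k-2)}\vac$ (equivalently $[T, b^\qa_{(-k-1)}] = (k+1) b^\qa_{(-k-2)}$, which follows from $[T,Y(a,z)] = \qp_z Y(a,z)$ applied to $a = b^\qa$, giving $[T, b^\qa_{(m)}] = -m\, b^\qa_{(m-1)}$); applying $T$ to $\varphi(f)$ by the Leibniz rule reproduces the same sum term by term, matching the factorials since $(k_i+1)!/k_i! = k_i+1$. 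The main obstacle in the whole argument is purely notational: keeping the combinatorial bookkeeping of repeated indices and the cascade of factorials $k_1!\cdots k_n!$ consistent across the three identities; there is no conceptual difficulty, as everything reduces to the defining commutation relation \eqref{AK}, the vacuum axiom, and the translation axiom of the vertex algebra.
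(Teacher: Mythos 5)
Your verification is correct, and it is exactly the routine check the paper has in mind: the paper's own ``proof'' simply states that these facts are straightforward to verify and omits all details. Your direct computation on monomials --- using that the creators $b^\qa_{(-k-1)}$ commute, that $b^\qa_{(p+1)}$ annihilates the vacuum after picking up $\hbar\eta^{\qa\qb_i}(p+1)\qd_{p,k_i}$ from each commutator, and that $[T,b^\qa_{(m)}]=-m\,b^\qa_{(m-1)}$ --- supplies precisely the omitted bookkeeping, with the factorials matching as you indicate.
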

\begin{proof}
    These facts are straightforward to verify, and we omit the details.
\end{proof}

It follows from \eqref{AS} that $\varphi$ induces an isomorphism between quotient spaces $\mathcal F[\hbar]$ and $V/TV$, and we will still denote this isomorphism by $\varphi$. It is well-known that $V/TV$ admits a natural Lie algebra structure given by $0$-th modes.
\begin{Lem}[Theorem 4.1.2 of \cite{frenkel2004vertex}]
Let $\pi\colon V\to V/TV$ be the projection map. Then the bracket
\begin{equation}
    \label{AV}
[A,B] = \pi\left(\pi^{-1}(B)_{(0)}\pi^{-1}(A)\right),\quad A,B\in V/TV
\end{equation} 
is well-defined, and defines a Lie algebra structure on $V/TV$.
\end{Lem}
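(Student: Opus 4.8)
The plan is to verify, for the bracket in \eqref{AV}, the three defining properties of a Lie algebra structure on $V/TV$: that it is independent of the chosen lifts, that it is skew-symmetric, and that it satisfies the Jacobi identity. All of this follows formally from the $\ql$-bracket identities recalled above together with the translation axiom. Extracting components of $[T,Y(a,z)]=\qp_zY(a,z)$ and of $Y(Ta,z)=\qp_zY(a,z)$ gives $[T,a_{(n)}]=-n\,a_{(n-1)}$ and $(Ta)_{(n)}=-n\,a_{(n-1)}$ for all $n\in\mathbb Z$; in particular $[T,a_{(0)}]=0$, so each operator $a_{(0)}$ maps $TV$ into $TV$ and descends to $\mathrm{End}(V/TV)$, and $(Ta)_{(0)}=0$, so the descended operator depends only on the class $\pi(a)$. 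Well-definedness of \eqref{AV} is then immediate: changing the lift of $B$ alters $\pi^{-1}(B)_{(0)}$ by an operator of the form $(Tc)_{(0)}=0$, hence changes nothing, while changing the lift of $A$ by $Tc$ alters $\pi^{-1}(B)_{(0)}\pi^{-1}(A)$ by $\pi^{-1}(B)_{(0)}(Tc)\in TV$, which vanishes after applying $\pi$.

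For skew-symmetry I would compare the $\ql^0$-coefficients in the skew-symmetry relation $[b_\ql a]=-[a_{-\ql-T}b]$: the left side contributes $b_{(0)}a$, and since the $\ql^0$-part of $(-\ql-T)^n$ is $(-T)^n$, the right side contributes $-\sum_{n\geq0}\frac{(-T)^n}{n!}a_{(n)}b=-a_{(0)}b+T(\cdots)$. Hence $b_{(0)}a+a_{(0)}b\in TV$, i.e.\ $[A,B]=-[B,A]$ after projecting.

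For the Jacobi identity, fix lifts $a,b,c$ of $A,B,C$; then $b_{(0)}a$, $c_{(0)}b$, $a_{(0)}c$ are lifts of $[A,B]$, $[B,C]$, $[C,A]$ by the very definition \eqref{AV}, so
\[
[[A,B],C]+[[B,C],A]+[[C,A],B]=\pi\bigl(c_{(0)}(b_{(0)}a)\bigr)+\pi\bigl(a_{(0)}(c_{(0)}b)\bigr)+\pi\bigl(b_{(0)}(a_{(0)}c)\bigr).
\]
The remaining input is the commutator formula $[a_{(0)},b_{(0)}]=(a_{(0)}b)_{(0)}$ on $V$, which comes from putting $\ql=\mu=0$ in the $\ql$-bracket Jacobi identity $[a_\ql[b_\mu c]]=[b_\mu[a_\ql c]]+[[a_\ql b]_{\ql+\mu}c]$. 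Rewriting the first summand via skew-symmetry and this formula as $-\pi\bigl((b_{(0)}a)_{(0)}c\bigr)=\pi\bigl([a_{(0)},b_{(0)}]c\bigr)=\pi(a_{(0)}b_{(0)}c)-\pi(b_{(0)}a_{(0)}c)$ and substituting, the sum collapses to $\pi\bigl(a_{(0)}(b_{(0)}c+c_{(0)}b)\bigr)$, which is zero because $b_{(0)}c+c_{(0)}b\in TV$ (skew-symmetry) and $a_{(0)}$ preserves $TV$. I expect the only genuinely delicate point to be the sign and argument-order bookkeeping: one must track that \eqref{AV} has the two arguments swapped relative to the more familiar bracket $\overline{a_{(0)}b}$, so as to land on exactly the skew-symmetric, Jacobi-satisfying convention; everything else is routine formal manipulation of the $\ql$-bracket identities.
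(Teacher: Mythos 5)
Your proof is correct and complete: the well-definedness via $[T,a_{(0)}]=0$ and $(Ta)_{(0)}=0$, the skew-symmetry from the $\ql^0$-coefficient of $[b_\ql a]=-[a_{-\ql-T}b]$, and the Jacobi identity from the $n=m=0$ commutator formula $[a_{(0)},b_{(0)}]=(a_{(0)}b)_{(0)}$ are exactly the standard argument. The paper itself offers no proof of this lemma (it is quoted as Theorem 4.1.2 of Frenkel--Ben-Zvi), so there is nothing to compare against; your derivation is the one found in that reference and follows from the $\ql$-bracket identities the paper recalls.
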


Via the isomorphism $\varphi$, the above bracket on $V/TV$ can be identified to give a Lie bracket on $\mathcal F[\hbar]$. Our first goal is to compute this bracket in terms of differential polynomials explicitly and hence to prove that this bracket gives a deformation quantization of the Hamiltonian structure \eqref{AN}. For this purpose, let us compute the modes $\varphi(f)_{(n)}$ for $f\in\mathcal A$ and $n\geq 0$.

Recall that $V$ is a free $\mathbb C[\hbar]$-module, hence we can decompose 
\[
    V = \bigoplus_{n\geq 0}V_n,\quad V_n = \hbar^n V_0,
\]
where $V_0$ is the $\mathbb C$-vector space spanned by states of the form $ b^{\qa_1}_{(-k_1-1)}\dots b^{\qa_n}_{(-k_n-1)}\vac$. Then it is clear that for $a = b^{\qa_1}_{(-k_1-1)}\dots b^{\qa_n}_{(-k_n-1)}\vac\in V_0$, we have $(\hbar^m a)_{(n)} = \hbar^m\cdot a_{(n)}$, and furthermore the modes of $a$ can be decomposed as
\[
a_{(n)} = \sum_{i\geq 0}a_{(n)}^{[i]}, \quad a_{(n)}^{[i]}\colon V_0\to V_i,\quad n\in\mathbb Z.
\]
It follows from Corollary\, \ref{AO} that these modes are of the forms 
\[
    a_{(n)} = \sum_{j_1\leq\dots\leq j_n} C^{j_1,\dots,j_n}_{\qb_1,\dots,\qb_n} b^{\qb_1}_{(j_1)}\dots b^{\qb_n}_{(j_n)},
    \]
here the coefficients $C^{j_1,\dots,j_n}_{\qb_1,\dots,\qb_n}$ are constant numbers independent of $\hbar$. Then the $\hbar$-dependence of modes is given by the commutation relation \eqref{AK}, hence we arrive at
\begin{equation}
\label{RBA}
 a_{(n)}^{[i]} = \sum_{(j_1,\dots,j_n)\in J^i} C^{j_1,\dots,j_n}_{\qb_1,\dots,\qb_n} b^{\qb_1}_{(j_1)}\dots b^{\qb_n}_{(j_n)},
    \end{equation}
where we use $J^i$ to denote the index set
\[
\{
    (j_1,\dots,j_n)\in\mathbb Z^n\colon j_1\leq\dots\leq j_{n-i}\leq 0\leq j_{n-i+1}\leq\dots\leq j_n
\}.
\]

\begin{Lem}
    \label{AQ}
    Fix a state $a = b^{\qa_1}_{(-k_1-1)}\dots b^{\qa_n}_{(-k_n-1)}\vac\in V_0$ for some $n\geq 0$ and $k_i\geq 0$. Then we have
    \[
    a_{(-1)}^{[0]} =  b^{\qa_1}_{(-k_1-1)}\dots b^{\qa_n}_{(-k_n-1)}.
    \]
\end{Lem}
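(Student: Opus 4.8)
The plan is to trace through the definition of the $(-1)$-st product using the explicit state-field correspondence from Proposition \ref{AP}, and then isolate the piece that lands in $V_0$, i.e.\ the piece that involves no contractions via the commutation relation \eqref{AK}. By Proposition \ref{AP}, for $a = b^{\qa_1}_{(-k_1-1)}\dots b^{\qa_n}_{(-k_n-1)}\vac$ we have
\[
Y(a,z) = (-1)^{\sum k_i}\sum_{m_1,\dots,m_n}\binom{m_1}{k_1}\dots\binom{m_n}{k_n}\,:b^{\qa_1}_{(m_1-k_1)}\dots b^{\qa_n}_{(m_n-k_n)}:\,z^{-n-\sum m_i},
\]
so $a_{(-1)}$ is the coefficient of $z^{0}$, which forces $\sum m_i = n$. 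Equivalently, writing $l_i = m_i - k_i$, the operator $a_{(-1)}$ is a sum of normally ordered monomials $:b^{\qa_1}_{(l_1)}\dots b^{\qa_n}_{(l_n)}:$ over all $(l_1,\dots,l_n)$ with $\sum l_i = n - \sum k_i$, weighted by the product of binomial coefficients $\binom{l_i+k_i}{k_i}$.

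Next I would identify the $\hbar^{0}$-component. In the decomposition $a_{(-1)} = \sum_{i\ge 0} a_{(-1)}^{[i]}$ introduced above, the component $a_{(-1)}^{[0]}$ collects exactly those terms in which every index $j_s$ appearing in the (reordered, nondecreasing) monomial is negative — any commutator from \eqref{AK} needed to normally order a monomial produces a factor of $\hbar$, so $a_{(-1)}^{[0]}$ is precisely the part where no such commutator is ever needed, i.e.\ the monomials $:b^{\qa_1}_{(l_1)}\dots b^{\qa_n}_{(l_n)}:$ with all $l_s < 0$. Under the constraint $\sum l_s = n - \sum k_s$ with each $l_s \le -1$, I would argue that the only solution with all $l_s \le -1$ and with nonzero binomial weight is $l_s = -k_s - 1$ for every $s$: indeed $\sum l_s = -n - \sum k_s$ would be forced if each $l_s$ could be as small as $-k_s-1$, but we need $\sum l_s = n-\sum k_s$, so we must push each $l_s$ up as far as possible; the binomial $\binom{l_s+k_s}{k_s}$ vanishes unless $l_s+k_s \ge k_s$ (giving $l_s \ge 0$, excluded) or $l_s + k_s < 0$, and in the latter range the surviving terms with $l_s \le -1$ combined with the total-degree constraint pin down $l_s = -k_s-1$. (This is the one genuinely fiddly combinatorial check; I would do it carefully by bounding $\sum l_s$ from both sides.) The weight of this single surviving term is $\prod_s \binom{-1}{k_s} = \prod_s (-1)^{k_s}$, which cancels the prefactor $(-1)^{\sum k_i}$, leaving coefficient $1$.

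The main obstacle is the combinatorial bookkeeping in the previous step: one must be sure that after expanding $Y(a,z)$ via Proposition \ref{AP} and imposing $\sum m_i = n$, there is genuinely a unique monomial all of whose modes are strictly negative, and that monomial is $b^{\qa_1}_{(-k_1-1)}\dots b^{\qa_n}_{(-k_n-1)}$ with coefficient exactly $1$. An alternative, perhaps cleaner route that I would consider as a cross-check is induction on $n$ mirroring the proof of Proposition \ref{AP}: using $b^\qa_{(-k-1)} = \tfrac{1}{k!}(T^k b^\qa)_{(-1)}$ and the identity \eqref{AL} for $Y(:ab:,z)$, the $(-1)$-st product of a normally ordered product $:a' a:$ splits as $a'_{(-1)}{}_+ \,a_{(-1)} + a_{(-1)} a'_{(-1)}{}_-$ type terms, and one checks that the $\hbar^0$-part simply prepends the creation operator $b^{\qa}_{(-k-1)}$ to $a_{(-1)}^{[0]}$ since the annihilator part $b^{\qa}_{(m-k)}$ with $m\ge 0$ contributes only through commutators and hence only to positive powers of $\hbar$. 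Either way the proof is short once the mechanism ``contractions $\leftrightarrow$ powers of $\hbar$'' is made precise.
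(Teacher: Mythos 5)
Your overall strategy is the same as the paper's: expand $a_{(-1)}$ via Proposition~\ref{AP}, identify the $\hbar^0$-component with the contraction-free (creator-only) monomials, and use the degree constraint plus vanishing of binomial coefficients to isolate a unique surviving term. However, there is a concrete sign error that breaks exactly the step you yourself flag as the delicate one. Since $Y(a,z)=\sum_m a_{(m)}z^{-m-1}$, the operator $a_{(-1)}$ is the coefficient of $z^{0}$, and the exponent in Proposition~\ref{AP} is $-n-\sum m_i$; hence the constraint is $\sum m_i=-n$, not $\sum m_i=n$, and correspondingly $\sum l_i=-n-\sum k_i$. With your constraint $\sum l_i=n-\sum k_i$, the condition $l_i\le -1$ for all $i$ is generically incompatible with it (already for $n=2$, $k_1=k_2=0$ you would need $l_1+l_2=2$ with both $l_i\le-1$), so the argument as written would output $a_{(-1)}^{[0]}=0$. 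You visibly sense the tension (``$\sum l_s=-n-\sum k_s$ would be forced\dots but we need $\sum l_s=n-\sum k_s$'') and the attempted repair by ``pushing each $l_s$ up'' does not resolve it. With the correct constraint the check becomes immediate and is precisely the paper's proof: $\binom{m_i}{k_i}=0$ for $0\le m_i<k_i$, the case $m_i=k_i$ contributes the zero operator $b^{\qa_i}_{(0)}$, so every nonvanishing creator-only summand has $m_i\le -1$, and $\sum m_i=-n$ then forces $m_i=-1$ for all $i$, leaving the single monomial $b^{\qa_1}_{(-k_1-1)}\dots b^{\qa_n}_{(-k_n-1)}$ with coefficient $(-1)^{\sum k_i}\prod_i\binom{-1}{k_i}=1$.

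Two smaller remarks. Your inductive cross-check via $b^{\qa}_{(-k-1)}=\tfrac{1}{k!}(T^kb^{\qa})_{(-1)}$ and \eqref{AL} would also work, mirroring the proof of Proposition~\ref{AP}. The paper additionally observes that the lemma follows at once from the vacuum axiom $a_{(-1)}\vac=a$: every summand of $a_{(-1)}$ containing an annihilator kills $\vac$, so $a_{(-1)}^{[0]}\vac=a$, which determines the creator-only part uniquely. Either of these routes avoids the bookkeeping entirely.
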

\begin{proof}
    It follows from Proposition\, \ref{AP} that 
    \[
        a_{(-1)} = (-1)^{\sum k_i}\sum_{m_1+\dots+m_n=-n}\binom{m_1}{k_1}\dots\binom{m_n}{k_n}:b^{\qa_1}_{(m_1-k_1)}\dots b^{\qa_n}_{(m_n-k_n)}:,
    \]
    and by further using \eqref{RBA}, we arrive at 
    \[
        a_{(-1)}^{[0]} = (-1)^{\sum k_i}\sum_{\substack{m_1+\dots+m_n=-n,\\ m_i\leq k_i}}\binom{m_1}{k_1}\dots\binom{m_n}{k_n}:b^{\qa_1}_{(m_1-k_1)}\dots b^{\qa_n}_{(m_n-k_n)}:.
    \]
    Since $b^{\qa_i}_{(0)} = 0$, it follows that the non-vanishing summand is with indices $m_i\leq -1$. Therefore, the only non-vanishing summand is given by the term with indices $m_1 = \dots=m_n=-1$, namely,
    \[
        a_{(-1)}^{[0]} = b^{\qa_1}_{(-k_1-1)}\dots b^{\qa_n}_{(-k_n-1)}.
    \]
    The lemma is proved.
\end{proof}

We have the following immediate consequence.
\begin{Cor}
    \label{AR}
    For $f,g\in \mathcal A$, we have 
    \[
    \varphi(f)_{(-1)}^{[0]}\varphi(g)=\varphi(fg).
    \]
\end{Cor}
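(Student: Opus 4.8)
The plan is to verify the identity on monomial generators and extend by $\mathbb{C}$-bilinearity, which is legitimate since $\varphi$ is $\hbar$-linear (hence $\mathbb{C}$-linear), the assignment $a\mapsto a_{(-1)}^{[0]}$ is $\mathbb{C}$-linear, and multiplication in $\mathcal A$ is bilinear. So write $f = v^{\qa_1,k_1}\cdots v^{\qa_n,k_n}$ and $g = v^{\qb_1,\ell_1}\cdots v^{\qb_m,\ell_m}$.

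First I would unwind the definition \eqref{AU}: $\varphi(f) = k_1!\cdots k_n!\,a$ with $a = b^{\qa_1}_{(-k_1-1)}\cdots b^{\qa_n}_{(-k_n-1)}\vac\in V_0$. Since taking the $\hbar^0$-component of the $(-1)$-product commutes with the scalar $k_1!\cdots k_n!$, Lemma \ref{AQ} gives
\[
\varphi(f)_{(-1)}^{[0]} = k_1!\cdots k_n!\; b^{\qa_1}_{(-k_1-1)}\cdots b^{\qa_n}_{(-k_n-1)}\in\mathrm{End}(V).
\]
Applying this operator to $\varphi(g) = \ell_1!\cdots\ell_m!\, b^{\qb_1}_{(-\ell_1-1)}\cdots b^{\qb_m}_{(-\ell_m-1)}\vac$ produces $k_1!\cdots k_n!\,\ell_1!\cdots\ell_m!$ times the state $b^{\qa_1}_{(-k_1-1)}\cdots b^{\qa_n}_{(-k_n-1)}b^{\qb_1}_{(-\ell_1-1)}\cdots b^{\qb_m}_{(-\ell_m-1)}\vac$, which is exactly $\varphi(fg)$ by applying \eqref{AU} to the monomial $fg$.

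The only subtlety worth spelling out is that $fg\in\mathcal A$ is symmetric under permuting its factors whereas the state $\varphi(fg)$ is written with a fixed ordering of creation operators; these coincide because all creation operators commute among themselves — from \eqref{AK}, $[b^\qa_{(-p-1)},b^\qb_{(-q-1)}]=\hbar\,\eta^{\qa\qb}(-p-1)\qd_{-p-q-2,0}=0$ since $-p-q-2<0$ for $p,q\geq 0$. Thus the target state does not depend on the ordering and the identity holds. I do not expect any genuine obstacle here: the corollary is an immediate combination of Lemma \ref{AQ} with the explicit form of $\varphi$, the commutativity of creation operators being precisely what makes $\varphi$ compatible with the commutative multiplication on $\mathcal A$.
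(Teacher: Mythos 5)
Your proof is correct and is exactly the argument the paper has in mind: the paper states this corollary as an ``immediate consequence'' of Lemma \ref{AQ} with no written proof, and your write-up simply fills in the routine details (reduction to monomials, the explicit form \eqref{AU} of $\varphi$, and the vanishing of commutators between creation operators). No issues.
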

\begin{proof}
This follows directly from the definition of $\varphi$. The corollary is proved.
\end{proof}

Now we are ready to prove the main result of this section.
\begin{Th}
    \label{BA}
Let $f,g\in\mathcal A$ be two differential polynomials. Then for $n\geq 0$, we have
\begin{align*}
    \varphi^{-1}\kk{\varphi(f)_{(n)}\varphi(g)} = \sum_{m\geq 1}&\frac{\hbar^m}{m!}\sum_{\substack{r_1,\dots,r_m\geq 0\\ s_1,\dots,s_m\geq 0\\ 2m-1+\sum r_i+s_i\geq n}}\frac{(-1)^{\sum r_i}\prod (r_i+s_i+1)!\eta^{\qa_i\qb_i}}{(2m-1-n+\sum r_i+\sum s_i)!}\\
    &\cdot\frac{\qp^m g}{v^{\qb_1,s_1}\dots v^{\qb_m,s_m}} \qp_x^{2m-1-n+\sum r_i+\sum s_i}\kk{\frac{\qp^m f}{\qp v^{\qa_1,r_1}\dots\qp v^{\qa_m,r_m}}}.
\end{align*}
\end{Th}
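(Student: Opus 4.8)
The plan is to reduce the computation of $\varphi(f)_{(n)}\varphi(g)$ to the action of the explicit mode operators furnished by Proposition \ref{AP}, and then translate everything back to differential polynomials via the two identities in \eqref{AT}. First I would treat $\varphi$ as $\mathbb{C}$-linear in $f$ and $g$ (the $\hbar$ will be tracked separately), so that by bilinearity of $a_{(n)}b$ it suffices to verify the formula when $f$ and $g$ are monomials $f = v^{\alpha_1,r_1}\cdots v^{\alpha_p,r_p}$ and $g = v^{\beta_1,s_1}\cdots v^{\beta_q,s_q}$; in fact, since the formula is manifestly a sum of bidifferential operators applied to $f$ and $g$, it is enough to check the pairing on such monomials and confirm the combinatorial prefactors match. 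Under $\varphi$ these become, up to the factorial normalization in \eqref{AU}, states of the form $b^{\alpha_1}_{(-r_1-1)}\cdots b^{\alpha_p}_{(-r_p-1)}\vac$ and similarly for $g$.

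Next I would expand $Y(\varphi(f),z)$ using Proposition \ref{AP}, extract the coefficient $\varphi(f)_{(n)}$ of $z^{-n-1}$, and apply it to $\varphi(g)$. The point is that $\varphi(g)$ is a product of creators acting on $\vac$, so the only terms that survive are those in which each annihilator $b^{\alpha_i}_{(m_i - r_i)}$ (with $m_i - r_i \geq 0$) in the normally ordered expression for $\varphi(f)_{(n)}$ contracts — via the commutator \eqref{AK} — against exactly one creator $b^{\beta_j}_{(-s_j-1)}$ coming from $\varphi(g)$, producing a factor $\hbar\,\eta^{\alpha_i\beta_j}(m_i - r_i)\delta_{m_i - r_i,\,s_j+1}$, while the remaining creators in $\varphi(f)_{(n)}$ land harmlessly on $\vac$ to rebuild the leftover part of $\varphi(f)$. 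The number $m$ of such contractions is exactly the power of $\hbar$; choosing which $m$ of the $b$'s in $f$ contract and which $m$ of the $b$'s in $g$ they hit accounts for the $1/m!$ together with the $m$-fold partial derivatives $\partial^m f/\partial v^{\alpha_1,r_1}\cdots$ and $\partial^m g/\partial v^{\beta_1,s_1}\cdots$ in the stated formula. The binomial coefficients $\binom{m_i}{r_i}$ from Proposition \ref{AP}, the normalization factorials $r_i!$, $s_j!$ from $\varphi$, and the mode factors $(m_i - r_i) = s_j + 1$ should collect into the prefactor $(-1)^{\sum r_i}\prod (r_i + s_i + 1)!$; and summing over the free indices $m_i$ subject to $\sum m_i$ being fixed by the $z$-degree $-n-1$ produces, after a Vandermonde-type binomial summation, the single power $\partial_x^{\,2m-1-n+\sum r_i + \sum s_i}$ with denominator $(2m-1-n+\sum r_i+\sum s_i)!$ — here I would use \eqref{AT} in the form $\varphi(\partial_x^k h) = T^k\varphi(h)$ to recognize the accumulated $T$'s as $\partial_x$ applied to the appropriate derivative of $f$, and the constraint $2m-1+\sum r_i + s_i \geq n$ is just the requirement that this exponent be nonnegative.

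A cleaner alternative, which I would pursue in parallel and probably prefer for the write-up, is to avoid the bare mode bookkeeping and instead induct on the number of $b$-factors in $f$, using Lemma \ref{AQ} and Corollary \ref{AR} for the base of the recursion and the commutator relation $[\varphi(v^{\alpha,r}f')_{(n)}, -]$ peeled off one creator at a time via $b^\alpha_{(-r-1)} = \tfrac{1}{r!}(T^r b^\alpha)_{(-1)}$ together with \eqref{AL}; this organizes the ``contract or pass through'' dichotomy structurally rather than index-by-index. The main obstacle in either route is the purely combinatorial identity that collapses the multiple sum over the $m_i$ (equivalently over how the derivatives $\partial_x$ are distributed among the $m$ contracted slots) into the single closed-form coefficient $\frac{(-1)^{\sum r_i}\prod(r_i+s_i+1)!}{(2m-1-n+\sum r_i+\sum s_i)!}$; getting the factorials, the sign, and the total derivative order exactly right — and checking consistency with the known classical limit and with the special cases $n=0$ (Theorem \ref{AF}) — is where the real work lies.
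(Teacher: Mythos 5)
Your overall strategy --- expand $\varphi(f)_{(n)}$ via Proposition~\ref{AP}, act on $\varphi(g)$, and organize the surviving terms by the number $m$ of Wick contractions --- is viable and genuinely different from the paper's route, but as written it stops exactly at the step that carries all of the difficulty. Your accounting of the contracted pairs is correct: each contraction contributes $\hbar\,\eta^{\alpha_i\beta_j}(m_i-r_i)\delta_{m_i-r_i,\,s_j+1}$, and $\binom{m_i}{r_i}\,r_i!\,s_j!\,(s_j+1)=(r_i+s_j+1)!$ does reproduce the stated prefactor together with the sign from Proposition~\ref{AP}. The gap is in the uncontracted part. The claim that the remaining creators of $\varphi(f)_{(n)}$ ``land harmlessly on $\vac$ to rebuild the leftover part of $\varphi(f)$'' is not correct as stated: Proposition~\ref{AP} sums each uncontracted mode $b^{\alpha_i}_{(m_i-r_i)}$ over all $m_i<0$ with weight $\binom{m_i}{r_i}$, so what one obtains is a binomially weighted sum over all redistributions of extra $x$-derivatives among the leftover factors of $f$, constrained only by the total $z$-degree. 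Recognizing that sum as $\tfrac{1}{N!}\partial_x^{N}$ applied to $\partial^m f/\partial v^{\alpha_1,r_1}\cdots\partial v^{\alpha_m,r_m}$, with $N=2m-1-n+\sum r_i+\sum s_i$, and matching the multiplicity factors for repeated monomial factors against the $1/m!$ and the $m$-fold derivatives, is the entire content of the theorem; deferring it as ``where the real work lies'' leaves the proof incomplete.

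It is worth knowing that the paper's argument is engineered precisely to avoid that resummation. It first shows that the $\hbar^m$-component $\varphi(f)_{(n)}^{[m]}$, written with symmetrized coefficient operators, is determined by its values on the $m$-factor monomials $v^{\beta_1,s_1}\cdots v^{\beta_m,s_m}$ --- this is where the $1/m!$ and the $m$-th derivatives of $g$ enter, via \eqref{AT} --- and then applies the skew-symmetry identity $A_{(n)}B=\sum_{k\ge 0}\tfrac{(-1)^{k+n+1}}{k!}T^k\bigl(B_{(k+n)}A\bigr)$ to trade this for modes of the monomial state acting on $\varphi(f)$. Since that monomial has exactly $m$ factors and one needs its $[m]$-component, every factor is forced to be an annihilator, so there are no leftover creators to resum, and the factor $\tfrac{1}{N!}\partial_x^{N}$ is delivered directly by the $T^k/k!$ in the skew-symmetry formula together with \eqref{AS}. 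If you keep your direct approach you must prove the multinomial identity honestly (and sanity-check it against the $m=1$, $n=0$ case, which must reduce to the classical bracket \eqref{AN}); otherwise I would recommend adopting the symmetrization-plus-skew-symmetry argument, which also makes your second, inductive alternative unnecessary.
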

\begin{proof}
    By a similar argument as in the proof of Lemma\, \ref{AQ}, we see that $\varphi(f)_{(n)}^{[0]} = 0$ for $n\geq 0$. Therefore, we only need to consider $\varphi(f)_{(n)}^{[m]}$ for $m\geq 1$. Without loss of generality, we simply assume 
    \[\varphi(f) = b^{\qa_1}_{(-k_1-1)}\dots b^{\qa_\ell}_{(-k_\ell-1)}\vac\]
    for some $\ell\geq 0$ and $k_i\geq 0$.
    Using \eqref{RBA}, we may write 
    \[
        \varphi(f)_{(n)}^{[m]} = \sum_{(j_1,\dots,j_\ell)\in J^m} C^{j_1,\dots,j_\ell}_{\qb_1,\dots,\qb_\ell} b^{\qb_1}_{(j_1)}\dots b^{\qb_\ell}_{(j_\ell)},\quad C^{j_1,\dots,j_\ell}_{\qb_1,\dots,\qb_\ell}\in\mathbb Z.
    \]
    It is more convenient to 
    separate annihilators and creators in the above expression, and rewrite it into the form
    \[
        \varphi(f)_{(n)}^{[m]} = \sum_{k_1,\dots,k_m>0} A^{k_1,\dots,k_m}_{\qa_1,\dots,\qa_m}b^{\qa_1}_{(k_1)}\dots b^{\qa_m}_{(k_m)},\quad A^{k_1,\dots,k_m}_{\qa_1,\dots,\qa_m}\in\mathrm{End}(V),
    \]
    here $A^{k_1,\dots,k_m}_{\qa_1,\dots,\qa_m}$ are operators given by linear combinations of creators of the form
    \[
    b^{\qb_1}_{(-j_1-1)}\dots b^{\qb_{\ell-m}}_{(-j_{\ell-m}-1)},\quad j_1,\dots,j_{\ell-m}\geq 0.
    \]
    Since annihilators commute with each other, we may and do  fix each operator $A^{k_1,\dots,k_m}_{\qa_1,\dots,\qa_m}$ uniquely by requiring  that it is symmetric with respect to indices $(\qa_1,k_1),\dots,(\qa_m,k_m)$.
    Therefore, by using \eqref{AT}, we arrive at
    \begin{align*}
        \varphi(f)_{(n)}^{[m]}\varphi(g) = \hbar^m \sum_{k_1,\dots,k_m> 0} &k_1!\dots k_m!\, \eta^{\qa_1\qg_1}\dots\eta^{\qa_m\qg_m}\\
        &\cdot A^{k_1,\dots,k_m}_{\qa_1,\dots,\qa_m}\varphi\kk{\frac{\qp^m g}{\qp v^{\qg_1,k_1-1}\dots\qp v^{\qg_m,k_m-1}}}.
    \end{align*}
    In particular, it follows that 
    \begin{align*}
        \varphi(f)_{(n)}^{[m]}\varphi(v^{\qb_1,s_1-1}\dots v^{\qb_m,s_m-1}) =& 
            \hbar^m\sum_{k_1,\dots,k_m> 0} k_1!\dots k_m!\, \eta^{\qa_1\qg_1}\dots\eta^{\qa_m\qg_m}\\
        &\cdot A^{k_1,\dots,k_m}_{\qa_1,\dots,\qa_m} \sum_{\qs\in S_m}\prod_{i=1}^m \qd^{\qb_{\qs(i)}}_{\qg_i}\qd_{k_i,s_{\qs(i)}}\vac\\
        =\,&\hbar^m m!\,s_1!\dots s_m!\, \eta^{\qa_1\qb_1}\dots\eta^{\qa_m\qb_m}A^{s_1,\dots,s_m}_{\qa_1,\dots,\qa_m}\vac.
    \end{align*}
   Note that we have used the fact that $A^{k_1,\dots,k_m}_{\qa_1,\dots,\qa_m}$ is symmetric with respect to indices. Hence, by using Lemma\, \ref{AQ}, we have
    \begin{align*}
    \varphi(f)_{(n)}^{[m]}\varphi(g) = \frac{\hbar^m}{m!}\sum_{s_1,\dots,s_m\geq 0}&\kk{\varphi(f)_{(n)}^{[m]}\varphi(v^{\qb_1,s_1}\dots v^{\qb_m,s_m})}_{(-1)}^{[0]}\\
    &\cdot\varphi\kk{\frac{\qp^m g}{\qp v^{\qb_1,s_1}\dots\qp v^{\qb_m,s_m}}}.
    \end{align*}
    It follows from the skew-symmetry property of the $\ql$-bracket that for general $A,B\in V$, 
    \[
    A_{(n)}B = \sum_{k\geq 0}\frac{(-1)^{k+n+1}}{k!}T^k\kk{B_{(k+n)}A},
    \]
    hence to further simplify the above expression for $\varphi(f)_{(n)}^{[m]}\varphi(g)$ we only need to compute
    \[
        \varphi(v^{\qb_1,s_1}\dots v^{\qb_m,s_m})_{(k+n)}^{[m]}\varphi(f),\quad k\geq 0.
    \]
    By using the definition \eqref{AU} of $\varphi$ and Proposition\, \ref{AP}, we have
    \begin{align*}
        \varphi(&v^{\qb_1,s_1}\dots v^{\qb_m,s_m})_{(k+n)}^{[m]} = (-1)^{\sum s_i}s_1!\dots s_m!\\
        &\sum_{\substack{n_1,\dots,n_m\geq 0\\\sum n_i+\sum s_i+2m = k+n+1}}\binom{n_1+s_1+1}{s_1} \dots\binom{n_m+s_m+1}{s_m}b^{\qb_1}_{(n_1+1)}\dots b^{\qb_m}_{(n_m+1)}.
    \end{align*}
    Then after a straightforward computation, we arrive at
    \begin{align*}
        &\varphi(f)_{(n)}^{[m]}\varphi(v^{\qb_1,s_1}\dots v^{\qb_m,s_m})\\
        &=\sum_{\substack{r_1,\dots,r_m\geq 0\\\sum r_i+\sum s_i+2m \geq n+1}}\frac{(-1)^{\sum r_i}\prod_{i=1}^m (r_i+s_i+1)!\eta^{\qa_i\qb_i}}{\left(\sum r_i+\sum s_i+2m -n-1\right)!} \\
        &\cdot T^{\sum r_i+\sum s_i+2m -n-1}\kk{\varphi\left(\frac{\qp^m f}{\qp v^{\qa_1,r_1}\dots\qp v^{\qa_m,r_m}}\right)}.
    \end{align*} 
    Finally, we arrive at the desired formula by using the identity \eqref{AS} and Corollary\, \ref{AR}. The theorem is proved.
\end{proof}

Recall that we have a Lie bracket on $V/TV$ defined by the formula \eqref{AV}. By identifying $V/TV$ with $\mathcal F[\hbar]$ via $\varphi$, it follows from the above theorem that we have a Lie bracket on $\mathcal F[\hbar]$ defined by 
\begin{align*}
    [F,G] = \int \sum_{m\geq 1}\frac{\hbar^m}{m!}&\sum_{\substack{r_1,\dots,r_m\geq 0\\ s_1,\dots,s_m\geq 0}}\frac{(-1)^{\sum r_i}\prod (r_i+s_i+1)!\eta^{\qa_i\qb_i}}{(2m-1+\sum r_i+\sum s_i)!}\\
    &\cdot\frac{\qp^m f}{v^{\qb_1,s_1}\dots v^{\qb_m,s_m}} \qp_x^{2m-1+\sum r_i+\sum s_i}\kk{\frac{\qp^m g}{\qp v^{\qa_1,r_1}\dots\qp v^{\qa_m,r_m}}},
\end{align*}
where $f,g\in\mathcal A[\hbar]$ are arbitrary choices of densities of $F$ and $G$, respectively. This Lie bracket defines a deformation quantization of the Hamiltonian structure \eqref{AN} in the sense that 
\begin{equation}
    \label{BP}
\lim_{\hbar\to 0}\frac{1}{\hbar}[F,G] = \{F,G\},\quad F,G\in\mathcal F.
\end{equation}
In particular, we have proved Theorem\,\ref{AF} and Proposition\,\ref{AG} as special cases of Theorem\,\ref{BA} for $N=1$ and $\eta = 1$.

As an application of the above formula, let us prove the following interesting fact which will be used later.
\begin{Prop}
    \label{BO}
    Let $a\in V$ be a state with $a_{(0)} = 0$. Then there exist polynomials $B_\qa(\hbar), C(\hbar)\in\mathbb C[\hbar]$ such that 
    \[
    a = B_\qa(\hbar)b^\qa+C(\hbar)\vac+TV.
    \]
\end{Prop}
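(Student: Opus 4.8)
The plan is to transport the operator identity $a_{(0)}=0$ to the side of differential polynomials through the isomorphism $\varphi$, reduce it by testing against \emph{linear} functionals to a condition on the variational derivative of a lift of $a$, and then invoke the fundamental lemma of the calculus of variations. So I would write $a=\varphi(f)$ with $f\in\mathcal A[\hbar]$; since $a_{(0)}$ vanishes as an element of $\mathrm{End}(V)$, in particular $\varphi(f)_{(0)}\varphi(g)=0$ for every $g\in\mathcal A[\hbar]$.

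First I would feed the test functions $g=v^\qb$, $\qb=1,\dots,N$, into Theorem~\ref{BA} with $n=0$. As $v^\qb$ is linear, every summand with $m\geq2$ vanishes, and the surviving $m=1$ term telescopes — using $\frac{(r+1)!}{(r+1)!}=1$ and $\sum_{r\geq0}(-1)^r\qp_x^{r+1}\frac{\qp f}{\qp v^{\qa,r}}=\qp_x\vard{f}{v^\qa}$ — to give $\varphi(f)_{(0)}\varphi(v^\qb)=\varphi\!\bigl(\hbar\,\eta^{\qa\qb}\,\qp_x\vard{f}{v^\qa}\bigr)$. Since $\varphi$ is injective, $\mathcal A[\hbar]$ is $\hbar$-torsion free, and $(\eta^{\qa\qb})$ is non-degenerate, this forces $\qp_x\vard{f}{v^\qa}=0$ for each $\qa$; as $\ker(\qp_x\colon\mathcal A[\hbar]\to\mathcal A[\hbar])=\mathbb C[\hbar]$, we get $\vard{f}{v^\qa}=B_\qa(\hbar)$ for some $B_\qa(\hbar)\in\mathbb C[\hbar]$.

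Finally I would observe that $B_\qa(\hbar)v^\qa$ is linear with variational derivative the constant $B_\qg(\hbar)$, so $\vard{}{v^\qg}\bigl(f-B_\qa(\hbar)v^\qa\bigr)=0$ for all $\qg$. By the fundamental lemma of the calculus of variations — the kernel of the Euler operator on the polynomial ring $\mathcal A$ equals $\qp_x\mathcal A\oplus\mathbb C$ (see e.g.~\cite{liu2018lecture}), extended $\mathbb C[\hbar]$-linearly — there are $h\in\mathcal A[\hbar]$ and $C(\hbar)\in\mathbb C[\hbar]$ with $f=B_\qa(\hbar)v^\qa+\qp_x h+C(\hbar)$. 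Applying $\varphi$ and using $\varphi(v^\qa)=b^\qa$, $\varphi(1)=\vac$ and $\varphi(\qp_x h)=T\varphi(h)\in TV$ yields $a=B_\qa(\hbar)b^\qa+C(\hbar)\vac+TV$, as desired.

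The only ingredient from outside the paper is that fundamental lemma, which is classical; the rest is a direct specialization of Theorem~\ref{BA}. The one point where a naive approach might stall is recognizing that testing $a_{(0)}=0$ on linear functionals alone is enough: one might fear that higher test functions impose further constraints, but the fundamental lemma shows that having $\vard{f}{v^\qa}$ constant for every $\qa$ already determines $f$ modulo $b^\qa$, $\vac$ and $TV$, so nothing more is needed.
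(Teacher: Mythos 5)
Your proof is correct, and it takes a genuinely more direct route than the paper's. The paper expands $f=\varphi^{-1}(a)$ in powers of $\hbar$, extracts from the leading coefficient only the \emph{integrated} statement $\{G,F_0\}=0$ for all $G$, invokes Lemma 2.1.7 of the Liu--Zhang paper on Jacobi structures to conclude $\eta^{\qa\qb}\vard{f_0}{v^\qb}=Av^\qa+B^\qa$, then has to return to the pointwise identity to kill the quadratic term $\tfrac12\eta_{\qa\qb}Av^\qa v^\qb$ (which is a Casimir of the integrated bracket but not of the operator identity), and finally runs an induction on the $\hbar$-order. You instead exploit from the outset that $a_{(0)}=0$ is an \emph{operator} identity: testing it on the linear states $\varphi(v^\qb)$ and reading off the $m=1$ term of Theorem~\ref{BA} gives the pointwise equation $\qp_x\vard{f}{v^\qa}=0$ in $\mathcal A[\hbar]$ directly, uniformly in $\hbar$. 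This is strictly stronger than the integrated condition, so the affine ambiguity $Av^\qa$ never arises, no induction on $\hbar$ is needed, and the external lemma is replaced by two classical facts ($\ker\qp_x=\mathbb C[\hbar]$ and the exactness statement $\ker\,\vard{}{v^\qa}=\qp_x\mathcal A\oplus\mathbb C$), the second of which the paper itself uses implicitly when it integrates $\vard{f_0}{v^\qb}$ back to $f_0$. What the paper's detour buys is a conceptual link to the classical description of the center of the hydrodynamic Poisson bracket; what yours buys is brevity and self-containedness. Your closing remark is also the right one: since $b^\qa_{(0)}=0$, $\vac_{(0)}=0$ and $(Ta)_{(0)}=0$, the class you land in consists exactly of states annihilating everything under the zeroth product, so testing against linear states already saturates the hypothesis and no further constraints can appear.
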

\begin{proof}
    Let us denote by $f = \varphi^{-1}(a)\in\mathcal A[\hbar]$, and we decompose 
    \[
    f = \sum_{n\geq 0}\hbar^n f_n, \quad f_n\in\mathcal A.
    \]
    The assumption $a_{(0)} = 0$ implies that for any $g\in \mathcal A$, $\varphi(f)_{(0)}\varphi(g) = 0$.
    Then by taking the $\hbar$-coefficient of $\varphi(f)_{(0)}\varphi(g)$, we arrive at 
    \begin{equation}
        \label{BQ}
    \sum_{r,s\geq 0}\eta^{\qa\qb}(-1)^r\diff{g}{v^{\qb,s}}\qp_x^{r+s+1}\diff{f_0}{v^{\qa,r}} = 0.
    \end{equation}
    In particular, we have 
    \[
        \{G,F_0\} = \int \vard{G}{v^\qa}\eta^{\qa\qb}\qp_x\vard{F_0}{v^\qb}=0,\quad F_0=\int f_0,\quad G = \int g.
    \]
    Then it follows from Lemma 2.1.7 of \cite{liu2011jacobi} that there exist constants $A$ and $B^\qa$ such that 
    \[
    \eta^{\qa\qb}\vard{f_0}{v^\qb} = A v^\qa+B^\qa,
    \]
    which implies that there exists another constant $C$ such that 
    \[
    f_0 = \frac 12 \eta_{\qa\qb}A v^\qa v^\qb+\eta_{\qa\qb}B^\qa v^\qb+C+\qp_x\mathcal A,
    \]
    here we denote by $\eta_{\qa\qb}$ the inverse of $\eta^{\qa\qb}$. Due to \eqref{BQ}, we see that $A=0$, therefore we conclude that 
    \[
        f_0 = \eta_{\qa\qb}B^\qa v^\qb+C+\qp_x\mathcal A,
    \]
    which implies that actually $\varphi(f_0)_{(0)} = 0$. Hence, we replace $f$ by $(f-f_0)/\hbar$, and the same argument shows that $f_1$ is of the same form as $f_0$. In this way, we prove that actually each $f_n$ has the same form of $f_0$, namely, there exist polynomials $B_\qa(\hbar), C(\hbar)\in\mathbb C[\hbar]$ such that 
    \[
    f = B_\qa(\hbar)v^\qa+C(\hbar)+\qp_x\mathcal A[\hbar].
    \]
    The proposition is proved. 
\end{proof}

\section{Weyl quantization of dispersionless KdV hierarchy}
\label{AI}
In this section, let us consider the quantization of the dispersionless KdV hierarchy. To state the problem in a precise way, recall that the classical dispersionless KdV hierarchy is a family of compatible PDEs
\[
    \diff{v}{t_n} = \frac{v^n}{n!}\diff{v}{x},\quad n\geq 0
\] 
for the unknown function $v$. This system admits a Hamiltonian structure given by
\[
\{F,G\} = \int \vard{F}{v}\qp_x\vard{G}{v},\quad F,G\in\mathcal F,
\]
and each flow $\diff{}{t_n}$ is the Hamiltonian vector filed with Hamiltonian $H_{n+1}^c$, where
\[
H_n^c = \int \frac{v^{n+1}}{(n+1)!},\quad n\geq 0.
\]
It follows that these Hamiltonians are in involution:
\[
\{H_n^c,H_m^c\} = 0,\quad n,m\geq 0.
\]
To consider the quantization problem, we replace the Hamiltonian structure by the deformed bracket. Then the classical Hamiltonians do not commute anymore. For example, we have
\[
[H_2^c,H_3^c] = \int \frac{\hbar}{4}v^4v_x+\frac{\hbar^2}{12}(3vv_xv_{xx}+v^2v_{xxx})+\frac{\hbar^3}{720}v^{(5)}\neq 0.
\]
The quantization problem is then to find quantum Hamiltonians $H_n\in\mathcal F[\hbar]$ such that 
\[
[H_n,H_m] = 0,\quad \lim_{\hbar\to 0}H_n = H_n^c.
\]

\subsection{Construction of quantum Hamiltonians}
\label{AD}
From now on, we will consider the Heisenberg vertex algebra of rank 1 with $\eta=1$. Our idea to construct the quantum Hamiltonians is as follows. The ring $\mathcal A$ of differential polynomials  is a commutative algebra, and we can view the normal order product on $V$ as a deformation of this algebra structure, in the sense that 
\[
\lim_{\hbar \to 0} :\varphi(f)\varphi(g): = \varphi(fg),\quad f,g\in\mathcal A,
\]
which follows from Corollary\, \ref{AR}. The normal order product is neither commutative nor associative, and we can use the idea of Weyl quantization in quantum mechanics to quantize the Hamiltonians $H_n^c$. Roughly speaking, in Weyl quantization, a quantum Hamiltonian is obtained from a classical Hamiltonian by averaging all the possible ways to compose all the quantum operators appearing in Hamiltonian. For example, let $(x,p)$ be a pair of conjugated variables, then we have the following Weyl quantization:
\[
(xp)\string^ = \frac{1}{2}(\hat x\hat p+\hat p\hat x),\quad (xp^2)\string^ = \frac{1}{3}(\hat x\hat p\hat p+\hat p\hat x\hat p+\hat p\hat p\hat x).
\]
In our case, we quantize the Hamiltonians $H_n^c$ by averaging all the possible ways to associate the state $b=\varphi(v)$ using the normal order product. For example, we should define 
\[
h_2 = \frac{1}{12}\varphi^{-1}(:b:bb::+::bb:b:),\quad H_2 = \int h_2.
\]
We call this procedure a non-associative Weyl quantization.

To give a precise definition of the non-associative Weyl quantization of Hamiltonians, let us define a set $BT(n)$ of $(n-1)!$ elements for all $n\geq 1$. Each element in $BT(n)$ is a full binary tree whose nodes, except for the root node, are labeled either by $L$ (for left) or $R$ (for right) and the root node is labeled by an integer $i$ with $1\leq i\leq (n-1)!$. In addition, all $n$ leaves of a tree in $BT(n)$ are labeled by $1,\dots,n$.

We proceed to define $BT(n)$ recursively. First, define
$BT(1)$ to be the set $\{1\}$, where we interpret the element $1\in BT(1)$ as a single root node labeled by 1. Define $BT(2)$ to be the set of the unique full binary tree with two leaves labeled by $L,1$ and $R,2$ and the root labeled by 1.

\begin{figure}[h]
    \centering
    \begin{tikzpicture}
        \node {1}
           child {node{L,1}}
           child {node{R,2}};
      \end{tikzpicture}
      \caption{The unique tree in $BT(2)$.}
\end{figure}
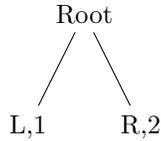

To define the set $BT(n+1)$ for $n\geq 2$, we attach a leaf node to trees in $BT(n)$ and change the labels accordingly. Take a tree $T\in BT(n)$ whose root node is labeled by $k$ and fix   a choice of its leaf with label $i \in\{1,\dots,n\}$, let us define a new tree $T_i$ as follows. Firstly, we make the chosen leaf with label $i$ an internal node by attaching to it 2 leaves, and hence obtain a full binary tree with $n+1$ leaves. Then the label `$i$' of this node is removed. For leaves originally labeled by $1,\dots,i-1$ their labels remain unchanged, and for leaves originally labeled by $i+1,\dots,n$ their labels are increased by one. Finally, the newly added leaves are labeled as $L, i$ and $R, i+1$, and the label of the root is changed from $k$ to $(k-1)n+i$. We then define
\[
    BT(n+1):=\coprod_{T\in BT(n)}\{T_i\mid i=1,\dots,n\}.
\]
As examples, the elements in $BT(3)$ are presented in Fig.\,\ref{fig2} and the elements in $BT(4)$ are presented in Fig.\,\ref{fig4} and Fig.\,\ref{fig5}.

\begin{figure}[h]
    \centering
    \begin{tikzpicture}
        \node {1}
           child {node {L}
             child {node {L,1}}
             child {node {R,2}}
           }
           child {node {R,3}
           };
      \end{tikzpicture} 
      \begin{tikzpicture}
        \node {2}
           child {node {L,1}
           }
           child {node {R}
           child {node {L,2}}
           child {node {R,3}}
           };
      \end{tikzpicture}
      \caption{Trees in $BT(3)$. If we denote by $T$ the unique tree in $BT(2)$, then the first tree is $T_1$ and the second one is $T_2$.}
  \label{fig2}
\end{figure}
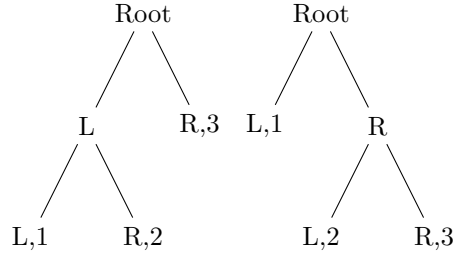
\begin{Rem}
In $BT(n)$, there are distinct elements with the same binary tree structure. For example, we see that in $BT(4)$, there are only 5 distinct trees. However, there are 6 distinct elements in $BT(4)$ as labeled trees. This is also the reason why $BT(n)$ consists of $(n-1)!$ elements rather than just $C_n$ elements, where $C_n$ is the $n$-th Catalan number counting the number of full binary trees with $n$ leaves.
\end{Rem}
Let us proceed to associate a state $A_T$ to each labeled tree  $T\in BT(n)$. For $1\in BT(1)$, we simply set $A_1 = b$. For $n\geq 2$ and $T\in BT(n)$, we first place the state $b$ on every leaf, and then recursively place the state at all other nodes obtained by taking the normal order product of states associated with its left child and right child. More precisely, given a node which is not a leaf, assume that we have put $f\in V$ at its left child and $g\in V$ at its right child, then this node is associated with the state $:fg:\in V$. We denote by $A_T\in V$ the state associated with the root of $T$. For example, the state associated to the unique tree in $BT(2)$ is $:bb:$, and for $T_1,T_2\in BT(3)$ as presented in Fig.\,\ref{fig2}, we have
\[
A_{T_1} = ::bb:b:,\quad A_{T_2} = :b:bb::.
\]  
\begin{figure}[h] 
    \centering  
    \begin{tikzpicture}
        \node {$::bb:b:$}
           child {node {$:bb:$}
             child {node {$b$}}
             child {node {$b$}}
           }
           child {node {$b$}
           };
      \end{tikzpicture} 
      \begin{tikzpicture}
        \node {$:b :bb::$}
           child {node {$b$}
           }
           child {node {$:bb:$}
           child {node {$b$}}
           child {node {$b$}}
           };
      \end{tikzpicture}
      \caption{Evaluation of each node for trees in $BT(3)$.}
\end{figure}
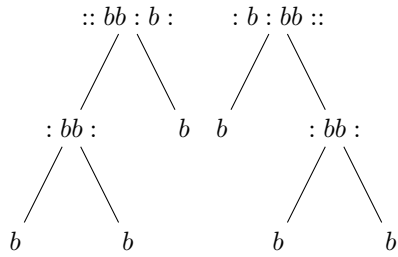

Let us denote by $Z_n$ the state obtained by averaging all the states $A_T$ for $T\in BT(n)$, namely, we define
\[
Z_n = \frac{1}{(n-1)!}\sum_{T\in BT(n)}A_T,\quad n\geq 1.
\]
The following lemma is a straightforward consequence of our definition.
\begin{Lem}
\label{RLD}
    For $n\geq 2$, we have
    \[
Z_n = \frac{1}{n-1}\sum_{i=1}^{n-1}:Z_iZ_{n-i}:.
    \]
\end{Lem}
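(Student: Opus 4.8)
The plan is to prove the identity $Z_n = \frac{1}{n-1}\sum_{i=1}^{n-1} :Z_i Z_{n-i}:$ directly from the recursive definition of $BT(n)$. The key observation is that $BT(n)$ was \emph{defined} as the disjoint union over $T \in BT(n-1)$ of the trees $\{T_i \mid i = 1,\dots,n-1\}$, but there is an alternative, more symmetric way to enumerate the same set: every full binary tree with $n$ leaves is uniquely determined by its left subtree (a tree in $BT(i)$ for some $1 \le i \le n-1$, with leaves relabeled $1,\dots,i$ and all nodes getting an extra $L$ prepended to... actually their labels are already set) and its right subtree (a tree in $BT(n-i)$). So the first step is to establish a bijection
\[
BT(n) \;\cong\; \coprod_{i=1}^{n-1} BT(i) \times BT(n-i),
\]
under which a pair $(T', T'')$ maps to the tree whose root has $T'$ as its left child (with $L$ attached to $T'$'s root) and $T''$ as its right child (with $R$ attached). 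I would verify this is a bijection by checking it is compatible with the recursive construction: growing a leaf in the left subtree corresponds to a move inside $BT(i)$, growing one in the right subtree to a move inside $BT(n-i)$.

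The second step is to observe that the state assignment $T \mapsto A_T$ is \emph{multiplicative} under this decomposition: if $T$ corresponds to the pair $(T', T'')$ with $T' \in BT(i)$, $T'' \in BT(n-i)$, then by the recursive rule for assigning states, $A_T = {:}A_{T'} A_{T''}{:}$, since the root's left child carries $A_{T'}$ and its right child carries $A_{T''}$. (Attaching $L$ or $R$ to a root node does not change which state sits there; it only records position.) Combining with the bijection,
\[
\sum_{T \in BT(n)} A_T \;=\; \sum_{i=1}^{n-1} \sum_{\substack{T' \in BT(i)\\ T'' \in BT(n-i)}} {:}A_{T'} A_{T''}{:} \;=\; \sum_{i=1}^{n-1} {:}\Bigl(\sum_{T' \in BT(i)} A_{T'}\Bigr)\Bigl(\sum_{T'' \in BT(n-i)} A_{T''}\Bigr){:},
\]
where bilinearity of the normal order product $:ab:=a_{(-1)}b$ is used in the last equality. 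Dividing by $(n-1)!$ and inserting the definition $Z_k = \frac{1}{(k-1)!}\sum_{T \in BT(k)} A_T$ gives
\[
Z_n = \frac{1}{(n-1)!}\sum_{i=1}^{n-1} {:}\bigl((i-1)!\,Z_i\bigr)\bigl((n-i-1)!\,Z_{n-i}\bigr){:} = \frac{1}{(n-1)!}\sum_{i=1}^{n-1}(i-1)!(n-i-1)!\,{:}Z_iZ_{n-i}{:}.
\]

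At this point the naive computation does not immediately yield the clean coefficient $\frac{1}{n-1}$, so the third step handles the combinatorial bookkeeping. The resolution is that the decomposition $BT(n) \cong \coprod_i BT(i)\times BT(n-i)$ is \emph{not} the same enumeration as $\coprod_{T \in BT(n-1)}\{T_i\}$ used to define $BT(n)$; rather, one should set up the multiplicative bijection so that it respects the averaging. The cleanest route is to prove the Lemma by induction on $n$ using the definitional recursion directly: each tree in $BT(n)$ arises from a unique $(T, i)$ with $T \in BT(n-1)$, and growing the $i$-th leaf of $T$ interacts with $T$'s own left/right decomposition — if $T = {:}A_{T'}A_{T''}{:}$ with $T' \in BT(j)$, then leaves $1,\dots,j$ lie in $T'$ and leaves $j+1,\dots,n-1$ in $T''$, so the grow-move either modifies $T'$ (producing a tree in $BT(j+1)$) or $T''$ (producing one in $BT(n-j)$). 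Summing over all $(T,i)$ and reorganizing by which subtree was modified reproduces $\sum_{i=1}^{n-1}{:}Z_iZ_{n-i}{:}$ after using the inductive hypothesis $Z_{n-1} = \frac{1}{n-2}\sum_{k}{:}Z_kZ_{n-1-k}{:}$, and the factorials telescope to leave exactly $\frac{1}{n-1}$. The \textbf{main obstacle} is precisely this last point: making sure the combinatorial factors from the two different enumerations of $BT(n)$ match up to give the coefficient $\frac{1}{n-1}$ rather than something messier; I expect the slick fix is simply to take the multiplicative decomposition $BT(n) \cong \coprod_{i=1}^{n-1} BT(i)\times BT(n-i)$ as the \emph{primary} one (it is an easy exercise that it agrees with the stated definition up to relabeling), in which case $|BT(n)| = \sum_i |BT(i)||BT(n-i)|$ with $|BT(k)| = (k-1)!$ is just the identity $(n-1)! = \sum_{i=1}^{n-1}\binom{n-2}{i-1}(i-1)!(n-i-1)!$, and the state identity follows with the correct constant once one averages consistently.
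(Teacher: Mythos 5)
Your Step 2 (multiplicativity $A_T = {:}A_{T'}A_{T''}{:}$ over the root) is correct and is the easy half. The genuine gap is Step 1: the claimed bijection $BT(n)\cong\coprod_{i=1}^{n-1}BT(i)\times BT(n-i)$ is false. Count: $|BT(n)|=(n-1)!$, whereas $\sum_{i=1}^{n-1}|BT(i)|\,|BT(n-i)|=\sum_{i=1}^{n-1}(i-1)!(n-i-1)!$, which already fails at $n=4$ ($6\neq 5$). The reason is that $BT(n)$ is defined as a \emph{disjoint} union over construction histories, so it is really a multiset of labeled shapes: the paper's own $BT(4)$ example has $R_3=S_1$ as trees but counts them as two distinct elements. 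Your Step 3 senses that the coefficient comes out wrong but never repairs it; the proposed "slick fix" asserts that $(n-1)!=\sum_i\binom{n-2}{i-1}(i-1)!(n-i-1)!$ expresses $|BT(n)|=\sum_i|BT(i)|\,|BT(n-i)|$, which it does not --- the extra factor $\binom{n-2}{i-1}$ is exactly the multiplicity you have not accounted for, and asserting the two enumerations "agree up to relabeling" is precisely the false bijection again.

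The correct statement is that the map $BT(n)\to\coprod_i BT(i)\times BT(n-i)$ sending a tree to its (relabeled) left and right subtrees is surjective with every fiber over $BT(i)\times BT(n-i)$ of cardinality $\binom{n-2}{i-1}$: an element of $BT(n)$ is a construction history of $n-2$ grow-moves after the initial split, each move landing in the left or the right subtree, so it is equivalent to a history in $BT(i)$, a history in $BT(n-i)$, and a choice of interleaving, of which there are $\binom{n-2}{i-1}$. Since $A_T$ depends only on the image pair, this gives
\[
\sum_{T\in BT(n)}A_T=\sum_{i=1}^{n-1}\binom{n-2}{i-1}(i-1)!(n-i-1)!\;{:}Z_iZ_{n-i}{:}=(n-2)!\sum_{i=1}^{n-1}{:}Z_iZ_{n-i}{:},
\]
and dividing by $(n-1)!$ yields the stated coefficient $\tfrac{1}{n-1}$. (The paper itself offers no proof, calling the lemma a straightforward consequence of the definition, so there is no alternative route to compare against; but as written your argument does not close.)
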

\begin{proof}
We observe that each tree in $BT(n)$ can be decomposed into two subtrees by removing the root node. Then this lemma is verified by a straightforward computation.
\end{proof}

By using the above notation, we define differential polynomials
\[
h_n = \frac{1}{(n+1)!}\varphi^{-1}(Z_{n+1}),\quad n\geq 0.
\]
It follows from Corollary\,\ref{AR} that 
\[
\lim_{\hbar\to 0}\varphi^{-1}(Z_n) = v^n,
\]
therefore the Hamiltonians defined by 
$H_n = \int h_n$ is a deformation of classical Hamiltonians $H_n^c$ in the sense that 
\[
\lim_{\hbar\to 0}H_n = H_n^c.
\]
By replacing $Z_i$ by $i!\varphi(h_{i-1})$ in Lemma\, \ref{RLD}, it follows that we can equivalently define $h_n$ recursively by
\begin{equation}
    \label{BD}
h_0 = \varphi^{-1}(b),\quad h_n = \frac{1}{n}\sum_{k=0}^{n-1}\frac{1}{\binom{n+1}{k+1}}\varphi^{-1}\left(:\varphi(h_k)\varphi(h_{n-k-1}):\right),\quad n\geq 1.
\end{equation}

\begin{Ex}
    It is easy to verify that $h_0 = v$ and $h_1 = \frac{v^2}{2}$. Let us compute $h_2$. First it follows from the definition of normal order product that
    \[
    :b:bb:: = \hes{-1}\hes{-1}\hes{-1}\vac,
    \]
    and by using the quasi-associativity \eqref{AM}, we have 
    \[
    ::bb:b: = :b:bb::+:\kk{\int_0^Td\ql\,b}[b_\ql b]: = :b:bb::+\hbar T^2 b,
    \] 
    here we use the fact that 
    \[
        [b_\ql b] = \sum_{n\geq 0}\frac{\ql^n}{n!}\hes{n}\hes{-1}\vac = \hbar \ql\vac.
    \]
    Therefore, it follows that 
    \[
    Z_3 = \frac{1}{2}\left(::bb:b:+:b:bb::\right) = b^3+\frac{\hbar}{2} T^2b,\quad b^3 = :b:bb::,
    \]
    and hence
    \[
    h_2 = \frac{1}{6}v^3+\frac{\hbar}{12}v_{xx}.
    \]
\end{Ex}
\begin{Ex}
    Let us compute $h_3$. We begin by enumerate elements in $BT(4)$. Denote by $R$ and $S$ the first and the second tree presented in Fig.\,\ref{fig2}, then by definition we arrive at 
    \[
        BT(4) = \{R_1,R_2,R_3,S_1,S_2,S_3\}.
        \]
    These trees are presented in Fig.\,\ref{fig4} and Fig.\,\ref{fig5}.
    
    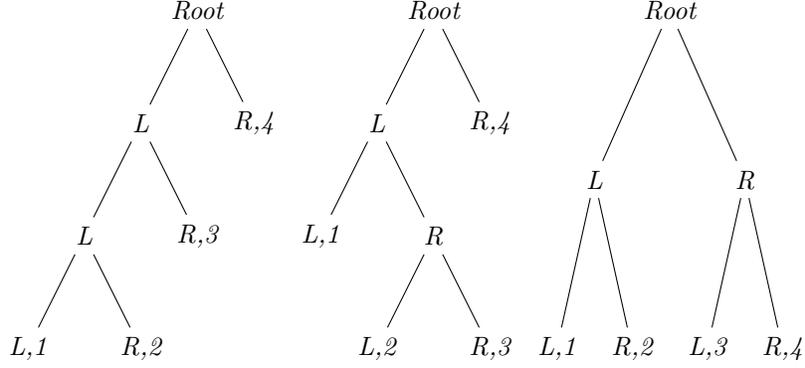
\begin{figure}[h]
        \centering
        \begin{tikzpicture}
            \node {1}
               child {node {L}
                 child {node {L}
                    child {node{L,1}}
                    child {node{R,2}}
                 }
                 child {node {R,3}}
               }
               child {node {R,4}
               };
          \end{tikzpicture} 
          \begin{tikzpicture}
            \node {2}
               child {node {L}
                 child {node {L,1}}
                    child { node{R}
                    child {node{L,2}}
                    child{node{R,3}}
                    }
                 }
               child {node {R,4}
               };
          \end{tikzpicture}
          \begin{tikzpicture}
            [   
                level distance=22.5mm,
                level 1/.style={sibling distance=20mm},
                level 2/.style={sibling distance=10mm}
            ]
            \node {3}
               child {node {L}
                    child{node{L,1}}
                    child{node{R,2}}
                }
               child {node {R}
                    child{node{L,3}}
                    child{node{R,4}}
               };
          \end{tikzpicture}
          \caption{Trees $R_1$, $R_2$ and $R_3$.}
          \label{fig4}
    \end{figure}
    \begin{figure}[h]
        \centering
        \begin{tikzpicture}
            [   
                level distance=22.5mm,
                level 1/.style={sibling distance=20mm},
                level 2/.style={sibling distance=10mm}
            ]
            \node {4}
               child {node {L}
                    child{node{L,1}}
                    child{node{R,2}}
                }
               child {node {R}
                    child{node{L,3}}
                    child{node{R,4}}
               };
        \end{tikzpicture}
        \begin{tikzpicture}
            \node {5}
            child{node{L,1}}
            child{node{R}
                child{node{L}
                    child{node{L,2}}
                    child{node{R,3}}
                }
                child{node{R,4}}
            };
        \end{tikzpicture} 
        \begin{tikzpicture}
            \node {6}
            child{node{L,1}}
            child{node{R}
            child{node{L,2}}
            child{node{R}
                child{node{L,3}}
                child{node{R,4}}
            }
            };
        \end{tikzpicture}
          \caption{Trees $S_1$, $S_2$ and $S_3$.}
          \label{fig5}
    \end{figure}
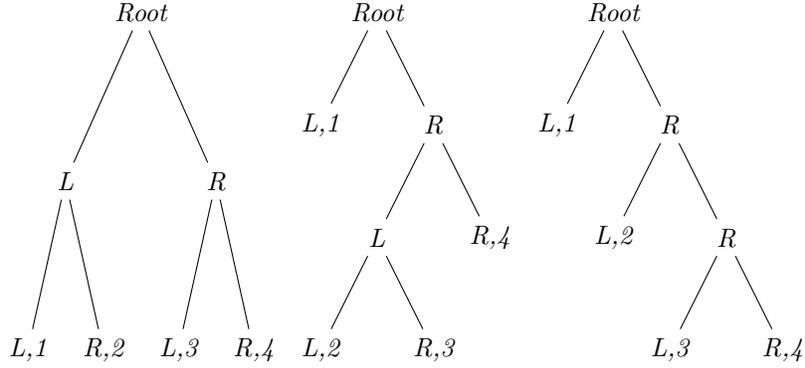
    
    Let us compute the contribution from each graph. First we note that
    \[A_{R_3} = A_{S_1} = :b^2b^2:,\quad b^2=\,:bb:.\]
    By using formula \eqref{AM}, it follows that 
    \[
        :b^2b^2:-b^4 = 2\left(\int_0^T d\ql\,b\right)[b_\ql b^2] =2\hbar :T^2bb:,\quad b^4=\,:b:b:bb:::,
        \]
    here we use \eqref{BB} to obtain 
    \[
        [b_\ql b^2] = :[b_\ql b]b:+:b [b_\ql b]:+\int_0^\ql[[b_\ql b]_\mu b]d\mu = 2\hbar\ql b.
        \]
    From the above computation we arrive at
    \[A_{R_3} = A_{S_1} = b^4+2\hbar :T^2bb:.\] 
    Next it is easy to obtain that 
    \[
        A_{S_2} = :b:b^2 b:: = :b\left(b^3+\hbar T^2 b\right):\,= b^4+\hbar:b T^2b:.
        \]
    Note that it follows from the quasi-commutativity \eqref{BC} that 
    \[
    :bT^2b:\, = \,:T^2bb:.
    \]
    To compute $A_{R_2}$, we may use the following quasi-commutativity formula:
    \[
        b^4-:b^3b: = \int_{-T}^0d\ql[b_\ql b^3] = \int_{-T}^0d\ql (3\hbar\ql b^2) = -3\hbar(:TbTb:+:bT^2b:).
        \] 
    Hence, we arrive at 
    \[
        A_{R_2} = :b^3b: = b^4+3\hbar(:TbTb:+:bT^2b:).
        \]
    Finally, we have $A_{S_3} = b^4$ and 
    \[
        A_{R_1} = ::b^2b:b: = :(b^3+\hbar T^2b)b: = b^4+3\hbar:TbTb:+4\hbar:bT^2b:.
        \]
    By averaging the contribution from each tree, we have
    \[
        Z_4 = b^4+\hbar :TbTb:+2\hbar:bT^2b:,
        \]
    which gives
    \[
    h_3 = \frac{v^4}{24}+\frac{\hbar}{24}\kk{v_x^2+2vv_{xx}}.
    \]
\end{Ex}
\begin{Ex}
    By a similar but more complicated computation, we compute $h_4$ by averaging contributions from 24 trees and $h_5$ by averaging 120 trees, and the result is 
    \begin{align*}
        h_4 &= \frac{v^5}{120}+\frac{\hbar}{24}\kk{vv_x^2+v^2v_{xx}}+\frac{\hbar^2}{360}v^{(4)},\\
        h_5 &=\frac{v^6}{720}+\frac{\hbar}{72}\kk{\frac32 v^2v_x^2+v^3v_{xx}}+\frac{\hbar^2}{360}\kk{vv^{(4)}+2 v_xv_{xxx}+\frac32 v_{xx}^2}.
    \end{align*}
\end{Ex}

\begin{Lem}
\label{RHOM}
    We have the following homogeneous properties for $h_n$:
    \begin{align*}
        \kk{\sum_{k\geq 0}(k+1)v^{(k)}\diff{}{v^{(k)}}}h_n &= (n+1)h_n,\\
        \kk{-2\hbar\diff{}{\hbar}+\sum_{k\geq 0}k v^{(k)}\diff{}{v^{(k)}}}h_n &= 0,\quad n\geq 0.
    \end{align*}
\end{Lem}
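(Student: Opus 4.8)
The plan is to establish the two homogeneity identities by induction on $n$, using the recursion \eqref{BD} together with the fact that the two claimed differential operators behave well with respect to the normal order product. Write
\[
E_1 = \sum_{k\geq 0}(k+1)v^{(k)}\diff{}{v^{(k)}},\qquad E_2 = -2\hbar\diff{}{\hbar}+\sum_{k\geq 0}k\, v^{(k)}\diff{}{v^{(k)}}.
\]
The base case is trivial: $h_0 = v$ satisfies $E_1 h_0 = h_0$ and $E_2 h_0 = 0$. For the inductive step I would transfer the operators $E_i$ through $\varphi$ to operators on $V$, using \eqref{AU}: $E_1$ corresponds to the conformal-weight grading operator on $V$ (with $b_{(-k-1)}\vac$ of weight $k+1$), and $E_2$ corresponds to $-2\hbar\,\partial_\hbar$ plus the operator counting $\sum k$ over the modes, equivalently the operator measuring $(\text{weight}) - (\text{number of }b\text{'s})$ combined with the $\hbar$-weight. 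Both of these are derivations of the normal order product in the graded sense, because the bracket $[b_\ql b] = \hbar\ql\vac$ carries a definite weight (namely $2$, counting $\ql\sim T$ as weight $1$) and a definite $\hbar$-degree ($1$), so the structure constants of $:ab:$ relative to these gradings are homogeneous. Concretely, if $A_T$ has weight $w$ and $\hbar$-degree $d$ for every $T\in BT(n)$, then $:A_{T'}A_{T''}:$ for $T',T''$ of total $n$ leaves again has a well-defined weight $w'+w''$ and $\hbar$-degree $d'+d''$; tracking these through $Z_n = \frac{1}{n-1}\sum :Z_iZ_{n-i}:$ shows $Z_{n+1}$, and hence $h_n$, is homogeneous.

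More precisely, I would argue as follows. Assign to each monomial $\hbar^m v^{(k_1)}\cdots v^{(k_r)}$ the pair of weights $\bigl(2m+\sum(k_i+1),\ m+\tfrac12\sum(k_i+1)\bigr)$ — or more simply just prove the two scalar identities directly. By the induction hypothesis, $E_1 h_k = (k+1)h_k$ and $E_2 h_k = 0$ for all $k<n$. Applying $\varphi$, the state $\varphi(h_k)$ is an eigenvector of the transported operators $\varphi E_1\varphi^{-1}$ and $\varphi E_2\varphi^{-1}$ with eigenvalues $k+1$ and $0$ respectively; equivalently $\varphi(h_k)$ has conformal weight $k+1$ (this matches the Remark preceding Proposition \ref{AG}) and a compatible $\hbar$-degree. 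Since the normal order product $:{\cdot}\,{\cdot}:$ adds conformal weights and the transported $E_2$ is additive under $:{\cdot}\,{\cdot}:$ (because $b_{(-1)}$ raises weight by $1$ and the bracket correction $[b_\ql b]=\hbar\ql\vac$ is $E_2$-homogeneous of degree $0$: it has $\hbar$-degree $1$, weight $2$, i.e. $-2\cdot 1 + (2-0)\cdot\tfrac{?}{}$ — here one checks $\vac$ contributes $0$ to $\sum k v^{(k)}\partial_{v^{(k)}}$ and $T^2 b$ contributes $2$, while $\hbar$ contributes $-2$, so the net $E_2$-weight of $\hbar T^2 b$ is $0$), it follows that $:\varphi(h_k)\varphi(h_{n-k-1}):$ has conformal weight $(k+1)+(n-k)=n+1$ and $E_2$-weight $0$. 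Pulling back through $\varphi^{-1}$ and summing over $k$ in \eqref{BD} gives $E_1 h_n = (n+1)h_n$ and $E_2 h_n = 0$.

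The one point requiring genuine care — the main obstacle — is verifying that the transported operators $\varphi E_i \varphi^{-1}$ really are additive (derivation-like) with respect to the normal order product, including the non-associativity corrections. The cleanest way is to bypass the recursion \eqref{BD} and instead use the explicit differential-polynomial recursion \eqref{BU} from Proposition \ref{AW} (which one may assume): there one checks by a direct computation that each of the two operators appearing on the right of \eqref{BU}, namely
\[
\sum_{s,t\geq 0}\frac{(s+t+1)!}{s!t!}v^{(s)}v^{(t)}\diff{}{v^{(s+t)}}
\quad\text{and}\quad
\hbar\sum_{s,t\geq 0}\frac{(s+1)!(t+1)!}{(s+t+2)!}v^{(s+t+2)}\frac{\qp^2}{\qp v^{(s)}\qp v^{(t)}},
\]
raises the $E_1$-weight by exactly $1$ and preserves the $E_2$-weight. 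This is a finite bookkeeping check: for $E_1$, the first operator replaces a factor of total weight $(s+1)+(t+1)$ by one of weight $(s+t+1)$, a net change of $-1$, but the prefactor $\frac{1}{n(n+1)}$ and the explicit form show the operator as a whole sends weight-$(n)$ to weight-$(n+1)$ once one accounts for $E_1$ acting as multiplication; for $E_2$ one checks the analogous statement with $\hbar$ weighted by $-2$. Granting Proposition \ref{AW}, the induction then closes immediately. I would present the argument via \eqref{BU} rather than \eqref{BD}, since it reduces the lemma to an elementary degree count rather than a manipulation of quasi-associativity.
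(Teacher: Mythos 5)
Your primary argument coincides with the paper's own proof: the first identity is just the statement that $Z_{n+1}$ has conformal weight $n+1$ (normal order products add conformal weights, and $b$ has weight $1$), and the second follows by induction on the recursion \eqref{BD} using the grading $\deg_{\qp_x}v^{(k)}=k$, $\deg_{\qp_x}\hbar=-2$, under which $\varphi(f)_{(n)}$ has degree $d-n-1$, so that $(-1)$-products preserve degree. The suggested detour through Proposition \ref{AW} is unnecessary and would be mildly circular in the paper's logical order, since the $\hbar$-dependence in \eqref{BU} is itself recovered from this very homogeneity lemma.
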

\begin{proof}
    It follows from the definition and Lemma\,\ref{RCON} that $Z_n$ is of conformal weight $n$. The conformal degree of $V$ induces a gradation on $\mathcal A[\hbar]$ via $\varphi$, with $v^{(k)}$ of degree $k+1$, hence we prove the first identity.
 
 For the second identity, we are to consider the differential degree of $h_n$. We define the differential degree of $\mathcal A[\hbar]$ to be 
 \[
\deg_{\qp_x} v^{(n)} = n,\quad \deg_{\qp_x}\hbar = -2.
 \] 
 This gradation induces a gradation on $V$, where $\deg_{\qp_x}\hes{n} = -n-1$. For $f\in\mathcal A[\hbar]$ of differential degree $d$, it is easy to see, using Proposition\,\ref{AP} that 
 \[
 \deg_{\qp_x}\varphi(f)_{(n)} = d-n-1.
 \] 
 Then we can prove by induction, using recursion \eqref{BD}, that each $h_n$ is of differential degree $0$. The lemma is proved.
\end{proof}

\subsection{Another expression for quantum Hamiltonians}
As we have already seen, it is not easy to compute each $h_n$ using the definition. In what follows, we give another expression of $h_n$, which simplifies computations and more importantly, enables us to prove various properties of quantum Hamiltonians. From now on, for simplicity, we will set $\hbar = 1$, and the dependence of $\hbar$ can be recovered from homogeneous conditions.

We start by defining states $p^{(n)}\in V$ for $n\geq 0$:
\[
p^{(0)} = b,\quad p^{(n+1)} = \frac{1}{n+2}\kk{T p^{(n)}+:b p^{(n)}:},\quad n\geq 0.
\]
Denote by $g_n = \varphi^{-1}(p^{(n)})\in\mathcal{A}$, then by using relations \eqref{AS} and \eqref{AT}, it is easy to see that $g_n$ satisfies the recursion relation
\[
g_0 = v,\quad g_{n+1} = \frac{1}{n+2}\kk{\qp_x g_n+vg_n},\quad n\geq 0.
\]

This recursion relation makes it easy to compute each $g_n$, and in particular, it follows that $g_n$ are given by Schur polynomials. Recall that for an integer partition $\ql = (\ql_1,\dots,\ql_{\ell(\ql)})$, we associate it with a differential polynomial $S_\ql\in\mathcal A$, called the Schur polynomial associated with the partition $\ql$, by
\[
S_\ql = \det{\kk{{\bf{e}}_{\ql_i-i+j}}_{1\leq i,j\leq \ell(\ql)}},
\]
where ${\bf{e}}_n\in\mathcal A$ is given by
\[
\sum_{n\in\mathbb Z}{\bf{e}}_n z^n = \exp\kk{\sum_{k\geq 0} \frac{v^{(k)}}{(k+1)!}z^{k+1}}.
\] 

\begin{Lem}
    \label{BG}
    We have $g_n = S_{(n+1)}$ for $n\geq 0$.
\end{Lem}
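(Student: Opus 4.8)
The plan is to prove $g_n = S_{(n+1)}$ by induction on $n$, matching the recursion $g_{n+1} = \frac{1}{n+2}(\qp_x g_n + v g_n)$ against a corresponding identity for the one-row Schur polynomials $S_{(m)} = {\bf e}_m$. The base case $n=0$ is immediate: from the generating function $\sum_n {\bf e}_n z^n = \exp(\sum_k \frac{v^{(k)}}{(k+1)!}z^{k+1})$, extracting the $z^1$-coefficient gives ${\bf e}_1 = v = S_{(1)} = g_0$. So the heart of the matter is the recursive step, which amounts to the single-variable identity
\[
{\bf e}_{n+2} = \frac{1}{n+2}\kk{\qp_x {\bf e}_{n+1} + v\,{\bf e}_{n+1}},\quad n\geq 0.
\]

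To establish this, I would work with the generating series $E(z) = \sum_{n\in\mathbb Z}{\bf e}_n z^n = \exp(\phi(z))$ where $\phi(z) = \sum_{k\geq 0}\frac{v^{(k)}}{(k+1)!}z^{k+1}$. First, differentiating $E$ with respect to $z$: since $\qp_z \phi(z) = \sum_{k\geq 0}\frac{v^{(k)}}{k!}z^{k}$, and applying $\qp_x$ term-by-term to $\phi(z)$ shifts $v^{(k)}\mapsto v^{(k+1)}$, one checks directly that $\qp_x \phi(z) = \qp_z\phi(z) - v = \sum_{k\geq 1}\frac{v^{(k)}}{k!}z^k$. Hence $\qp_z E = (\qp_z\phi)E = (\qp_x\phi + v)E = \qp_x E + vE$, using that $\qp_x$ is a derivation so $\qp_x E = (\qp_x\phi)E$. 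Now reading off the coefficient of $z^{n+1}$ on both sides of $\qp_z E = \qp_x E + v E$ yields exactly $(n+2){\bf e}_{n+2} = \qp_x {\bf e}_{n+1} + v\,{\bf e}_{n+1}$, which is the claimed identity. Combining with the inductive hypothesis $g_n = {\bf e}_{n+1}$ gives $g_{n+1} = \frac{1}{n+2}(\qp_x {\bf e}_{n+1} + v\,{\bf e}_{n+1}) = {\bf e}_{n+2} = S_{(n+2)}$, closing the induction.

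The only genuinely careful point is the bookkeeping in $\qp_x\phi(z) = \qp_z\phi(z) - v$: one must track the index shift correctly, noting that $\qp_x$ sends the $z^{k+1}$-term $\frac{v^{(k)}}{(k+1)!}z^{k+1}$ to $\frac{v^{(k+1)}}{(k+1)!}z^{k+1}$, and after reindexing $j = k+1$ this becomes $\sum_{j\geq 1}\frac{v^{(j)}}{j!}z^{j}$, which is precisely $\qp_z\phi(z)$ with the $j=0$ term $v$ removed. Everything else is a formal manipulation of the exponential generating series. I expect no real obstacle here; the lemma is essentially the observation that the single-row Schur polynomials, viewed as complete homogeneous symmetric functions in the power-sum variables $p_k = \frac{k!\,v^{(k-1)}}{\,?\,}$-normalization implicit in $\phi$, satisfy the standard Newton-type recursion, and the recursion defining $g_n$ is just the vertex-algebra incarnation of the same relation via $\varphi$, $\qp_x = \varphi^{-1}T\varphi$, and multiplication by $v = \varphi^{-1}(b)$.
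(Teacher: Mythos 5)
Your proof is correct and rests on the same key identity as the paper's, namely $\qp_z W=(v+\qp_x)W$ for $W=\exp\bigl(\sum_{k\geq 0}\tfrac{v^{(k)}}{(k+1)!}z^{k+1}\bigr)$; the paper packages this as uniqueness of the solution of the ODE satisfied by the generating function $G(z)=\sum g_nz^n$, while you package it as an induction whose step is read off from the coefficient of $z^{n+1}$, a purely cosmetic difference. (The closing aside about Newton-type recursions is vague but harmless, as the main argument is complete without it.)
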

\begin{proof}
    Let us denote by 
    \[
    G(z) = \sum_{n\geq 0}g_nz^n
    \]
    the generating function of $g_n$, then it follows from the recursion relation of $g_n$ that this function is the unique formal series solution of the equation
    \[
     z\diff{G}{z}= z(\qp_x G+v G)-G+v,\quad G(0) = v.
    \]
    Therefore, it suffices to show that the formal power series
    \[
    \frac{W(z)}{z}-\frac{1}{z}
    \]
    solves the above equation, with
    \[
    W(z) = \exp\kk{\sum_{k\geq 0} \frac{v^{(k)}}{(k+1)!}z^{k+1}}.
    \]
    Indeed, it follows that 
    \begin{align*}
    &\qp_x\kk{\frac{W}{z}-\frac{1}{z}} = \frac{1}{z}\sum_{k\geq 0}v^{(k+1)}\diff{W}{v^{(k)}} = \frac Wz \kk{\sum_{k\geq 0} \frac{v^{(k+1)}}{(k+1)!}z^{k+1}},\\
    &z\diff{}{z}\kk{\frac{W}{z}-\frac{1}{z}} = \diff{W}{z}-\frac{W}{z}+\frac{1}{z} = vW+\qp_x W-\frac{W}{z}+\frac{1}{z}.
    \end{align*}
    Then the lemma is proved by a straightforward verification. 
\end{proof}
\begin{Rem}
    We will give another proof of the above lemma in the next subsection based on the definition of Hall-Littlewood polynomials.
\end{Rem}
The differential polynomials $g_n$ are closely related to $h_n$. In fact, we have the following key observation which makes it possible to prove various properties of $h_n$.
\begin{Th}
    \label{BE}
    For $n\geq 0$, we have 
    \[
h_n = \sum_{k=0}^n\frac{(-1)^{n-k}}{(n-k+1)!}\qp_x^{n-k}g_k.
    \]
\end{Th}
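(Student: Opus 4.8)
The plan is to transport the identity into the Heisenberg vertex algebra $V$ and prove it there by induction on $n$. Since $\varphi\circ\qp_x=T\circ\varphi$, $\varphi(g_k)=p^{(k)}$, and $h_n=\tfrac1{(n+1)!}\varphi^{-1}(Z_{n+1})$, applying $\varphi$ turns the assertion into the equivalent statement
\[
Z_{n+1}=(n+1)!\sum_{k=0}^{n}\frac{(-1)^{n-k}}{(n-k+1)!}\,T^{\,n-k}p^{(k)}\qquad\text{in }V .
\]
For $n=0$ this is just $Z_1=p^{(0)}=b$. For the inductive step I would feed the induction hypothesis into the quadratic recursion $Z_{n+1}=\tfrac1n\sum_{i=1}^{n}{:}Z_iZ_{n+1-i}{:}$ (equivalently the recursion \eqref{BD} for the $h_n$); by bilinearity of the normal order product the step then reduces to a closed-form evaluation of the states ${:}(T^{\alpha}p^{(a)})(T^{\beta}p^{(a')}){:}$ as $\mathbb{C}$-linear combinations of $T^{\gamma}p^{(c)}$ and $T^{\gamma}\vac$, followed by a bookkeeping of binomial and factorial coefficients.

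For the evaluation I would proceed in two steps. First, $T$ is a derivation of the normal order product, so ${:}(T^{\alpha}u)v{:}=\sum_{s\ge0}(-1)^{s}\binom{\alpha}{s}T^{\alpha-s}{:}u\,(T^{s}v){:}$; combined with the quasi-commutativity \eqref{BC} and the sesquilinearity of the $\ql$-bracket, this pulls all powers of $T$ to the outside and reduces the problem to $T$-derivatives of ${:}p^{(a)}p^{(a')}{:}$ plus corrections built out of the $\ql$-brackets $[p^{(a)}{}_{\ql}p^{(a')}]$. Second, I would compute ${:}p^{(a)}p^{(a')}{:}$ recursively from $p^{(a+1)}=\tfrac1{a+2}\bigl(Tp^{(a)}+{:}bp^{(a)}{:}\bigr)$ using the quasi-associativity \eqref{AM} and the Wick formula \eqref{BB}. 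All the $\ql$-brackets one needs are forced by \eqref{AT} and Lemma \ref{BG}: from $g_k={\bf{e}}_{k+1}$ and $\qp {\bf{e}}_{m}/\qp v^{(l)}=\tfrac1{(l+1)!}{\bf{e}}_{m-l-1}$ one gets
\[
[b_{\ql}p^{(k)}]=\sum_{j=1}^{k}\frac{\ql^{j}}{j!}\,p^{(k-j)}+\frac{\ql^{\,k+1}}{(k+1)!}\,\vac ,
\]
and, by induction on $a$, an analogous explicit polynomial formula for $[p^{(a)}{}_{\ql}p^{(a')}]$.

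With the formula for ${:}(T^{\alpha}p^{(a)})(T^{\beta}p^{(a')}){:}$ in hand, the inductive step collapses to a finite identity among binomial coefficients, which is routine to check. I expect the \emph{main obstacle} to be the second step above: every application of quasi-associativity \eqref{AM} reintroduces a normal product, so the real content is to show that the resulting correction terms do not escape the span of $\{T^{\gamma}p^{(c)},\,T^{\gamma}\vac\}$ but instead close up into the claimed standard form.

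The most economical way to manage this bookkeeping is probably to first record the closed form $P(z)=\tfrac1z\bigl(e^{B_{+}(z)}-1\bigr)\vac$ for $P(z)=\sum_{n\ge0}p^{(n)}z^n$, where $B_{+}(z)=\sum_{k\ge0}\tfrac{z^{k+1}}{k+1}\,\hes{-k-1}$ is the integrated creation part of the current (this follows at once from $\varphi(v^{(k)})=k!\,\hes{-k-1}\vac$ together with the commutativity of the creators), after which everything becomes a manipulation of exponential vertex operators. As an alternative packaging of the same computation, the assertion is equivalent to $\sum_{n\ge0}\varphi(h_n)z^n=\tfrac{1-e^{-zT}}{zT}P(z)$, where $P$ satisfies $\tfrac{d}{dz}\bigl(zP(z)\bigr)=b+zTP(z)+z\,{:}b\,P(z){:}$ with $P(0)=b$, while $\mathcal Y(z):=\sum_{n\ge0}Z_{n+1}z^n$ satisfies $\mathcal Y'(z)={:}\mathcal Y(z)\mathcal Y(z){:}$ with $\mathcal Y(0)=b$, each being an initial value problem with a unique solution in $V[[z]]$.
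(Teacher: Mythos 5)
Your reduction of the statement to the identity $Z_{n+1}=(n+1)!\sum_{k=0}^{n}\tfrac{(-1)^{n-k}}{(n-k+1)!}T^{n-k}p^{(k)}$ in $V$ is correct, and your auxiliary facts check out (the formula $zP(z)=e^{B_+(z)}\vac-\vac$, the bracket $[b_\ql p^{(k)}]=\sum_{j=1}^{k}\tfrac{\ql^j}{j!}p^{(k-j)}+\tfrac{\ql^{k+1}}{(k+1)!}\vac$, which follows from \eqref{AT} and $\qp\mathbf{e}_m/\qp v^{(l)}=\tfrac{1}{(l+1)!}\mathbf{e}_{m-l-1}$). But the argument has a genuine gap, and you have named it yourself: the inductive step requires a closed-form evaluation of $:(T^{\alpha}p^{(a)})(T^{\beta}p^{(b)}):$ as a combination of the $T^{\gamma}p^{(c)}$ and $T^{\gamma}\vac$, and this evaluation --- which is the entire content of the theorem --- is never supplied. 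It is not even clear a priori that these normal products stay in that span: each application of quasi-associativity \eqref{AM} and of the Wick formula \eqref{BB} produces iterated $\ql$-brackets $[p^{(a)}{}_{\ql}p^{(b)}]$, whose low modes correspond to $P^{(a)}_{m}$ with $m<0$ acting on $S_{(b+1)}$ and hence produce Hall--Littlewood polynomials such as $S_{(b+1,i,1,\dots,1)}$; these do happen to reduce to the span (e.g.\ $S_{(2,1)}=2g_2-\qp_xg_1$), but proving this in general is a nontrivial lemma that your proposal asserts "by induction on $a$" without carrying out. The concluding claim that the step "collapses to a finite identity among binomial coefficients, which is routine to check" is therefore unearned. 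The generating-function repackaging does not rescue this: $\mathcal Y(z)=\sum Z_{n+1}z^n$ satisfies $\mathcal Y'=:\mathcal Y\mathcal Y:$, but your closed form is for $\sum\varphi(h_n)z^n=\sum\tfrac{Z_{n+1}}{(n+1)!}z^n$, and the coefficientwise factorial rescaling does not commute with the quadratic ODE, so verifying the closed form still requires expanding the same normal products.

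For contrast, the paper avoids ever expanding a product of two general $p^{(a)}$'s. It defines $\tilde h_n$ by the right-hand side of the claimed identity, identifies it with the hook-Schur combination $\tfrac{1}{n+1}\sum_k\binom{n}{k}^{-1}S_{(n-k+1,1,\dots,1)}$ via the operators $P^{(k)}_m=-\sum_n\binom{n}{k}B^*_{m+n}B_n$ of \cite{liu2022action}, proves the \emph{first-order} recursion $\mathcal D(\tilde h_n)=(n+1)(n+2)\tilde h_{n+1}$ from the explicit action \eqref{BM} of $P^{(k)}_m$ on Hall--Littlewood polynomials, establishes $\bigl(\varphi(\tilde h_n)_{(0)}\varphi(\tilde h_m)\bigr)_{(0)}=0$, and then uses the Borcherds identity \eqref{BS} to convert the first-order recursion into the quadratic recursion \eqref{BD} defining $h_n$; the only mode computation needed is the single commutator $[\varphi(\tilde h_2)_{(1)},\varphi(\tilde h_k)_{(-1)}]=(\varphi(\tilde h_2)_{(1)}\varphi(\tilde h_k))_{(-1)}$. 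If you want to complete your route, the missing lemma you must state and prove is precisely the product formula for $:p^{(a)}p^{(b)}:$ (or, equivalently, closure of the $\mathbb C[T]$-span of the $p^{(c)}$ and $\vac$ under the normal order product together with explicit structure constants), and at that point you will likely find yourself rebuilding the Schur-polynomial calculus that the paper imports.
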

We will prove this theorem in the next subsection.
\subsection{Properties of quantum Hamiltonians}
Let us prove various properties of quantum Hamiltonian densities $h_n$ of quantum dispersionless KdV hierarchy. Our main technique is based on the vertex operator representation of W-type operators and their relations to Hall-Littlewood polynomials, which is studied in \cite{jing1991vertex}. These operators have a deep relationship with and find wide application in the theory of (classical) integrable hierarchies such as KP theory and its reduction (see, e.g., \cite{liu2022action} and references therein), and we are now to apply these techniques to study the quantum integrable hierarchy. 

We start by making necessary preparations. Denote by 
\[
J(z) = \sum_{n\geq 0}\frac{v^{(n)}}{n!}z^n+\sum_{n\geq 0}(n+1)!\diff{}{v^{(n)}}z^{-n-2}\in\mathrm{End}(\mathcal A)[[z^{\pm 1}]],
\]
then we see that $J(z)$ is precisely 
\[
J(z) = \varphi^{-1}\comp Y(b,z)\comp\varphi.
\]
Let us then define
\[
P^{(k)}(z) = \frac{1}{(k+1)!}:(\qp_z+J(z))^k J(z): = \sum_{m\in\mathbb Z}P^{(k)}_m z^{-k-m-1},
\]
here the normal order product means to put the annihilators, i.e. the operators $\diff{}{v^{(n)}}$ on the right. Then it follows that  
\[
    P^{(k)}(z) = \varphi^{-1}\comp Y\left(\varphi(g_k),z\right)\comp\varphi,
\]
in particular, it follows from the relation $\varphi(g_k)_{(-1)}\vac = \varphi(g_k)$ that 
\[
g_k = P^{(k)}_{-k-1}(1).
\]
Denote by 
\[
\phi(z) = \sum_{n\geq 1}\frac{v^{(n-1)}}{n!}z^n-\sum_{n\geq 1}(n-1)!\diff{}{v^{(n-1)}}z^{-n},
\]
then we have (\cite{liu2022action})
\begin{equation}
\label{BF}
    P^{(k)}(z) = \frac{1}{(k+1)!}:e^{-\phi(z)}\qp_z^{k+1}e^{\phi(z)}:.
\end{equation}
We define $B_n,B^*_n \in\mathrm{End}(\mathcal A)$ by 
\[
B(z) = :e^{\phi(z)}: = \sum_{n\in\mathbb Z} B_nz^n,\quad B^*(z) = :e^{-\phi(z)}: = \sum_{n\in\mathbb Z} B_n^*z^{-n},
\] 
these operators satisfy the relation (\cite{jing1991vertex})
\begin{align}
    \label{BN}
B_mB_n+B_{n-1}B_{m+1} &= 0,\\
B_m^*B_n^*+B_{n+1}^*B_{m-1}^* &= 0,\\
\label{BH}
B_mB_n^*+B_{n-1}^*B_{m-1} &= \qd_{m,n}.
\end{align} 
We can represent each operator $P^{(k)}_m$ via $B_n$ and $B_n^*$ by using \eqref{BF}.
\begin{Lem}[Theorem 3.3 in \cite{liu2022action}]
    \label{BR}
    The following relation holds true for any $k\geq 0$ and $m\in\mathbb Z$:
    \[P^{(k)}_m = -\sum_{n\in\mathbb Z}\binom{n}{k} B^*_{m+n}B_n.\]
\end{Lem}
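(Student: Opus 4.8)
The plan is to derive the formula from the closed expression \eqref{BF} for $P^{(k)}(z)$ by a point‑splitting (operator product expansion) argument, exactly in the spirit of \cite{liu2022action}.

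First I would replace the coincident arguments in \eqref{BF} by two variables. Writing $\phi_+(z)=\sum_{n\geq1}\frac{v^{(n-1)}}{n!}z^n$ and $\phi_-(z)=-\sum_{n\geq1}(n-1)!\frac{\partial}{\partial v^{(n-1)}}z^{-n}$ for the multiplication and derivation parts of $\phi(z)$, set $\mathcal E(w,z):=\,:e^{-\phi(w)}e^{\phi(z)}:\,=e^{\phi_+(z)-\phi_+(w)}e^{\phi_-(z)-\phi_-(w)}$. Since normal ordering commutes with $\partial_z$, and since the two exponential factors collapse to the identity operator on the diagonal $w=z$, one gets $(k+1)!\,P^{(k)}(z)=\bigl[\partial_z^{k+1}\mathcal E(w,z)\bigr]_{w=z}$.

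The heart of the argument is the operator product expansion. By Wick's theorem for the Heisenberg pair, the product $B^*(w)B(z)=\,:e^{-\phi(w)}:\,:e^{\phi(z)}:$ equals $\mathcal E(w,z)$ times the exponential of the single nonzero contraction
\[
\left[\sum_{n\geq1}(n-1)!\tfrac{\partial}{\partial v^{(n-1)}}w^{-n},\ \sum_{m\geq1}\tfrac{v^{(m-1)}}{m!}z^m\right]=\sum_{n\geq1}\tfrac1n\left(\tfrac zw\right)^n=-\log\left(1-\tfrac zw\right),
\]
so that $B^*(w)B(z)=\left(1-\tfrac zw\right)^{-1}\mathcal E(w,z)$, i.e. $\mathcal E(w,z)=\left(1-\tfrac zw\right)B^*(w)B(z)$ with $\left(1-\tfrac zw\right)^{-1}$ expanded in nonnegative powers of $z/w$. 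Substituting $B^*(w)=\sum_pB^*_pw^{-p}$, $B(z)=\sum_qB_qz^q$ and multiplying by $1-z/w$ then yields the mode expansion $\mathcal E(w,z)=\sum_{p,q}\bigl(B^*_pB_q-B^*_{p-1}B_{q-1}\bigr)w^{-p}z^q$. The rest is bookkeeping: $\partial_z^{k+1}$ sends $z^q$ to $\frac{q!}{(q-k-1)!}z^{q-k-1}=(k+1)!\binom{q}{k+1}z^{q-k-1}$; setting $w=z$ and comparing with $P^{(k)}(z)=\sum_mP^{(k)}_mz^{-k-m-1}$ forces $p=q+m$, whence $P^{(k)}_m=\sum_q\binom{q}{k+1}\bigl(B^*_{q+m}B_q-B^*_{q+m-1}B_{q-1}\bigr)$. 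Shifting $q\mapsto q+1$ in the second summand and applying Pascal's identity $\binom{q+1}{k+1}=\binom{q}{k+1}+\binom qk$ collapses this telescoping combination to $-\sum_q\binom qkB^*_{q+m}B_q$, which is the claim.

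I expect the real difficulty to be analytic rather than algebraic: the coincidence limit $w\to z$ and the reindexing that produces the telescoping must be carried out in a suitable completion of $\mathrm{End}(\mathcal A)$, in which the individual sums $\sum_q\binom{q}{k+1}B^*_{q+m}B_q$ need not converge while the combinations above do. This is exactly the content of \cite[Theorem 3.3]{liu2022action} (together with the vertex‑operator calculus behind \eqref{BN}--\eqref{BH} and \eqref{BF}), so in the paper the cleanest option is to invoke that result, with the outline above recording why it holds.
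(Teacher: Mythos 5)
The paper offers no proof of this lemma at all: it is imported verbatim as Theorem 3.3 of \cite{liu2022action}, so your derivation is strictly more than what the text provides. Your outline is the standard vertex-operator argument and its algebra checks out: the contraction $[\,-\phi_-(w),\phi_+(z)\,]=-\log(1-z/w)$ is correct with the paper's conventions, hence $\mathcal E(w,z)=(1-z/w)B^*(w)B(z)$; the mode expansion gives $\mathcal E(w,z)=\sum_{p,q}(B^*_pB_q-B^*_{p-1}B_{q-1})w^{-p}z^q$; and applying $\partial_z^{k+1}$, setting $w=z$, reindexing and using Pascal's identity does telescope to $-\sum_q\binom{q}{k}B^*_{q+m}B_q$. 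You are also right that the only genuine issue is the one you flag at the end, and it is worth being concrete about it: the right-hand side is \emph{not} termwise convergent on $\mathcal A$ for $m\geq 0$. For instance, $B^*_nB_n(1)=1$ for all $n\geq 0$ by \eqref{BH} together with $B^*_{n+1}(1)=0$, so $-\sum_n\binom{n}{k}B^*_nB_n(1)$ diverges, whereas $P^{(k)}_0(1)=0$; the identity at $m\geq 0$ must therefore be read with the normal ordering/regularization of \cite{liu2022action} (equivalently, only the combinations appearing after applying \eqref{BN}--\eqref{BH}, or differences as in commutators, are termwise finite). This is harmless for the paper, which only ever uses the lemma in \eqref{BI} with $m=-n-1<0$ and inside commutators where the constant ambiguity cancels, but a self-contained proof along your lines should state in what sense the sum is to be understood. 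With that caveat made explicit, your argument is a correct and useful expansion of the bare citation.
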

Operators $B_n$ can be used to define the Schur polynomials (\cite{macdonald1998symmetric}). Given an integer partition $\ql = (\ql_1,\dots,\ql_{\ell(\ql)})$, it follows that 
\[
S_\ql = B_{\ql_1}\comp\dots\comp B_{\ql_{\ell(\ql)}}(1),
\]
furthermore, the following relation also holds true (identity (72) of \cite{liu2022action}):
\[
B^*_{-n}(1) = (-1)^n S_{(\underbrace{1,\dots,1}_n)}.
\]
It is much more convenient to work with Hall-Littlewood polynomials, namely for any $(\ql_1,\dots,\ql_\ell)\in\mathbb Z^\ell$, we define
\[
    S_\ql = B_{\ql_1}\comp\dots\comp B_{\ql_{\ell}}(1)\in\mathcal A.
\] 
It follows that for $\ql_1\geq \ql_2\dots\geq \ql_\ell>0$, the corresponding Hall-Littlewood polynomial coincides with the Schur polynomial.

Using the above ingredients, for $n\geq 0$, we arrive at 
\begin{align}
    \notag
P^{(\ell)}_{-n-1}(1)&=-\sum_{a\in\mathbb Z}\binom{a}{\ell} B^*_{a-n-1}B_a(1)\\
\notag
&=\sum_{a=0}^n\binom{a}{\ell} B_{a+1}B_{a-n}^*(1)\\
\notag
&=\sum_{a=0}^n(-1)^{a-n}\binom{a}{\ell} B_{a+1}S_{(\underbrace{1,\dots,1}_{n-a})}\\
\label{BI}
&=\sum_{a=0}^n(-1)^{a-n}\binom{a}{\ell} S_{(a+1,\underbrace{1,\dots,1}_{n-a})},
\end{align}
here for the second equality we use the relation \eqref{BH} and the fact that $B^*_{k}(1) = 0$ for $k>0$.
In particular, by setting $\ell = n$, it follows that 
\[
    g_n = P^{(n)}_{-n-1}(1) = S_{(n+1)},
\]
which gives another proof of Lemma\,\ref{BG}.

Now we are ready to study properties of quantum Hamiltonians. Our strategy is as follows: first we define $\tilde h_n\in\mathcal A$ by 
\begin{equation}
    \label{BJ}
    \tilde h_n = \sum_{k=0}^n\frac{(-1)^{n-k}}{(n-k+1)!}\qp_x^{n-k}g_k,
\end{equation}
then we prove Proposition\,\ref{AW}, Theorem\,\ref{AX}, Theorem\,\ref{AY} and Proposition\,\ref{AZ} with $h_n$ replaced by $\tilde h_n$. Finally, we prove that $\tilde h_n$ actually satisfies the recursion relation \eqref{BD}, hence $\tilde h_n=h_n$, and this proves Theorem\,\ref{BE}.

\subsubsection{\textbf{Proposition\,\ref{AZ} for $\tilde h_n$}}
We start by expressing $\tilde h_n$ via Schur polynomials. It follows from the definition \eqref{BJ} that 
\begin{equation}
    \label{BL}
\varphi^{-1}\comp Y(\varphi(\tilde h_n),z)\comp \varphi = \sum_{k=0}^n\frac{(-1)^{n-k}}{(n-k+1)!}\qp_z^{n-k}P^{(k)}(z),
\end{equation}
then we arrive at the relation
\[
\tilde h_n = \sum_{k=0}^n\frac{(-1)^{k}}{k+1}P^{(n-k)}_{-n-1}(1).
\]
Applying the identity \eqref{BI}, it turns out that 
\begin{align}
    \notag
    \tilde h_n &= \sum_{k=0}^n\frac{(-1)^{k}}{k+1}\sum_{a=n-k}^n (-1)^{a-n}\binom{a}{n-k}S_{(a+1,\underbrace{1,\dots,1}_{n-a})} \\
    \label{BK}
    &= \frac{1}{n+1}\sum_{k=0}^n\frac{1}{\binom{n}{k}}S_{(n-k+1,\underbrace{1,\dots,1}_k)},
\end{align}
here we use the following binomial identity proved in Corollary 2.2 of \cite{sury2004identities}:
\[
\sum_{k=0}^n\frac{(-1)^k}{m+k+1}\binom{n}{k} = \frac{1}{(n+m+1)\binom{n+m}{m}},\quad n,m\in\mathbb Z_{\geq 0}.
\]
In this way, we prove Proposition\,\ref{AZ} for $\tilde h_n$.

\subsubsection{\textbf{Proposition\,\ref{AW} for $\tilde h_n$}}
Let us proceed to prove that differential polynomials $\tilde h_n$ satisfy the recursion relations \eqref{BU}. Denote by
\[
\mathcal D = 2\,\varphi^{-1}\comp \varphi(\tilde h_2)_{(1)}\comp\varphi\in \mathrm{End}(\mathcal A),
\]
then it follows from Proposition\,\ref{AG} that 
\[
\mathcal D = \sum_{s,t\geq 0}\frac{(s+t+1)!}{s!t!}v^{(s)}v^{(t)}\diff{}{v^{(s+t)}}+ \frac{(s+1)!(t+1)!}{(s+t+2)!}v^{(s+t+2)}\frac{\qp^2}{\qp v^{(s)}\qp v^{(t)}}.
\]
Equivalently, using \eqref{BL}, this operator can also be written as 
\[
\mathcal D = P^{(1)}_{-1}+2P^{(2)}_{-1},
\]
and we need to prove that 
\[
\mathcal D(\tilde h_n) = (n+1)(n+2)\tilde h_{n+1},\quad n\geq 0.
\]
The actions of $P^{(k)}_m$ on Hall-Littlewood polynomials are computed in \cite{liu2022action}. Specifically, fix $(\ql_1,\dots,\ql_\ell)\in\mathbb Z^\ell$, then for any $k
\geq 0$ and $m\in\mathbb Z$, we have
\begin{align}
    \notag
    P^{(k)}_m(S_\ql) = &\sum_{i=1}^\ell\binom{\ql_i-m-i}{k}S_{(\ql_1,\dots,\ql_i-m,\dots,\ql_\ell)}+\qd_{m,0}\binom{-\ell}{k+1} S_\ql\\
    \label{BM}
    &+\qd_{m<0}\sum_{i=1}^{-m}(-1)^{m-i}\binom{i-\ell-1}{k}S_{(\ql_1,\dots,\ql_\ell,i,\underbrace{1,\dots,1}_{-m-i})},
\end{align}
where we set $\qd_{m<0} = 1$ for $m<0$ and $\qd_{m<0} = 0$ for $m\geq 0$. 
Let us use this to compute $\mathcal D(\tilde h_n)$. Due to \eqref{BK}, we first consider the action 
\[
P^{(1)}_{-1}S_{(n-k+1,\underbrace{1,\dots,1}_k)}
\]
for $0\leq k\leq n$. Using the relation \eqref{BN}, we have $B_1B_2 = 0$, hence 
\[
S_{(n-k+1,1,\dots,1,2,1,\dots,1)} = 0.
\]
Then it follows that
\[
    P^{(1)}_{-1}S_{(n-k+1,\underbrace{1,\dots,1}_k)} = (n-k+1)S_{(n-k+2,\underbrace{1,\dots,1}_k)}-(k+1)S_{(n-k+1,\underbrace{1,\dots,1}_{k+1})},
\]
and similarly we have 
\[
    2P^{(2)}_{-1}S_{(n-k+1,\underbrace{1,\dots,1}_k)} = (n-k+1)(n-k)S_{(n-k+2,\underbrace{1,\dots,1}_k)}+(k+1)(k+2)S_{(n-k+1,\underbrace{1,\dots,1}_{k+1})}.
\]
We arrive at the following computation:
\begin{align*}
    \mathcal D(\tilde h_n)=&\,\frac{1}{n+1}\sum_{k=0}^n\frac{1}{\binom{n}{k}}\mathcal D(S_{(n-k+1,\underbrace{1,\dots,1}_k)})\\
    =&\,\frac{1}{n+1}\sum_{k=0}^n\frac{(n-k+1)^2}{\binom{n}{k}}S_{(n-k+2,\underbrace{1,\dots,1}_k)}\\
    &+\frac{1}{n+1}\sum_{k=0}^n\frac{(k+1)^2}{\binom{n}{k}}S_{(n-k+1,\underbrace{1,\dots,1}_{k+1})}\\
    =&\,(n+1)\sum_{k=0}^{n+1}\frac{1}{\binom{n+1}{k}}S_{(n-k+2,\underbrace{1,\dots,1}_k)}\\
    =&\,(n+1)(n+2)\tilde h_{n+1}.
\end{align*}
This proves Proposition\,\ref{AW} for $\tilde h_n$.
\subsubsection{\textbf{Theorem\,\ref{AY} for $\tilde h_n$}}
\label{RAC}
Before we prove Theorem\,\ref{AY} for $\tilde h_n$, let us explain why it is necessary to take $\varphi(h_n)_{(n)}\in\mathrm{End}(V)$ for studying the eigenvalue problem. We have proved in Lemma\,\ref{RHOM} that $\varphi(h_n)\in V$ is of conformal weight $n+1$, and it follows from Lemma\,\ref{RCON} that the operator $\varphi(h_n)_{(m)}$ is of conformal degree $n-m$. It follows that $\varphi(h_n)_{(m)}$ admits non-zero eigenvectors unless $m = n$.

Let us proceed to study the eigenvalue problems for $\tilde h_n$. By using \eqref{BM}, we see that all the Schur polynomials $S_\ql$ serve as common eigenvectors of $P^{(k)}_0$. It follows from \eqref{BL} that 
\[
\varphi^{-1}\comp\varphi(\tilde h_n)_{(n)}\comp \varphi = \frac{1}{n+1}\sum_{k=0}^n\binom{n+1}{k}P^{(k)}_0.
\]
Then we prove Proposition\,\ref{AY} for $\tilde h_n$ by applying \eqref{BM}. Furthermore, we recover the $\hbar$-dependence by noticing that $\varphi(\tilde h_n)_{(n)}$ is of differential degree $-n-1$ and $S_\ql$ is of differential degree zero. 

\subsubsection{\textbf{Theorem\,\ref{AX} for $\tilde h_n$}}
Next let us prove that the functionals $\tilde H_n=\int\tilde h_n$ are in involution with respect to the deformed quantum Hamiltonian structure, which is equivalent to prove 
\[
\varphi(\tilde h_n)_{(0)}\varphi(\tilde h_m)\in TV,\quad m,n\geq 0.
\]
\begin{Prop}
    For $m,n\geq 0$, we have 
    \[
    \left[\varphi(\tilde h_n)_{(0)},\varphi(\tilde h_m)_{(0)}\right] = 0.
    \]
\end{Prop}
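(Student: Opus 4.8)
The plan is to reduce the statement to a commutator identity among the $W$-type operators $P^{(k)}_m$ of Section~2 and then to a single binomial identity. First I would pin down the operator $\varphi(\tilde h_n)_{(0)}$ explicitly. By \eqref{BJ}, $\varphi(\tilde h_n)=\sum_{k=0}^{n}\frac{(-1)^{n-k}}{(n-k+1)!}T^{n-k}p^{(k)}$; since $Y(Ta,z)=\qp_z Y(a,z)$, the coefficient of $z^{-1}$ in $Y(Ta,z)$ vanishes, i.e.\ $(Ta)_{(0)}=0$ for every $a\in V$, so only the $k=n$ term contributes and $\varphi(\tilde h_n)_{(0)}=p^{(n)}_{(0)}$. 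Conjugating by $\varphi$ and reading off the coefficient of $z^{-1}$ in $P^{(n)}(z)=\sum_{m}P^{(n)}_m z^{-n-m-1}$ gives $\varphi^{-1}\comp\varphi(\tilde h_n)_{(0)}\comp\varphi=P^{(n)}_{-n}$. It therefore suffices to prove $[P^{(n)}_{-n},P^{(m)}_{-m}]=0$ as operators on $\mathcal A$.

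For this I would use Lemma~\ref{BR} to write $P^{(n)}_{-n}=-\sum_{a}\binom{a}{n}B^*_{a-n}B_a$ and compute the commutator directly from \eqref{BN}--\eqref{BH}. The key point is that, on moving $B_a$ past $B^*_{b-m}$ and $B_b$ past $B^*_{a-n}$ via \eqref{BH}, the two quartic terms $B^*_{a-n}B^*_{b-m-1}B_{a-1}B_b$ and $B^*_{b-m}B^*_{a-n-1}B_{b-1}B_a$ that appear turn out to be equal---a consequence of the quadratic relations \eqref{BN} applied to the pair of $B$'s and to the pair of $B^*$'s---so they cancel, leaving
\[
[B^*_{a-n}B_a,\,B^*_{b-m}B_b]=\qd_{a,b-m}\,B^*_{a-n}B_b-\qd_{b,a-n}\,B^*_{b-m}B_a .
\]
Substituting into the double sum, the two Kronecker deltas collapse it to a single sum over $a$, and after a reindexing one arrives at
\[
[P^{(n)}_{-n},P^{(m)}_{-m}]=\sum_{a}\left(\binom{a-m}{n}\binom{a}{m}-\binom{a}{n}\binom{a-n}{m}\right)B^*_{a-n-m}B_a ,
\]
which is $0$ because $\binom{a}{m}\binom{a-m}{n}=\binom{a}{n}\binom{a-n}{m}=\binom{m+n}{m}\binom{a}{m+n}$ as a polynomial identity in $a$, hence for all integers $a$.

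The algebra above is short; I expect the only genuine obstacle to be the bookkeeping of the formally infinite sums $\sum_a\binom{a}{n}B^*_{a-n}B_a$. One must check that, applied to any fixed differential polynomial, only finitely many terms are nonzero so that the rearrangements are legitimate; this is part of the formalism of \cite{liu2022action}, and in any case after the application of \eqref{BH} the commutator is already exhibited as a single, coefficient-wise vanishing sum, which makes the conclusion safe. (Alternatively one can avoid the infinite sums altogether by verifying $[P^{(n)}_{-n},P^{(m)}_{-m}]=0$ on the basis of Schur polynomials using the action formula \eqref{BM}, at the cost of a longer combinatorial computation.)
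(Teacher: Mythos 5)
Your proposal is correct and follows essentially the same route as the paper: both reduce the claim via \eqref{BL} to the identity $[P^{(n)}_{-n},P^{(m)}_{-m}]=0$ and then verify it from Lemma~\ref{BR} together with the quadratic relations \eqref{BN}--\eqref{BH}, ending with the same binomial identity $\binom{a}{m}\binom{a-m}{n}=\binom{a}{n}\binom{a-n}{m}$. Your write-up is in fact slightly more explicit than the paper's (the cancellation of the quartic terms and the finiteness of the sums are spelled out rather than left implicit).
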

\begin{proof}
By using the relation \eqref{BL}, it suffices to show that     
\[
    \left[P^{(n)}_{-n},P^{(m)}_{-m}\right] = 0,\quad m,n\geq 0.
    \]
It follows from Lemma\,\ref{BR} that 
\begin{align*}
    \left[P^{(n)}_{-n},P^{(m)}_{-m}\right] =\,&\sum_{a,b\in\mathbb Z}\binom{a}{m}\binom{b}{n}\left[B^*_{a-n}B_a,B^*_{b-m}B_b\right]\\
    =\,&\sum_{a,b\in\mathbb Z}\binom{a}{m}\binom{b}{n}B^*_{a-n}B_aB^*_{b-m}B_b-B^*_{b-m}B_bB^*_{a-n}B_a\\
    =\,&\sum_{a\in\mathbb Z} \binom{a}{n}\binom{a+m}{m}B^*_{a-n}B_{a+m}-\sum_{a\in\mathbb Z} \binom{a}{n}\binom{a-n}{m}B^*_{a-n-m}B_{a}\\
    =\,&0,
\end{align*}
here for the third equality, we use the relations \eqref{BN} --\eqref{BH}. The proposition is proved. 
\end{proof}
By applying the well-known identity
\begin{equation}
    \label{BS}
    \left[A_{(n)},B_{(m)}\right] = \sum_{i\geq 0}\binom{n}{i}(A_{(i)}B)_{(m+n-i)},\quad A,B\in V, \quad n,m\in\mathbb Z, 
\end{equation}
we conclude that 
\begin{equation}
    \label{BT}
    \kk{\varphi(\tilde h_n)_{(0)}\varphi(\tilde h_m)}_{(0)} = 0,
\end{equation}
hence it follows from Proposition\,\ref{BO} that there exist constants $B_{n,m}$  and $C_{n,m}$ such that 
\[
    \varphi(\tilde h_n)_{(0)}\varphi(\tilde h_m)=B_{n,m}b+C_{n,m}\vac+TV.
\]
We notice that each $\varphi(\tilde h_n)$ is of conformal weight $n+1$, hence $\varphi(\tilde h_n)_{(0)}\varphi(\tilde h_m)$ is of conformal weight $n+m+1$, which implies that $C_{n,m} = 0$ and for $(m,n)\neq (0,0)$ the constants $B_{n,m}$ also vanish, namely
\[
    \varphi(\tilde h_n)_{(0)}\varphi(\tilde h_m)\in TV,\quad (m,n)\neq (0,0).
\]
But for $m = n = 0$, we know $\varphi(\tilde h_0) = b$ and it follows that $\varphi(\tilde h_0)_{(0)} = 0$, from which we conclude that the above identity also holds true in this case. This proves Theorem\,\ref{AX} for $\tilde h_n$.

\subsubsection{\textbf{Proof of Theorem\,\eqref{BE}}}
Finally, we are to prove  $\tilde h_n = h_n$. Since we already have $\tilde h_0 = h_0 = v$, it suffices to prove the validity of the recursion relation 
\[
    \varphi(\tilde h_{n+1}) = \frac{1}{n+1}\sum_{k=0}^{n}\frac{1}{\binom{n+2}{k+1}}\varphi(\tilde h_k)_{(-1)}\varphi(\tilde h_{n-k}),\quad n\geq 0.
\]
Let us verify the above relation by induction on $n$. The case $n=0$ can be proved directly and we assume that we have proved this for $n = 0,1,\dots,N-1$ for some $N\geq 1$. Now let us consider the case $n = N$. Firstly, it follows from the recursion 
\[
    \mathcal D(\tilde h_N)  = (N+1)(N+2)\tilde h_{N+1}
\]
that 
\[
    \varphi(\tilde h_{N+1}) = \frac{2}{(N+1)(N+2)}\varphi(\tilde h_2)_{(1)}\varphi(\tilde h_{N}).
\]
Therefore, by using the induction hypothesis, we arrive at 
\begin{align*}
    \varphi(\tilde h_{N+1}) &= \frac{2}{(N+1)(N+2)}\varphi(\tilde h_2)_{(1)}\varphi(\tilde h_{N})\\
    &=\frac{2}{N(N+1)(N+2)}\sum_{k=0}^{N-1}\frac{1}{\binom{N+1}{k+1}}\varphi(\tilde h_2)_{(1)}\varphi(\tilde h_k)_{(-1)}\varphi(\tilde h_{N-1-k}).
\end{align*}
By combining \eqref{BS} and \eqref{BT}, we arrive at 
\[
    \fk{\varphi(\tilde h_2)_{(1)}}{\varphi(\tilde h_k)_{(-1)}} = \kk{\varphi(\tilde h_2)_{(1)}\varphi(\tilde h_k)}_{(-1)}.
\]
Finally, we can compute straightforwardly as 
\begin{align*}
    \varphi(\tilde h_{N+1}) =\,& \frac{2}{N(N+1)(N+2)}\sum_{k=0}^{N-1}\frac{1}{\binom{N+1}{k+1}}\varphi(\tilde h_2)_{(1)}\varphi(\tilde h_k)_{(-1)}\varphi(\tilde h_{N-1-k})\\
    =\,&\frac{2}{N(N+1)(N+2)}\sum_{k=0}^{N-1}\frac{1}{\binom{N+1}{k+1}}\varphi(\tilde h_k)_{(-1)}\varphi(\tilde h_2)_{(1)}\varphi(\tilde h_{N-1-k})\\
    &+\frac{2}{N(N+1)(N+2)}\sum_{k=0}^{N-1}\frac{1}{\binom{N+1}{k+1}}\kk{\varphi(\tilde h_2)_{(1)}\varphi(\tilde h_k)}_{(-1)}\varphi(\tilde h_{N-1-k})\\
    =&\,\frac{1}{N(N+1)(N+2)}\sum_{k=0}^{N-1}\frac{(N-k)(N-k+1)}{\binom{N+1}{k+1}}\varphi(\tilde h_k)_{(-1)}\varphi(\tilde h_{N-k})\\
    &+\frac{1}{N(N+1)(N+2)}\sum_{k=0}^{N-1}\frac{(k+1)(k+2)}{\binom{N+1}{k+1}}\varphi(\tilde h_{k+1})_{(-1)}\varphi(\tilde h_{N-1-k})\\
    =&\frac{1}{N+1}\sum_{k=0}^{N}\frac{1}{\binom{N+2}{k+1}}\varphi(\tilde h_k)_{(-1)}\varphi(\tilde h_{N-k}),
\end{align*}
and we conclude that $\tilde h_n = h_n$.

\section{Conclusion}\label{AJ}

In this paper, we study the relationship between vertex algebras and quantum integrable hierarchies. This idea is quite natural, indeed, Barakat, De Sole, Kac and their collaborators developed a theory to describe Hamiltonian integrable hierarchies via Poisson vertex algebras (see, e.g., \cite{kac2017introduction}), and these algebras, in some cases, can be viewed as a classical limit of vertex algebras. In this sense, the vertex algebras should provide suitable tools to study quantum integrable hierarchies, and we investigate this possibility in this paper via the example of quantum dispersionless KdV hierarchy. An explicit deformation quantization of Hamiltonian structures of hydrodynamic type is constructed, and the corresponding quantum Hamiltonians are constructed for this hierarchy. All these constructions are based on the structure theory of the Heisenberg vertex algebra.

Let us compare the results of the present paper to those obtained in the papers \cite{buryak2016double,dubrovin2016symplectic}. First, we briefly recall their construction of quantum Hamiltonian structures and quantum Hamiltonians for quantum dispersionless KdV hierarchy. The classical Hamiltonian structure \eqref{AC} can be equivalently represented by 
\begin{equation}
    \label{AB}
\{v(x),v(y)\} = \qd'(x-y).
\end{equation}
This Poisson bracket admits a canonical quantization by expanding the field $v(x)$ into Fourier modes 
\[
v(x) = \sum_{k\in\mathbb Z}v_k e^{ikx},
\]
and rewriting the Poisson bracket \eqref{AB} into the form 
\[
\{v_m,v_n\} = i\,m\,\qd_{m+n,0}.
\]
Then, the Fourier modes $v_n$ are quantized to creators $\hat q_n$ for $n\geq 1$ and to annihilators $\hat p_{-n}$ for $n\leq -1$, and the mode $v_0$ is quantized to 0 for simplicity. These quantum operators satisfy the commutating relation
\begin{equation}
    \label{AAA}
[\hat p_n,\hat q_m] = \hbar\,n\,\qd_{m,n},\quad [\hat p_n,\hat p_m]=[\hat q_n,\hat q_m]=0,\quad n,m\geq 1,
\end{equation}
where $\hbar$ is a formal parameter. An explicit formula of the above quantum Hamiltonian structure in terms of differential polynomials is given in \cite{buryak2016double}, and it reads 
\begin{align*}
    [F,G] = \int \sum_{m\geq 1}\frac{\hbar^m}{m!}&\sum_{\substack{r_1,\dots,r_m\geq 0\\ s_1,\dots,s_m\geq 0}}\frac{\qp^m f}{v^{(s_1)}\dots v^{(s_m)}}(-1)^{\sum r_i}\\
    &\cdot \sum_{j=1}^{2m-1+\sum r_i+\sum s_i}C_j^{r_1+s_1+1,\dots,r_m+s_m+1}\qp_x^j\kk{\frac{\qp^m g}{\qp v^{(r_1)}\dots\qp v^{(r_m)}}},
\end{align*}
where $F,G\in\mathcal F$ with $f,g\in\mathcal A$ a choice of their densities, and $C_j^{a_1,\dots,a_m}$ are certain constants. In particular, it is proved that
\[
C^{a_1,\dots,a_m}_{m-1+\sum a_i} = \frac{a_1!\dots a_m!}{(m-1+\sum a_i)!}.
\]
We compare this bracket to the one we obtained in Theorem\,\ref{AF}, and conclude that the above bracket can be viewed as a further deformation of the quantum Hamiltonian structure constructed in the present paper. 

The quantum Hamiltonians constructed and studied in \cite{buryak2016double,dubrovin2016symplectic} are also different from those presented in this paper. Denote by $H_n^{BR} = \int h_n^{BR}$ the quantum Hamiltonians constructed by Buryak and Rossi, then it is given by the following generating series \cite{dubrovin2016symplectic,rossi2008gromov}:
\[
 h^{BR}(z) =1+\sum_{n\geq 0}  h_n^{BR}z^{n+1}= \exp\left(zs(\sqrt{\hbar}z\qp_x)v(x)\right),
\]
where the function
\[
s(t)=\frac{e^{t/2}-e^{-t/2}}{t} = \sum_{n\geq 0}\frac{t^{2n}}{4^n(2n+1)!}.
\]
Here we slightly modify the original choices of densities presented in \cite{buryak2016double,dubrovin2016symplectic} for a clearer comparison. Indeed, 
we compare them with the quantum Hamiltonian densities constructed in the present paper, and they read  
\begin{align*}
    & h_0 = h_0^{BR} = v;\\
    & h_1 = h_1^{BR} = \frac12 v^2;\\
&h_2 = \frac16 v^3+\frac{\hbar}{12}v_{xx},\quad h_2^{BR} =  \frac16 v^3+\frac{\hbar}{24}v_{xx};\\
&h_3 = \frac{v^4}{24}+\frac{\hbar}{24}\kk{v_x^2+2vv_{xx}},\quad h_3^{BR} = \frac{v^4}{24}+\frac{\hbar}{24}vv_{xx}.
\end{align*}
The choices of quantum densities are different, however, we  note that as local functionals, $H_n = H_n^{BR}$ in the above examples. Actually, this can be checked for more examples, and we conjecture this to hold true for any $n$. This implies a close relation between two different ways of quantizations.

Another difference is that, in the construction of \cite{buryak2016double,dubrovin2016symplectic}, what matters is the quantum Hamiltonians, instead of quantum Hamiltonian densities. Indeed, the eigenvalue problem considered in \cite{dubrovin2016symplectic} only depends on the deformed Hamiltonians as local functionals, and does not depend on the choice of densities. However, in our construction, the densities $h_n$ are important since we are considering operators 
\[
\varphi^{-1}\comp\varphi(h_n)_{(n)}\comp\varphi,
\]
and we cannot add arbitrary terms in $\qp_x\mathcal A$ to $h_n$. 

Despite these differences, our results also share similarity to the known results. Besides the conjectural relation $H_n = H_n^{BR}$, the common eigenvectors of $H_n^{BR}$ computed in \cite{dubrovin2016symplectic} are also given by Schur polynomials, and further properties of quantum KdV hierarchy studied in \cite{van2024quantum,ittersum2024quantum,ruzza2021spectral} seem to be applicable to the present setting. Therefore, it makes sense to conjecture that our formulation is equivalent to the formulation presented in \cite{buryak2016double,dubrovin2016symplectic} in some sense. For example, can we actually recover the results in \cite{dubrovin2016symplectic} using results of the present paper? Is the bracket constructed in \cite{buryak2016double} related to the bracket presented in this paper by a certain transformation? These problems are still under investigation.


\begin{thebibliography}{10}

    \bibitem{borcherds1986vertex}
    {Borcherds, R.:}
    Vertex algebras, Kac–Moody algebras and the monster.
    {Proc. Natl. Acad. Sci. USA} 83, 3068--3071 (1986). \href{https://doi.org/10.1073/pnas.83.10.3068}{https://doi.org/10.1073/pnas.83.10.3068}

    \bibitem{buryak2015double}
    {Buryak, A.:}
    Double ramification cycles and integrable hierarchies.
    {Commun. Math. Phys.} 336, 1085--1107 (2015). \href{https://doi.org/10.1007/s00220-014-2235-2}{https://doi.org/10.1007/s00220-014-2235-2}
    
    \bibitem{buryak2016double}
    {Buryak, A., Rossi, P.:}
    Double ramification cycles and quantum integrable systems.
    {Lett. Math. Phys.} 106, 289--317 (2016). \href{https://doi.org/10.1007/s11005-015-0814-6}{https://doi.org/10.1007/s11005-015-0814-6}
    
    \bibitem{d1998structure}
    {  D'Andrea, A., Kac, V.:}
      Structure theory of finite conformal algebras.
      { Selecta Math.} 4, 377 (1998). \href{https://doi.org/10.1007/s000290050036}{https://doi.org/10.1007/s000290050036}
    
    \bibitem{de2006finite}
    {  De~Sole, A., Kac, V.~G.:}
      Finite vs affine {W-algebras}.
      { Jpn. J. Math.} 1, 137--261 (2006). \href{https://doi.org/10.1007/s11537-006-0505-2}{https://doi.org/10.1007/s11537-006-0505-2}
    
    \bibitem{dubrovin2016symplectic}
    {  Dubrovin, B.:}
      Symplectic field theory of a disk, quantum integrable systems, and {Schur} polynomials.
      { Ann. Henri Poincar{\'e}} 17, 1595--1613 (2016).\href{https://doi.org/10.1007/s00023-015-0449-2}{https://doi.org/10.1007/s00023-015-0449-2}
    
    \bibitem{dubrovin2001normal}
    {  Dubrovin, B., Zhang, Y.:}
      Normal forms of hierarchies of integrable {PDE}s, {F}robenius manifolds and {Gromov-W}itten invariants. (2001). \href{https://arxiv.org/abs/math/0108160}{https://arxiv.org/abs/math/0108160}
    
    \bibitem{eliashberg2010introduction}
    {  Eliashberg, Y., Givental, A.,  Hofer, H.:}
      Introduction to symplectic field theory.
      In: Visions in Mathematics, Modern Birkh\"auser Classics, Birkh\"auser Basel. (2000). \href{https://doi.org/10.1007/978-3-0346-0425-3\_4}{https://doi.org/10.1007/978-3-0346-0425-3\_4}
    
    \bibitem{frenkel2004vertex}
    {  Frenkel, E.,Ben-Zvi, D.:}
      { Vertex algebras and algebraic curves}.
     American Mathematical Society. (2004)
    
    \bibitem{getzler2002darboux}
    {  Getzler, E.:}
      A {Darboux theorem for Hamiltonian} operators in the formal calculus of variations.
      { Duke Math. J.} 111, 535--560 (2002).  \href{https://doi.org/10.1215/S0012-7094-02-11136-3}{https://doi.org/10.1215/S0012-7094-02-11136-3}
   
      
      \bibitem{van2024quantum}
      { Ittersum, J.-W., Ruzza, G.:}
        Quantum {KdV} hierarchy and quasimodular forms.
        { Commun. Number Theory Phys.} 18, 405--439 (2024). \href{https://dx.doi.org/10.4310/CNTP.2024.v18.n2.a4}{https://dx.doi.org/10.4310/CNTP.2024.v18.n2.a4}


    \bibitem{ittersum2024quantum}
    {  Ittersum, J.-W., Ruzza, G.:}
      Quantum {KdV} hierarchy and shifted symmetric functions. Int. Math. Res. Not. IMRN 9, rnaf102 (2025). \href{https://doi.org/10.1093/imrn/rnaf102}{https://doi.org/10.1093/imrn/rnaf102}
    
    \bibitem{jing1991vertex}
    {  Jing, N.:}
      Vertex operators and Hall-Littlewood symmetric functions.
      { Adv. Math.} 87, 226--248 (1991). \href{https://doi.org/10.1016/0001-8708(91)90072-F}{https://doi.org/10.1016/0001-8708(91)90072-F}
    
    \bibitem{kac2017introduction}
    {  Kac, V.:}
      Introduction to vertex algebras, Poisson vertex algebras, and integrable Hamiltonian PDE.
      In: Perspectives in Lie Theory. Springer INdAM Series, vol 19. Springer, Cham. (2017). \href{https://doi.org/10.1007/978-3-319-58971-8\_1}{https://doi.org/10.1007/978-3-319-58971-8\_1}
    
    \bibitem{kac1998vertex}
    {  Kac, V.~G.:}
      { Vertex algebras for beginners}.
    American Mathematical Society. (1998)
    
    \bibitem{kontsevich1992intersection}
    {  Kontsevich, M.:}
      Intersection theory on the moduli space of curves and the matrix {Airy} function.
      { Commun. Math. Phys.} 147, 1--23 (1992). \href{https://doi.org/10.1007/BF02099526}{https://doi.org/10.1007/BF02099526}
    
    \bibitem{liu2018lecture}
    {  Liu, S.-Q.}
      Lecture notes on bihamiltonian structures and their central invariants.
      In: B-Model Gromov-Witten Theory. Trends in Mathematics. Birkh\"auser, Cham. (2018). \href{https://doi.org/10.1007/978-3-319-94220-9\_7}{https://doi.org/10.1007/978-3-319-94220-9\_7}
    
    \bibitem{liu2011jacobi}
    {  Liu, S.-Q., Zhang, Y.:}
      Jacobi structures of evolutionary partial differential equations.
      { Adv. Math.} 227, 73--130 (2011). \href{https://doi.org/10.1016/j.aim.2011.01.015}{https://doi.org/10.1016/j.aim.2011.01.015}
    
    \bibitem{liu2022action}
    {  Liu, X., Yang, C.:}
      Action of{ $ W $-type operators on Schur functions and Schur Q-functions}. {J. London Math. Soc.  111, e70080(2025).} \href{https://doi.org/10.1112/jlms.70080}{https://doi.org/10.1112/jlms.70080}
    
    \bibitem{macdonald1998symmetric}
    {  Macdonald, I.~G.: }
      { Symmetric functions and Hall polynomials}.
      Oxford University press. (1998)
    
    \bibitem{rossi2008gromov}
      Rossi, P.: Gromov--Witten invariants of target curves via symplectic field theory. J. Geom. Phys. 58, 931--941 (2008). \href{https://doi.org/10.1016/j.geomphys.2008.02.012}{https://doi.org/10.1016/j.geomphys.2008.02.012}
    
    \bibitem{ruzza2021spectral}
    {  Ruzza, G., Yang, D.:}
      On the spectral problem of the quantum{ KdV} hierarchy.
      { J. Phys. A} 54 (2021). \href{https://doi.org/10.1088/1751-8121/ac190a}{https://doi.org/10.1088/1751-8121/ac190a}
    
    \bibitem{sury2004identities}
    { Sury, B., Wang, T., Zhao, F.-Z.:}
      Identities involving reciprocals of binomial coefficients.
      { J. Integer Seq.} 7 (2004). \href{http://eudml.org/doc/51545}{http://eudml.org/doc/51545}
    
    
    \bibitem{witten1990two}
    {  Witten, E.: }
      Two-dimensional gravity and intersection theory on moduli space.
      { Surv. Differ. Geom.} 1, 243--310 (1991). \href{https://doi.org/10.4310/SDG.1990.v1.n1.a5}{https://doi.org/10.4310/SDG.1990.v1.n1.a5}
    
    \end{thebibliography}
\end{document}